\documentclass[12pt,onecolumn,draftclsnofoot]{IEEEtran}
\usepackage{amsmath,amssymb,amsthm,bm}
\usepackage[dvipsnames]{xcolor}
\usepackage[hidelinks]{hyperref}
\usepackage{aliascnt}
\usepackage{booktabs}
\usepackage{tabularx}
\usepackage{array}
\newtheorem{theorem}{Theorem}
\newaliascnt{lemma}{theorem}
\newtheorem{lemma}[lemma]{Lemma}
\aliascntresetthe{lemma}
\newaliascnt{proposition}{theorem}
\newtheorem{proposition}[proposition]{Proposition}
\aliascntresetthe{proposition}
\newaliascnt{corollary}{theorem}
\newtheorem{corollary}[corollary]{Corollary}
\aliascntresetthe{corollary}
\newaliascnt{assumption}{theorem}
\newtheorem{assumption}[assumption]{Assumption}
\aliascntresetthe{assumption}
\usepackage[nameinlink]{cleveref}
\crefname{theorem}{Theorem}{Theorems}
\Crefname{theorem}{Theorem}{Theorems}
\crefname{lemma}{Lemma}{Lemmas}
\Crefname{lemma}{Lemma}{Lemmas}
\crefname{proposition}{Proposition}{Propositions}
\Crefname{proposition}{Proposition}{Propositions}
\crefname{corollary}{Corollary}{Corollaries}
\Crefname{corollary}{Corollary}{Corollaries}
\crefname{assumption}{Assumption}{Assumptions}
\Crefname{assumption}{Assumption}{Assumptions}
\crefname{section}{Section}{Sections}
\Crefname{section}{Section}{Sections}
\crefname{table}{Table}{Tables}
\Crefname{table}{Table}{Tables}
\crefname{figure}{Figure}{Figures}
\Crefname{figure}{Figure}{Figures}
\crefformat{equation}{#2(#1)#3}
\Crefformat{equation}{#2(#1)#3}
\crefrangeformat{equation}{#3(#1)#4--#5(#2)#6}
\Crefrangeformat{equation}{#3(#1)#4--#5(#2)#6}
\crefmultiformat{equation}%
  {#2(#1)#3}{, #2(#1)#3}{, #2(#1)#3}{, #2(#1)#3}
\Crefmultiformat{equation}%
  {#2(#1)#3}{, #2(#1)#3}{, #2(#1)#3}{, #2(#1)#3}
\newcommand{\tang}{\mathcal{T}_\star}
\newcommand{\tangp}{\tang^\perp}
\newcommand{\proj}[1]{P_{#1}}
\newcommand{\HT}{\bm{H}_{\tang}}
\newcommand{\HTp}{\bm{H}_{\tangp}}
\newcommand{\Gamnm}{\log^2(2n)\log^2(2m)}
\newcommand{\Lameta}{\Gamnm + \log\frac{2}{\eta}}
\newcommand{\ueta}{\log\frac{4m}{\eta}}
\newcommand{\mustar}{4M^2\ueta}

\title{Stable Recovery and Benign Overparameterized Landscapes for Phase Retrieval from Coded Diffraction Patterns}
\author{Jian-Feng Cai, Zhibo Jin, Tong Wu, and Ruizhe Xia%
\thanks{Jian-Feng Cai, Zhibo Jin, and Tong Wu are with The Hong Kong
University of Science and Technology, Hong Kong SAR, China (e-mail:
jfcai@ust.hk; zjinay@connect.ust.hk; twubi@connect.ust.hk).}%
\thanks{Ruizhe Xia is with the Department of Applied Mathematics, The Hong
Kong Polytechnic University, Hong Kong (e-mail: ruizhxia@polyu.edu.hk).}}
\begin{document}
\maketitle
\begin{abstract}
Coded diffraction patterns (CDPs) provide a structured and physically relevant
model for phase retrieval, but the dependence among Fourier measurements
generated by a common mask makes sharp stability analysis challenging. For a
fixed unit-norm signal \(\bm x_\star \in \mathbb{C}^n\), let
\(\bm X_\star=\bm x_\star\bm x_\star^*\), and let
\(\mathcal A\) be the lifted CDP measurement operator.
We prove that, with \(L=O(\log n)\) random masks, the following uniform
lower isometry holds with high probability:
\[
\|\bm X-\bm X_\star\|_F
\lesssim
\log^2(2n)
\frac{\|\mathcal A(\bm X-\bm X_\star)\|_2}{\sqrt{nL}},
\qquad
\bm X\succeq\bm 0,
\]
from which we derive two consequences. First, for
\(\bm y=\mathcal A(\bm X_\star)+\bm e\), PhaseLift-type convex programs
achieve the Gaussian-type stable recovery bound
$
\|\widehat{\bm X}-\bm X_\star\|_F
\lesssim
\frac{\log^2(2n)}{\sqrt{nL}}\|\bm e\|_2.
$
Second, in the noiseless case, the nonconvex factorized loss has a benign landscape
when the factor width satisfies \(r = O(\log^5(2n))\): every second-order
critical point \(\bm V\in\mathbb C^{n\times r}\) satisfies
\(\bm V\bm V^*=\bm X_\star\).
The key ingredient is a uniform operator-norm bound over row subsets
of the dependent CDP measurement matrix, which permits the removal of a
controlled set of adaptively selected rows while preserving tangent injectivity.
\end{abstract}

\section{Introduction}
\label{sec:intro}

Phase retrieval concerns the recovery of a signal from magnitude-only
measurements. Given an unknown signal \(\bm x_\star\in\mathbb C^n\), one
observes
\begin{equation} \label{eq: phase retrieval model}
    y_i
    =
    \left|\left\langle \bm a_i,\bm x_\star\right\rangle\right|^2+e_i,
    \qquad i=1,\ldots,m,
\end{equation}
where \(\bm a_i\in\mathbb C^n\) are known measurement vectors and \(e_i\)
denotes additive measurement noise.
The loss of phase makes the inverse problem nonlinear and, without
sufficient measurement diversity, ill-posed~\cite{beinert2015ambiguities,bendory2017fourier}.
Phase retrieval arises in
diffraction imaging, crystallography, microscopy, astronomy, and other
imaging applications~\cite{dainty1987astronomy,harrison1993phase,millane1990phase,
shechtman2015overview,jaganathan2016overview}.

Independent Gaussian measurements are standard benchmarks for phase
retrieval, with guarantees for convex lifting methods such as
PhaseLift~\cite{candes2013phaselift,candes2014quadratic}
and for nonconvex algorithms~\cite{candes2015wirtinger,chen2017solving,
chen2019gradient,duchi2019solving,godeme2023provable,tan2023online,
wang2018truncated,zhang2017nonconvex}.
Beyond analyzing algorithmic dynamics, an elegant alternative is to establish a \emph{benign landscape} for the loss: every local minimizer is global.
Such geometry offers an algorithm-independent explanation for why local-search methods can reach global solutions without carefully designed
initialization~\cite{sun2018geometric,chen2019gradient,tan2023online}.
For the nonconvex quartic intensity loss over \(\mathbb C^{n\times 1}\),
a benign landscape was first established in~\cite{sun2018geometric}
and later shown to hold with \(m=O(n\log n)\)
in~\cite{cai2023nearly}.\footnote{Although the theorem is stated for the
real-valued setting, Cai et al.~\cite[Remark~3]{cai2023nearly} explain how the
result extends to the complex-valued setting. We therefore do not distinguish
between the real and complex settings in our landscape discussion.}
At \(m=O(n)\), this landscape seems unfavorable~\cite{liu2024local}, but~\cite{mcrae2026benign} shows that optimizing over \(\mathbb C^{n\times r}\) with \(r=O(\log n)\) can guarantee a benign landscape.
Such \emph{overparameterization} bridges nonconvex and convex recovery: it trades greater computational cost for more descent directions, achieving a more convex-like landscape; see \cref{sec:overparameterized-result}.
We note that guarantees for phase retrieval with sub-Gaussian and heavy-tailed measurements have also been developed~\cite{krahmer2020complex,huangli2026heavy,mcrae2026benign}.

However, physical acquisition systems produce structured and correlated measurement vectors in many applications, for which row-independent models are not representative.
Phase retrieval from \emph{coded diffraction pattern} (CDP) measurements is an
important model that strikes a balance between physical
practicality and theoretical tractability.
It uses the Fourier transform to model data acquisition from an object under the Fraunhofer diffraction approximation~\cite{loewen2018diffraction,gerchberg1972practical,fienup1982phase,fannjiang2020numerics}, while masks modulate the signal to increase measurement diversity and improve recovery guarantees~\cite{fannjiang2012phase,candes2015cdp}.
Formally, the CDP model takes the form
\begin{equation} \label{eq: CDP-model}
y_{\ell k}
=
\left|
\left\langle \bm f_k,\bm D_\ell\bm x_\star\right\rangle
\right|^2
+e_{\ell k},
\qquad
1\leq \ell\leq L,\quad 0\leq k<n,
\end{equation}
where \(\bm D_\ell=\operatorname{diag}(\bm d_\ell)\) is the \(\ell\)-th mask,
\(\{\bm f_k\}_{k=0}^{n-1}\) are the columns of the unnormalized
\(n\times n\) discrete Fourier transform matrix, and \(e_{\ell k}\) denotes measurement noise.
Each mask produces one diffraction pattern from \(n\) Fourier-intensity
measurements, which are generally correlated within each pattern;
\(L\) masks produce \(m=nL\) measurements in total.
Unless otherwise specified, we consider CDPs under the standard random-mask
model described in \cref{sec:preliminaries}; throughout, \(\bm x_\star\) is a
fixed unit vector independent of the masks.

Guarantees for exact recovery from noiseless CDPs have been progressively sharpened to
the order-optimal complexity $L=O(\log n)$ through convex lifting~\cite{candes2015cdp,gross2017improved,huangli2026optimal}.
Introducing the lifted variable \(\bm X=\bm x\bm x^*\) and the lifted ground truth \(\bm X_\star=\bm x_\star\bm x_\star^*\), the phase retrieval problem in \cref{eq: phase retrieval model} can be formulated as a convex program over the positive semidefinite (PSD) cone:
\begin{equation} \label{eq: PhaseLift problem}
    \text{find }\bm X\succeq\bm 0 \quad\text{subject to}\quad \mathcal A(\bm X)=\bm y,
\end{equation}
where $\mathcal A(\bm X) = \bigl(\langle\bm a_i\bm a_i^*,\bm X\rangle\bigr)_{i=1}^m$ is the lifted rank-one measurement operator.
In the presence of noise, PhaseLift-type variants may incorporate trace
information, replace the equality constraint with
$
\|\mathcal A(\bm X)-\bm y\|_2\leq \|\bm e\|_2,
$
or use a data-fitting loss.
However, no existing
stable recovery guarantee for convex CDP programs attains the noise
scaling achieved with i.i.d.\ Gaussian measurements; see
\cref{tab:stability-comparison}.
Moreover, to our knowledge, no global benign-landscape guarantee has been
established for the CDP model, either at factor width one or under overparameterization, despite encouraging empirical evidence from related models~\cite{sun2018geometric,li2021incremental}.
Closing these gaps has been a long-standing challenge in the phase retrieval literature.

\begin{table*}[t]
\centering
\caption{Comparison of stable recovery guarantees for PhaseLift-type convex programs.
Here, \(\widehat{\bm X}\) denotes a solution of the corresponding program, 
and
\(\bm y=\mathcal A(\bm X_\star)+\bm e\) is the noisy observation.}
\label{tab:stability-comparison}
\small
\setlength{\tabcolsep}{4pt}
\renewcommand{\arraystretch}{1.3}
\begin{tabularx}{\textwidth}{
    >{\raggedright\arraybackslash}p{2.3cm}
    >{\raggedright\arraybackslash}p{3.0cm}
    >{\raggedright\arraybackslash}p{2.4cm}
    >{\raggedright\arraybackslash}X}
\toprule
Reference & Setting & Complexity & Error bound \\
\midrule
\addlinespace
\cite{candes2014quadratic}
& i.i.d.\ Gaussian
& $m=O(n)$
& $\|\widehat{\bm X}-\bm X_\star\|_F
   \lesssim\|\bm e\|_2/\sqrt m$ \\
\addlinespace
\cite{krahmer2020complex}
& i.i.d.\ complex \(K\)-sub-Gaussian
&
\(m=O(C(K,\beta)n)\)
&
If
\(\lvert\mathbb E d_1^2\rvert^2\leq1-\beta\) and
[(i) \(\|\bm x_\star\|_\infty\leq\|\bm x_\star\|_2/81\) or (ii) $\mathbb{E}|d_1|^4\geq1+\beta$], then
$
\|\widehat{\bm X}-\bm X_\star\|_F
\lesssim
c(K,\beta)
\frac{\|\bm e\|_2}{\sqrt m}.
$ \\
\midrule
\cite{lili2021fourier}
& CDP
& $L=O(\log^4 n)$
& $\|\widehat{\bm X}-\bm X_\star\|_F
   \lesssim \log^{3/2} n\,\|\bm e\|_2$ \\
\addlinespace
\cite{huangwen2026phaselift}
& CDP
& $L=O(\log^2 n)$
& Upper: $\|\widehat{\bm X}-\bm X_\star\|_F
  \lesssim\sqrt{\|\bm e\|_2\log n/\sqrt m}$ \\[2pt]
&
& $L=O(\log n)$
& Lower: $\exists \bm{e}$ s.t. $\|\widehat{\bm X}-\bm X_\star\|_F
  \gtrsim\|\bm e\|_2/\sqrt{m\log n}$ \\
\midrule
\textbf{This work}
& CDP
& $L=O(\log n)$
& $\|\widehat{\bm X}-\bm X_\star\|_F
   \lesssim \log^2(2n)\,\|\bm e\|_2/\sqrt m$ \\
\bottomrule
\end{tabularx}
\end{table*}

A key missing ingredient is stability of the lifted measurement operator,
whose precise technical expression is the following uniform lower isometry:
\begin{equation} \label{eq: Gaussian-stability}
\|\bm{X}-\bm{X}_\star\|_F \lesssim \frac{\|\mathcal{A}(\bm{X} - \bm{X}_\star)\|_2}{\sqrt{m}}, \quad \forall \bm{X} \succeq 0.
\end{equation}
For Gaussian measurements, \cref{eq: Gaussian-stability} holds with \(m=O(n)\)~\cite{candes2014quadratic}, showing that \(\mathcal A\) is
stably injective on the PSD secant set
$
\{\bm X-\bm X_\star:\bm X\succeq\bm 0\}.
$
A CDP analogue has two consequences. 
First,
it yields the conjectured linear-noise and \(m^{-1/2}\) stable-recovery
scaling~\cite{soltanolkotabi2014algorithms}, up to logarithmic factors.
Second, together with a suitable
spectral upper bound, it enables the deterministic framework
of~\cite{mcrae2026benign} to establish a benign landscape for the
overparameterized nonconvex loss.
The remaining deductions are deterministic
(see \cref{sec:deterministic-landscape}), and the main technical challenge is to
establish \cref{eq: Gaussian-stability} for the CDP model with high probability at low mask complexity.

For Gaussian measurements, \cref{eq: Gaussian-stability} follows from $\ell_1$-norm bounds on both tangent and normal components of $\bm{X} - \bm X_\star$~\cite{candes2014quadratic}.
For CDPs, within-pattern dependence precludes probabilistic tools based on row independence, which has led existing analyses to use an $\ell_2$ tangent bound~\cite{candes2015cdp,soltanolkotabi2014algorithms}. 
This mismatch with the $\ell_1$ normal bound incurs a dimension-dependent loss, leading to a suboptimal scaling. 
Our key new ingredient is a uniform operator-norm bound over row subsets of the
CDP measurement matrix (see \cref{prob:prop-fixed-row-subset}), proved using a bound for suprema of chaos
processes from~\cite{krahmer2014suprema}.
It allows us to erase a controlled set of data-dependent rows carrying large
normal-component measurements while preserving tangent injectivity, and yields
a matching \(\ell_2\) normal bound outside the erased rows.
\Cref{sec:proof-overview} provides the
technical overview.

The main contribution of this paper is to establish the stability result
for the CDP model, with consequences for both convex and nonconvex recovery.
The following main results hold with high probability at the optimal mask complexity $L \asymp \log n$:
\begin{itemize}
    \item \emph{Uniform lower isometry:} \begin{equation*}
        \frac{1}{m}
        \left\|\mathcal{A}(\bm{X}-\bm{X}_\star)\right\|_2^2
        \gtrsim
        \frac{1}{\log^4(2n)}
        \|\bm{X}-\bm{X}_\star\|_F^2,
    \end{equation*}
    simultaneously for every $\bm{X}\succeq\bm{0}$; see
    \cref{thm:stability}.

    \item \emph{Stable convex recovery:} \begin{equation*}
        \|\widehat{\bm X}-\bm X_\star\|_F
        \lesssim \log^2(2n)\frac{\|\bm{e}\|_2}{\sqrt m},
    \end{equation*}
    for every PSD point satisfying $\|\mathcal A(\widehat{\bm X})-\bm y\|_2\leq \|\bm{e}\|_2$; see \cref{cor:noisy-recovery}.

   \item \emph{Benign landscape:}
In the noiseless case, the landscape of the factorized least-squares loss is benign under mild
overparameterization \(r=O(\log^5(2n))\); see
\cref{cor:benign-landscape-noisy}.
\end{itemize}

The paper is organized as follows.  \Cref{sec:preliminaries} specifies the CDP
model and fixes notation.  \Cref{sec:mainresults} states the uniform 
lower isometry theorem and its consequences for stable convex recovery and
overparameterized optimization.  \Cref{sec:mainproof} proves the lower
isometry, with the key new input being a uniform row-subset bound.
\Cref{sec:probabilistic} develops the chaos-process argument that establishes
this bound.  The remaining ingredients adapt existing techniques:
\cref{sec:dual_cert_proof} modifies a golfing construction to produce the
required pure-range dual certificate, while \cref{sec:deterministic-landscape}
applies a deterministic landscape framework to obtain a noisy critical-point
error bound and a benign noiseless landscape.
\Cref{sec:conclusion} summarizes the results and discusses open questions.

\section{Preliminaries}
\label{sec:preliminaries}

\subsection{Coded Diffraction Model}
\label{sec:model}
We now specify the random coded diffraction model. Let $\bm x_\star\in \mathbb{C}^n$ be the unknown signal, and let $\bm D_1,\ldots,\bm D_L$ be $L$ independent masks of the form $\bm D_\ell = \operatorname{diag}(d_{\ell,0},\ldots,d_{\ell,n-1})$, where all entries are independent copies of a random variable \(d\).
Let \(\bm F=[\bm f_0,\ldots,\bm f_{n-1}]\in\mathbb C^{n\times n}\)
denote the unnormalized Fourier matrix, so that
\(\bm F^*\bm F=n\bm I\).
The noisy CDP measurements are given by
\begin{equation}
y_{\ell k}
=
\left|
\left\langle \bm f_k,\bm D_\ell\bm x_\star\right\rangle
\right|^2
+e_{\ell k},
\qquad
1\leq \ell\leq L,\quad 0\leq k<n.
\end{equation}
Each mask produces one Fourier-intensity diffraction pattern, giving $m=nL$ measurements in total.

Introducing the lifted signal $\bm X_\star=\bm x_\star\bm x_\star^*$, we define $\bm a_{\ell k}=\bm D_\ell^*\bm f_k$ and the linear measurement operator $\mathcal A:\mathbb H_n\to\mathbb R^{nL}$ by
\begin{equation}
[\mathcal A(\bm H)]_{\ell k} = \operatorname{tr}(\bm a_{\ell k}\bm a_{\ell k}^*\bm H), \qquad 1\leq\ell\leq L,\quad 0\leq k<n. \label{eq:cdp-operator}
\end{equation}
Thus, the observation model can be written compactly as
$
\bm y=\mathcal A(\bm X_\star)+\bm e.
$

Throughout the paper, we impose the following random-mask assumption, which is commonly used in theoretical analyses of coded diffraction measurements~\cite{candes2015cdp,gross2017improved}.
\begin{assumption}[Bounded CDP masks]
\label{ass:mask}
The mask variable \(d\) satisfies
\begin{equation}
  d\stackrel{\mathrm d}{=}-d,\qquad
  \mathbb E d=\mathbb E d^2=0,\qquad
  \mathbb E|d|^2=1,\qquad
  \mathbb E|d|^4=2,\qquad
  |d|\leq M\quad\text{a.s.}
  \label{eq:mask-assumption}
\end{equation}
\end{assumption}

One concrete choice is the bounded erasure mask
\begin{equation}
  \mathbb P\{d=0\}=\frac{1}{2},\qquad
  \mathbb P\{d=\sqrt{2}\}
  =\mathbb P\{d=-\sqrt{2}\}
  =\mathbb P\{d=i\sqrt{2}\}
  =\mathbb P\{d=-i\sqrt{2}\}=\frac{1}{8}.
  \label{eq:mask-example}
\end{equation}
The normalization $\mathbb E|d|^2=1$ is inessential, and the results can be rescaled to accommodate $\mathbb E|d|^2=\nu$ for any $\nu>0$.

\subsection{Notation}
Throughout, we fix a unit vector \(\bm x_\star\), independent of the
masks, and write \(\bm X_\star=\bm x_\star\bm x_\star^*\). The tangent
space at \(\bm X_\star\) to the manifold of rank-one Hermitian matrices is
\[
  \tang=\{\bm x_\star\bm h^*+\bm h\bm x_\star^*:\bm h\in\mathbb C^n\}.
\]
We denote the orthogonal projections onto \(\tang\) and its orthogonal
complement by \(\proj{\tang}\) and \(\proj{\tangp}\), respectively.
For any \(\bm X\succeq0\), let
\begin{align*}
  \bm H&=\bm X-\bm X_\star=\HT+\HTp,\qquad
  \HT=\proj{\tang}\bm H,\\
  \HTp&=\proj{\tangp}\bm X
  =(\bm I-\bm X_\star)\bm X(\bm I-\bm X_\star)\succeq0.
\end{align*}

Unless stated otherwise, \(C,c\), and their decorated variants denote positive
constants that may change from line to line. 
For nonnegative
quantities \(a\) and \(b\), we write \(a\lesssim b\), or 
\(a=O(b)\), if \(a\leq Cb\); similarly, \(a\gtrsim b\), or
\(a=\Omega(b)\), if \(a\geq cb\), and \(a\asymp b\) if both bounds hold.
Throughout, \(\log\) denotes the natural logarithm.
We use \(\widetilde O\), \(\widetilde\Omega\), and \(\widetilde\Theta\) for the
corresponding relations up to multiplicative factors that are polynomial in
the logarithms of $n$ and $m$.
Subscripts on constants and
comparison symbols record or emphasize the indicated parameter dependence
without asserting that it is exhaustive; for example, \(C_M\) and
\(\lesssim_M\) highlight dependence on \(M\).

For a vector \(\bm v\), \(\|\bm v\|_p\), \(p\in\{1,2,\infty\}\), denotes
the usual \(\ell_p\)-norm, while \(\|\bm v\|_0\) denotes its number of nonzero
entries. For a matrix \(\bm H\), \(\|\bm H\|_F\) and
\(\|\bm H\|_{\mathrm{op}}\) denote the Frobenius and operator norms,
respectively; the latter notation also applies to linear maps. We use
\(|\cdot|\) for scalar absolute values and \(|E|\) for the cardinality of a
finite set \(E\).
For an integer $N$, let $[N]:=\{1 \dots N\}$.

\section{Main Results}
\label{sec:mainresults}

\subsection{Uniform Lower Isometry}

\label{sec:measurement-stability}
We first establish stability of the lifted CDP measurement operator
\(\mathcal A\) defined in \cref{eq:cdp-operator}. Its precise technical
expression is a uniform lower isometry: with high probability,
\(L \asymp \log n\) random masks yield an operator satisfying a bound of the
form \cref{eq: Gaussian-stability} over the PSD secant set. In other words, over
the entire PSD cone, the measurement discrepancy
\(\|\mathcal A(\bm X-\bm X_\star)\|_2\) uniformly controls the Frobenius error
\(\|\bm X-\bm X_\star\|_F\).

\begin{theorem}
  [Uniform lower isometry over the PSD secant set]
  \label{thm:stability}
Fix \(\omega\geq1\). There are constants \(0<C_1<C_2\) and \(C,c>0\),
depending only on \(M\) and \(\omega\), such that, if
\begin{equation}
  n\geq C\log^4(2n),
  \qquad C_1\log n\leq L\leq C_2\log n,
  \label{cert:eq:stability-assumptions}
\end{equation}
then, with probability at least
\begin{equation}
  1-Cn^{-\omega},
  \label{cert:eq:stability-probability}
\end{equation}
one has, simultaneously for every $\bm{X} \succeq 0$,
\begin{equation}
 \frac1m\|\mathcal A(\bm{X} -\bm{X}_\star)\|_2^2
 \geq\frac{c}{\log^4(2n)}\|\bm{X}-\bm{X}_\star\|_F^2.
 \label{eq:measurement-stability}
\end{equation}
\end{theorem}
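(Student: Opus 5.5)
We prove the bound with the standard tangent/normal decomposition $\bm H=\HT+\HTp$ with $\HTp\succeq\bm 0$, combined with a dual certificate and a row-erasure argument based on a uniform row-subset bound. Write $\bm z=\mathcal A(\bm H)$ and $\eta=\|\bm z\|_2/\sqrt m$. The goal is to show $\|\HT\|_F+\|\HTp\|_F\lesssim\log^2(2n)\,\eta$ on a single high-probability event. Three ingredients, each holding with probability $1-Cn^{-\omega}$ when $L\asymp\log n$, make up that event:

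\emph{(i) Local tangent injectivity.} $\frac1m\|\mathcal A(\bm Z)\|_2^2\gtrsim\|\bm Z\|_F^2$ for all $\bm Z\in\tang$. This follows the usual matrix Bernstein / bounded-mask argument of Cand\`es--Li--Soltanolkotabi and Gross et al.

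\emph{(ii) Pure-range dual certificate.} There is $\bm Y=\mathcal A^*(\bm\lambda)$ with $\proj{\tangp}\bm Y\preceq-\tfrac12\proj{\tangp}$ and $\|\proj{\tang}\bm Y\|_F\le\tfrac{1}{4}$ (or a small enough constant), built by a modified golfing scheme. It also satisfies $\|\bm\lambda\|_\infty\lesssim\mustar/m$, up to logs, and $\|\bm\lambda\|_2\lesssim\sqrt{\log^{O(1)}n/m}$.

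\emph{(iii) Uniform row-subset operator bound.} Let $\bm A_E$ be the restriction of the tangent-measurement map to a row set $E$. For every $E\subseteq[L]\times\{0,\dots,n-1\}$ with $|E|\le s$,
\[
\|\bm A_E\|_{\mathrm{op}}^2\lesssim s\log^{2}(2n)\log^2(2m)+\tfrac{m}{\log^{4}(2n)}\cdot(\text{small}).
\]
This is proved by writing the supremum over $E$ and over unit tangent directions as a chaos process in the mask entries and applying Krahmer--Mendelson--Rauhut. It is the step I expect to be the main obstacle: bounding the $\gamma_2$ functional and diameter terms for a class indexed by both a sparse row selector and a tangent direction requires covering-number estimates for the set of Fourier-structured vectors, and this is where the $\Gamnm$ factors enter.

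Given (i)--(iii), the deterministic argument runs as follows. First, since $\HTp\succeq\bm 0$, the certificate gives
\[
\tfrac12\operatorname{tr}\HTp\le-\langle\bm Y,\HTp\rangle=-\langle\bm\lambda,\bm z\rangle+\langle\proj{\tang}\bm Y,\HT\rangle\le\|\bm\lambda\|_2\|\bm z\|_2+\tfrac14\|\HT\|_F,
\]
which is an $\ell_1$ (trace) bound on the normal part. Second, each entry of $\mathcal A(\HTp)$ is nonnegative with total sum $\lesssim m\,\operatorname{tr}\HTp$, using $\mathcal A^*\mathbf 1\approx m\bm I$ and $|d|\le M$. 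Let $E$ be the set of rows where $[\mathcal A(\HTp)]_{\ell k}$ exceeds a threshold $\tau$. Markov's inequality gives $|E|\le m\operatorname{tr}\HTp/\tau$, and off $E$ we have $\|\mathcal A(\HTp)_{E^c}\|_2^2\le\tau\, m\operatorname{tr}\HTp$. This is the matching $\ell_2$ normal bound.

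Third, apply (i) to the rows in $E^c$ by writing $\|\mathcal A(\HT)_{E^c}\|_2^2=\|\mathcal A(\HT)\|_2^2-\|\bm A_E\HT\|_2^2$. Using (iii) with $\tau$ chosen so that $|E|\lesssim m/\log^4(2n)$, the erased rows remove only half of the tangent energy, so
\[
\sqrt m\|\HT\|_F\lesssim\|\mathcal A(\HT)_{E^c}\|_2\le\|\bm z\|_2+\|\mathcal A(\HTp)_{E^c}\|_2.
\]
Combining this with the trace bound (noting $\|\HTp\|_F\le\operatorname{tr}\HTp$) and choosing $\tau\asymp\log^{4}(2n)\operatorname{tr}\HTp$ up to constants closes the system of inequalities. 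The loss from the threshold choice produces the factor $\log^2(2n)$, which gives \cref{eq:measurement-stability}. Finally, a union bound over the three events yields the stated probability.
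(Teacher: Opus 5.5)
Your architecture (tangent/normal split, pure-range certificate, erasure of the rows where $\mathcal A(\HTp)$ is large, and a uniform row-subset bound from a chaos estimate) is the paper's, but two steps fail as stated.

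\emph{The final absorption does not close with a constant-accuracy certificate.} With your threshold $\tau\asymp\log^4(2n)\operatorname{tr}\HTp$, your two inequalities are
\[
\operatorname{tr}\HTp\le2\|\bm\lambda\|_2\|\bm z\|_2+\tfrac12\|\HT\|_F,
\qquad
\|\HT\|_F\lesssim\|\bm z\|_2/\sqrt m+\log^2(2n)\operatorname{tr}\HTp.
\]
Substituting the first into the second puts a term $C\log^2(2n)\|\HT\|_F$ on the right-hand side, and this cannot be absorbed. You need $\|\proj{\tang}\bm Y\|_F\le\varepsilon$ with $\varepsilon\log^2(2n)$ below a small absolute constant. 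The paper takes $\varepsilon=c_0/\sqrt{\Gamma_{n,m}}$, which the golfing construction delivers with $L\gtrsim\log n+\log(1/\varepsilon)$ masks, still $O(\log n)$. The paper's certificate also has $\|\bm\lambda\|_2\lesssim m^{-1/2}$ (i.e.\ $\|\bm q\|_2=O(1)$) with no logarithmic loss. Your allowance $\|\bm\lambda\|_2\lesssim\sqrt{\log^{O(1)}n/m}$ would feed directly into the final constant and spoil the exponent $\log^4(2n)$.

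\emph{Selecting $E$ by Markov's inequality is too lossy, and (iii) is aimed at the wrong object.} Markov only uses the $\ell_1$ mass $\|\mathcal A(\HTp)\|_1\approx m\operatorname{tr}\HTp$, so your $\tau$ gives $|E|\lesssim m/\log^4(2n)\approx Ln/\Lambda_{n,m,\eta}$. The bound one can actually prove for the erased tangent energy combines $\sum_{i\in E}\langle\bm A_i,\HT\rangle^2\le4\mu_\star\|\bm\Phi_E\bm h\|_2^2$ with $\|\bm\Phi_E\|_{\mathrm{op}}^2\le n+C(\sqrt{ns\Lambda_{n,m,\eta}}+s\Lambda_{n,m,\eta})$. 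It controls a fraction $\lesssim\mu_\star(1+s\Lambda_{n,m,\eta}/n)/L$ of $\|\HT\|_F^2$, which is small for $s\lesssim n/\Lambda_{n,m,\eta}$ but of order $\mu_\star$ at your $|E|$. If you instead force $|E|\le n/\Lambda_{n,m,\eta}$ through Markov, you need $\tau\asymp L\Lambda_{n,m,\eta}\operatorname{tr}\HTp$, and the argument then only gives $\log^5(2n)$.

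The missing idea is to apply the row-subset bound to the PSD part as well. For the $s$ rows $E$ carrying the largest entries of $\mathcal A(\HTp)$,
\[
\sum_{i\in E}\langle\bm A_i,\HTp\rangle=\operatorname{tr}(\HTp\bm\Phi_E^*\bm\Phi_E)\le\operatorname{tr}(\HTp)\|\bm\Phi_E\|_{\mathrm{op}}^2\lesssim n\operatorname{tr}\HTp.
\]
This is a factor $L$ better than the $\ell_1$ mass, so the cutoff is $\lesssim\Lambda_{n,m,\eta}\operatorname{tr}\HTp$ with $s\asymp n/\Lambda_{n,m,\eta}$.

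As for (iii), Krahmer--Mendelson--Rauhut does not apply to the tangent map directly. The quantity $\langle\bm A_{\ell k},\HT\rangle$ is already quadratic in the mask $\bm d_\ell$, so $\|\bm A_E\HT\|_2^2$ is quartic in the masks rather than of the form $\|\bm B\bm\xi\|_2^2$ with $\bm B\bm\xi$ linear in $\bm\xi$. Your displayed bound also cannot hold for small $s$. A single row has $\|\proj{\tang}\bm A_i\|_F^2\approx2n|\bm a_i^*\bm x_\star|^2$, which is of order $n$. Your right-hand side at $s=1$ is only $O(\Gamma_{n,m})$ plus a small multiple of $m/\log^4(2n)\asymp n/\log^3(2n)$.

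The paper instead factors out the coherence $\mu_\star=\max_i|\bm a_i^*\bm x_\star|^2\lesssim\log n$. It then applies the chaos bound to $\bm z\mapsto\bm\Phi^*\bm z=\bm A_{\bm z}\bm\xi$, which is linear in the masks. The resulting term $n\mu_\star/m=\mu_\star/L$ is exactly why $L$ must be a sufficiently large multiple of $\log n$.
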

For comparison, the standard uniform isometry in low-rank matrix sensing is
the \emph{restricted isometry property} (RIP). A linear operator \(\mathcal A\) satisfies the rank-\(r\) RIP if
\[
  (1-\delta_r)\|\bm H\|_F^2
  \leq \frac{1}{m}\|\mathcal A(\bm H)\|_2^2
  \leq (1+\delta_r)\|\bm H\|_F^2,
\]
for every matrix \(\bm H\) with \(\operatorname{rank}(\bm H)\leq r\); see,
e.g.,~\cite{recht2010guaranteed,candes2011tight}. 
For Gaussian matrix sensing, the rank-\(r\) RIP holds at \(m=O(nr)\)~\cite[Theorem~2.3]{candes2011tight}.
In particular, an RIP
over all full-rank matrices holds at \(m=O(n^2)\).
By contrast, the ground truth $\bm{X}_\star$ is fixed before sampling, but
\cref{thm:stability} holds for every PSD candidate, and the resulting secant set may contain full-rank matrices.
Thus, a uniform lower isometry with \(L\asymp\log n\) masks and
\(m\asymp n\log n\) measurements may appear surprising. There is no
contradiction with RIP: our result is a one-sided bound on the structured set
$
  \{\bm X-\bm X_\star:\bm X\succeq\bm 0\}.
$
The positive
semidefiniteness of \(\bm X\), \(\bm X_\star\), and the measurement matrices
\(\bm a_i\bm a_i^*\) restricts this secant geometry and makes the lower bound
possible; see \cref{sec:proof-overview} for the proof.

\subsection{Stable Recovery of Convex Programs}

The uniform lower isometry \cref{eq:measurement-stability} in \cref{thm:stability} holds for every
PSD estimator $\widehat{\bm X}\succeq0$, regardless of rank; thus, trace information that helps promote low-rank solutions may not be required.
Instead, the key to recovering $\bm X_\star$ is to find a PSD $\widehat{\bm X}$ that fits the noiseless measurements $\mathcal A(\bm X_\star)$ well, which can be achieved by fitting the noisy measurements $\bm y$.
\begin{corollary}[Stable Recovery for Convex Programs]
\label{cor:noisy-recovery}
Suppose that the uniform lower isometry bound \eqref{eq:measurement-stability} from \cref{thm:stability} holds.
Let
$\bm y=\mathcal A(\bm X_\star)+\bm e$ with $\|\bm e\|_2\leq\delta$.
Then there exists a constant $C>0$ such that every $\widehat{\bm X}\succeq0$ satisfying
\begin{equation} \label{eq: noise-equality}
  \|\mathcal{A}(\widehat{\bm X})-\boldsymbol{y}\|_2 \leq \delta,
\end{equation}
obeys
\begin{equation}
  \|\widehat{\bm X}-\bm X_\star\|_F
  \leq C\log^2(2n)\frac{\delta}{\sqrt{nL}}.
  \label{eq:noisy-recovery}
\end{equation}
\end{corollary}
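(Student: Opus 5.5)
The argument is a direct deterministic consequence of \cref{thm:stability}, combined with the triangle inequality on the data-fit constraint \cref{eq: noise-equality}. Let \(\widehat{\bm X}\succeq0\) satisfy \(\|\mathcal A(\widehat{\bm X})-\bm y\|_2\leq\delta\). Since \(\bm y=\mathcal A(\bm X_\star)+\bm e\) with \(\|\bm e\|_2\leq\delta\), linearity of \(\mathcal A\) gives
\[
\|\mathcal A(\widehat{\bm X}-\bm X_\star)\|_2
=\|\mathcal A(\widehat{\bm X})-\bm y+\bm e\|_2
\leq \|\mathcal A(\widehat{\bm X})-\bm y\|_2+\|\bm e\|_2\leq 2\delta .
\]
This step uses nothing about the masks.

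Next, because \(\widehat{\bm X}\succeq0\), the point \(\widehat{\bm X}\) lies in the PSD cone, and \cref{eq:measurement-stability} applies to it. No rank or trace information about \(\widehat{\bm X}\) is needed, which is exactly why the lower isometry was stated uniformly over all PSD matrices. With \(m=nL\), it yields
\[
\frac{c}{\log^4(2n)}\|\widehat{\bm X}-\bm X_\star\|_F^2
\leq \frac{1}{nL}\|\mathcal A(\widehat{\bm X}-\bm X_\star)\|_2^2
\leq \frac{4\delta^2}{nL}.
\]
Taking square roots gives \(\|\widehat{\bm X}-\bm X_\star\|_F\leq (2/\sqrt c)\log^2(2n)\,\delta/\sqrt{nL}\), which is \cref{eq:noisy-recovery} with \(C=2/\sqrt c\). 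This constant depends only on the constant \(c\) of \cref{thm:stability}, and hence only on \(M\) and \(\omega\).

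There is no genuine obstacle here: all the probabilistic difficulty is absorbed into the hypothesis that \cref{eq:measurement-stability} holds. The only points to check are that the corollary is a deterministic statement on that event, and that the constraint set is nonempty, since \(\bm X_\star\) itself is feasible. I will also remark that the same bound covers the equality-constrained or least-squares variants. For a least-squares minimizer \(\widehat{\bm X}\), optimality gives \(\|\mathcal A(\widehat{\bm X})-\bm y\|_2\leq\|\mathcal A(\bm X_\star)-\bm y\|_2=\|\bm e\|_2\leq\delta\), so it satisfies \cref{eq: noise-equality} and the argument applies verbatim.
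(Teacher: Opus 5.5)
Your proposal is correct and matches the paper's proof. The paper also uses the triangle inequality with feasibility to get $\|\mathcal A(\widehat{\bm X}-\bm X_\star)\|_2\leq 2\delta$, then applies \eqref{eq:measurement-stability} with $m=nL$, which yields your explicit constant $C=2/\sqrt c$.
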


\begin{proof}
The triangle inequality and feasibility give
\[
  \|\mathcal A(\widehat{\bm X}-\bm X_\star)\|_2
  \leq \|\mathcal A(\widehat{\bm X})-\bm y\|_2+\|\bm e\|_2
  \leq 2\delta.
\]
Applying \eqref{eq:measurement-stability} with
$\bm X=\widehat{\bm X}$ and using $m=nL$ proves
\eqref{eq:noisy-recovery}.
\end{proof}
It remains to verify that the condition in \cref{eq: noise-equality} holds for specific PSD convex programs.
For example, the condition in \cref{eq: noise-equality} can be ensured by either a feasibility constraint
\begin{equation} \label{eq:noisy-psd-feasibility}
 \text{find} \quad \bm X\succeq 0 \quad \text{subject to} \quad \|\mathcal A(\bm X)-\bm y\|_2\leq \delta,
\end{equation}
or a least-squares loss:
\begin{equation}\label{eq:noisy-psd-least-squares} \min_{\bm X\succeq 0}\|\mathcal A(\bm X)-\bm y\|_2^2, \end{equation}
where the bound follows from the feasibility of $\bm X_\star$ and the least-squares optimality of $\widehat{\bm X}$.
The condition in \cref{eq: noise-equality} also holds for the convex program used in the current state-of-the-art stable recovery result for CDPs~\cite{huangwen2026phaselift}:
\begin{equation}\label{eq:phaselift-trace} \min_{\bm X\succeq \bm 0}\operatorname{tr}(\bm X) \quad \text{subject to} \quad \|\mathcal A(\bm X)-\bm y\|_2\leq\delta.
 \end{equation}
Although their approach may extend to general models without
a uniform lower isometry such as
\cref{eq: Gaussian-stability,eq:measurement-stability}, the resulting error bound scales as
\(\sqrt{\delta\log n/\sqrt{nL}}\), exhibiting square-root rather than linear
dependence on the noise level. In the low-noise regime, this is weaker than
the bound \(\log^2(2n)\delta/\sqrt{nL}\) established in
\cref{eq:noisy-recovery}.

We note that Jaganathan et al.~\cite{jaganathan2015masks} obtained an
error bound linear in \(\|\bm e\|_\infty\) using deterministic masks, but under additional
assumptions on \(\bm x_\star\).

\subsection{Application to Overparameterized Optimization}
\label{sec:overparameterized-result}

Instead of solving convex programs such as
\cref{eq:noisy-psd-least-squares}, which optimize over \(n\times n\)
matrices, we may consider the factorized least-squares objective
\begin{equation}
F_r(\bm V)
=
\frac{1}{4nL}
\|\mathcal A(\bm V\bm V^*)-\bm y\|_2^2,
\qquad
\bm V\in\mathbb C^{n\times r},
\label{eq:factorized-objective}
\end{equation}
which has lower storage and per-iteration computational costs when \(r\ll n\).
The complex factor space is viewed as a real Euclidean space in optimization; see
\cref{eq:factorized-objective-real} in
\cref{sec:deterministic-landscape}.
Since the ground truth \(\bm X_\star\) has rank one, \(r=1\) is sufficient
to represent \(\bm X_\star\) at a global minimizer in the noiseless case. The
resulting objective is the standard quadratic intensity loss used in phase
retrieval~\cite{candes2015wirtinger,chen2017solving}.
However,
the factorized objective is nonconvex and may contain spurious local minima
that may trap local-search methods.
Intuitively, increasing the factor width $r$ enlarges the search space and provides additional descent directions, which may help eliminate spurious local minima.
This technique is called \emph{overparameterization}, with
applications in low-rank recovery, group synchronization, and neural-network
training~\cite{zhang2021overparameterized,ma2023overparametrization,
mcrae2024synchronization,nguyen2017loss,du2018power}.

For phase retrieval, Li et al. demonstrated the empirical benefits of incremental rank search with their \texttt{IncrePR} algorithm~\cite{li2021incremental}.
More recently, McRae~\cite{mcrae2026benign,mcrae2026amplitude} analyzed how increasing the factor
width improves the landscapes of phase-retrieval objectives under
deterministic conditions, which include a local spectral upper bound and
a global lower isometry.
For the CDP model, the spectral upper bound follows directly from a
standard coherence estimate, while the uniform lower isometry in
\cref{thm:stability} provides the remaining ingredient.
Applying McRae's
framework gives the following guarantee under noise; the noiseless case is
included by setting \(\bm e=\bm 0\), for which
\(r=O(\log^5(2n))\) is sufficient for a benign landscape.

\begin{corollary}[CDP landscape under noise]
    \label{cor:benign-landscape-noisy}
    Under the assumptions of \cref{thm:stability}, let
    \(\bm y=\mathcal A(\bm X_\star)+\bm e\), and let \(c\) be the constant in
    \eqref{eq:measurement-stability}. Define
    \begin{equation}
        \tau:=\frac{2\log^4(2n)}{mc}
        \|\mathcal A^*(\bm y)\|_{\mathrm{op}}-2.
    \end{equation}
    With probability at least \eqref{cert:eq:stability-probability}, for every
    \(r>\max\{\tau,0\}\), each second-order critical point \(\bm V\) of
    \(F_r\) satisfies
    \begin{equation}
        \label{eq:noisy-nonconvex-errorbound}
        \|\bm V\bm V^* - \bm X_\star\|_F
        \leq \frac{r+2}{r-\tau}
        \frac{\log^2(2n)}{\sqrt{mc}}\,\|\bm e\|_2.
    \end{equation}
    In particular, on the same event, if \(\bm e=\bm 0\), then there exists a constant $C_{M,\omega}>0$ such that every \(r\)
satisfying
\begin{equation}
r\geq \frac{2C_{M,\omega}}{c}\log^5(2n)
\label{eq:landscape-width}
\end{equation}
has the following property: every second-order critical point satisfies
\(\bm V\bm V^*=\bm X_\star\), and every nonglobal critical point has a
direction of strictly negative curvature.
\end{corollary}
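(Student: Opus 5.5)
We prove \cref{cor:benign-landscape-noisy} by a deterministic argument in the spirit of McRae's framework. Only two random inputs are needed: the lower isometry \eqref{eq:measurement-stability} from \cref{thm:stability}, and, for the noiseless width bound, an operator-norm bound on \(\mathcal A^*(\bm y)\).

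\textbf{Step 1: first-order identity.} Write \(\bm X=\bm V\bm V^*\) and \(\bm S=\frac{1}{nL}\mathcal A^*(\mathcal A(\bm X)-\bm y)\). The Wirtinger gradient of \(F_r\) is proportional to \(\bm S\bm V\), so first-order criticality gives \(\bm S\bm V=\bm 0\) and hence \(\langle\bm S,\bm X\rangle=0\). Using \(\bm y=\mathcal A(\bm X_\star)+\bm e\), we get
\[
\frac1m\|\mathcal A(\bm X-\bm X_\star)\|_2^2=\frac1m\langle\mathcal A(\bm X-\bm X_\star),\bm e\rangle-\langle\bm S,\bm X_\star\rangle .
\]

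\textbf{Step 2: second-order bound on \(\langle\bm S,\bm X_\star\rangle\).} We use McRae's averaging trick. Take directions \(\dot{\bm V}=\bm x_\star\bm u^*\) with \(\bm u\) uniform on the unit sphere of \(\mathbb C^r\), or averaged over an orthonormal basis. The Hessian quadratic form is
\[
\frac{1}{nL}\Bigl(\tfrac12\|\mathcal A(\bm V\dot{\bm V}^*+\dot{\bm V}\bm V^*)\|_2^2\Bigr)+\langle\bm S,\dot{\bm V}\dot{\bm V}^*\rangle ,
\]
up to constants. Averaging over the \(r\) basis directions, second-order criticality yields
\[
r\langle\bm S,\bm X_\star\rangle\geq -\frac{1}{m}\sum_{j}\tfrac12\|\mathcal A(\bm V\bm u_j\bm x_\star^*+\bm x_\star\bm u_j^*\bm V^*)\|_2^2 .
\]
Each measurement \(\bm a_i\) contributes \(\sum_j|\bm a_i^*\bm V\bm u_j|^2|\bm a_i^*\bm x_\star|^2=(\bm a_i^*\bm X\bm a_i)(\bm a_i^*\bm X_\star\bm a_i)\) plus a cross term whose average is controlled the same way. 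The sum is therefore at most \(2\langle\mathcal A(\bm X),\mathcal A(\bm X_\star)\rangle\).

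We then split \(\mathcal A(\bm X)=\mathcal A(\bm X-\bm X_\star)+\mathcal A(\bm X_\star)\) and \(\mathcal A(\bm X_\star)=\bm y-\bm e\). The term \(\langle\mathcal A(\bm X),\bm y\rangle\) is bounded by \(\langle\bm X,\mathcal A^*(\bm y)\rangle\leq\|\mathcal A^*(\bm y)\|_{\mathrm{op}}\operatorname{tr}\bm X\). Combining with the zero-gradient condition reorganizes this into the bound
\[
\frac1m\|\mathcal A(\bm H)\|_2^2\leq \frac{\text{const}}{r}\Bigl(\frac1m\|\mathcal A^*(\bm y)\|_{\mathrm{op}}\|\bm H\|_F\,\cdot\,\Bigr)+\text{noise terms}.
\]
We follow McRae's Lemma precisely here; the bookkeeping is what produces the factors \(r+2\) and \(r-\tau\).

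\textbf{Step 3: close with the lower isometry.} Insert \eqref{eq:measurement-stability}, \(\frac1m\|\mathcal A(\bm H)\|_2^2\geq \frac{c}{\log^4(2n)}\|\bm H\|_F^2\). Collecting the \(\|\bm H\|_F^2\) terms leaves the coefficient \((r-\tau)\) on the left, while Cauchy–Schwarz on the noise term gives \((r+2)\frac{\log^2(2n)}{\sqrt{mc}}\|\bm e\|_2\|\bm H\|_F\) on the right. Dividing by \(\|\bm H\|_F\) yields \eqref{eq:noisy-nonconvex-errorbound}.

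For \(\bm e=\bm 0\), the right-hand side vanishes, so every second-order critical point satisfies \(\bm V\bm V^*=\bm X_\star\) whenever \(r>\tau\). The same argument applied to a nonglobal critical point gives a strictly negative average curvature, hence a direction of strictly negative curvature. It then remains to show \(\tau\lesssim\log^5(2n)\), i.e.\ \(\|\mathcal A^*(\mathcal A(\bm X_\star))\|_{\mathrm{op}}\leq C_{M,\omega}\, m\log(2n)\).

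\textbf{Step 4: the coherence estimate.} Write \(\mathcal A^*\mathcal A(\bm X_\star)=\sum_i|\bm a_i^*\bm x_\star|^2\bm a_i\bm a_i^*\preceq(\max_i|\bm a_i^*\bm x_\star|^2)\sum_i\bm a_i\bm a_i^*\). We have \(\sum_{\ell,k}\bm a_{\ell k}\bm a_{\ell k}^*=n\sum_\ell\bm D_\ell^*\bm D_\ell\preceq nLM^2\bm I\). Each \(\bm a_{\ell k}^*\bm x_\star\) is a Rademacher-type sum with bounded coefficients, so Hoeffding's inequality and a union bound over \(m\) indices give \(\max|\bm a_i^*\bm x_\star|^2\leq 4M^2\log(4m/\eta)\) with probability at least \(1-\eta\). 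Taking \(\eta=n^{-\omega}\) and \(m\asymp n\log n\) gives the \(\log(2n)\) factor, absorbed into \(C_{M,\omega}\).

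\textbf{Main obstacle.} The main obstacle is Step 2: getting the exact constants \(r+2\) and \(\tau\) from the averaged second-order condition, particularly controlling the cross term. I would follow McRae's deterministic lemma verbatim rather than rederive it.
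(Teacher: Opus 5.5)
Steps 1 and 4 agree with the paper, but Step 2 has a genuine gap. You test the second-order condition only along $\dot{\bm V}=\bm x_\star\bm u^*$, which is $\bm q=\bm 0$ in \cref{lem:critical-point-residual}. You then bound the curvature term by $\|\mathcal A^*(\bm y)\|_{\mathrm{op}}\operatorname{tr}\bm X$. This leaves a residual of the form $\frac{2}{(r+2)m}\|\mathcal A^*(\bm y)\|_{\mathrm{op}}\|\bm x_\star\|_2^2$ (or with $\operatorname{tr}\bm X$ in place of $\|\bm x_\star\|_2^2$), which stays a fixed positive quantity at $\bm H=\bm 0$. With it, the lower isometry gives at best $\|\bm H\|_F^2\lesssim\log^5(2n)/r$ when $\bm e=\bm 0$, never $\bm V\bm V^*=\bm X_\star$. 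There is also no $\|\bm H\|_F^2$ term on the right for Step 3 to collect against $r-\tau$. Deferring to McRae's lemma does not close this by itself. That lemma leaves $\|\bm x_\star-\bm V\bm q\|_2^2$ on the right, and reducing this quantity to $\|\bm H\|_F^2$ is exactly the step your proposal lacks.

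The missing idea is to test along $\dot{\bm V}=\bm u\bm v_j^*$ with $\bm u=\bm x_\star-\bm V\bm q$. First-order criticality $\bm S\bm V=\bm 0$ (and $\bm V^*\bm S=\bm 0$) gives $\langle\bm S,\bm u\bm u^*\rangle=\langle\bm S,\bm X_\star\rangle$, so your Step 1 identity is unaffected. The averaged curvature term is proportional to $\langle\mathcal A(\bm u\bm u^*),\mathcal A(\bm X)\rangle$. Splitting $\mathcal A(\bm X)=(\mathcal A(\bm X)-\bm y)+\bm y$ there turns the coefficient $r$ into $r+2$ and leaves $\frac{2}{(r+2)m}\langle\mathcal A^*(\bm y),\bm u\bm u^*\rangle\le\frac{2}{(r+2)m}\|\mathcal A^*(\bm y)\|_{\mathrm{op}}\|\bm u\|_2^2$.

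Next, choose $\bm q$ so that $\bm u=P\bm x_\star$, where $P$ is the orthogonal projector onto $\operatorname{range}(\bm V)^\perp$. Since $P\bm V\bm V^*=\bm 0$, one gets $\|\bm u\|_2=\|P\bm X_\star\|_F=\|P\bm H\|_F\le\|\bm H\|_F$. Only then is the residual quadratic in $\bm H$, and the lower isometry bounds it by $\frac{\tau+2}{r+2}\|\mathcal A(\bm H)\|_2^2$. This yields $\frac{r-\tau}{r+2}\|\mathcal A(\bm H)\|_2^2\le\|\bm e\|_2\|\mathcal A(\bm H)\|_2$.

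Step 3 also needs reordering. Cancel $\|\mathcal A(\bm H)\|_2$ first, treating $\mathcal A(\bm H)=\bm 0$ separately, and only then apply the lower isometry. Bounding the noise term by a multiple of $\|\bm e\|_2\|\bm H\|_F$, as you propose, would need an upper bound on $\|\mathcal A(\bm H)\|_2$ in terms of $\|\bm H\|_F$, which is not available.

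Finally, the ``same event'' claim requires Step 4's coherence event to be the one already used inside \cref{thm:stability} (with $\eta=n^{-\omega}$), not a fresh one.
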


\Cref{eq:landscape-width} shows that mild overparameterization of order \(r = O(\log^5(2n))\) is sufficient to guarantee a benign landscape for \cref{eq:factorized-objective} in the noiseless case.
This polylogarithmic factor width is enabled by the uniform lower isometry over
the PSD secant set in \cref{thm:stability}; the previous stable recovery guarantees do not directly provide the
uniform input required by this landscape framework.
See
\cref{sec:deterministic-landscape} for a detailed discussion and the proof of
\cref{cor:benign-landscape-noisy}.

\section{Proof of the Uniform Lower Isometry Theorem}
\label{sec:mainproof}
This section proves the uniform lower isometry result in \cref{thm:stability}.
Throughout, set \(m=nL\) and fix a failure probability
\(0<\eta<1/4\).  For brevity, we use 
$
\Gamma_{n,m}$ and 
$
\Lambda_{n,m,\eta}$
to denote the logarithmic factors (defined later in \eqref{eq:eq-log-factors}). 
Let \(\bm{\Phi}\in\mathbb C^{m\times n}\) be the measurement matrix with rows
\(\bm a_i^*\), indexed either by \(i\in[m]\) or by
\((\ell,k)\in[L]\times \{0, \dots, n-1\}\). 
 For \(E\subset[m]\), let \(\bm\Phi_E \in \mathbb{C}^{|E|\times n}\)
denote the submatrix formed by the rows indexed by \(E\), and let
$\mathcal{A}_E$ denote the lifted measurement operator restricted to the rows in \(E\):
$$
\mathcal{A}_E(\bm H) := \bigl(\langle\bm a_i\bm a_i^*,\bm H\rangle\bigr)_{i\in E} \in \mathbb{C}^{|E|}.
$$  
Then
\begin{equation}
  (\bm{\Phi}\bm{h})_{\ell k}=\bm{a}_{\ell k}^*\bm{h},
  \qquad
  \bm{A}_i=\bm{a}_i\bm{a}_i^*,
  \qquad
  \bm{\Phi}_E^*\bm{\Phi}_E=\sum_{i\in E}\bm{A}_i.
  \label{prob:eq-analysis-matrix}
\end{equation}

\subsection{Proof Overview}
\label{sec:proof-overview}

Fix \(\bm X\succeq\bm0\), let \(\bm H=\bm X-\bm X_\star\), and decompose
the error into tangent and normal components
\begin{equation*}
  \bm H=\HT+\HTp,
  \qquad
  \HT=\proj{\tang}\bm H,
  \qquad
  \HTp=\proj{\tangp}\bm X\succeq\bm0.
\end{equation*}
Existing CDP results give the \(\ell_2\) tangent lower bound 
$\|\mathcal{A}(\HT)\|_2 / \sqrt{m} \gtrsim \|\HT\|_F$
in
\cref{prop:tangent-lower}; see~\cite[Prop.~2]{huangli2026optimal}.
For the PSD normal component, the measurements are instead controlled in the $\ell_1$ norm:
\(m^{-1}\|\mathcal A(\HTp)\|_1\asymp\operatorname{tr}(\HTp)\); see~\cite[Lem.~3.3]{candes2015cdp}.  Directly converting the \(\ell_1\) bound to an
\(\ell_2\) upper bound would give
\begin{equation*}
  \frac{\|\mathcal A(\HTp)\|_2}{\sqrt m}
  \leq\frac{\|\mathcal A(\HTp)\|_1}{\sqrt m}
  \lesssim\sqrt m\,\operatorname{tr}(\HTp).
\end{equation*}
This would introduce a dimension-dependent loss $\sqrt m$.
A dimension-free \(\ell_2\) upper bound cannot simply be imposed uniformly
on the normal PSD cone: the familiar rank-one obstruction already prevents
such an \(\ell_2\)-RIP~\cite[Appendix, pp.~1271--1272]{candes2013phaselift}.

The key new ingredient is the uniform row-subset estimate
\begin{equation}
\sup_{\substack{E\subset[m]\\ |E|\leq s}}
\|\bm\Phi_E\|_{\mathrm{op}}^2
\lesssim
n+\sqrt{ns\Lambda_{n,m,\eta}}
+s\Lambda_{n,m,\eta},
\label{prob:eq-overview-row-subset}
\end{equation}
proved in \cref{sec:probabilistic}.  
For a positive semidefinite normal component \(\HTp\), let
\(E_{\HTp}\subset[m]\) denote the indices of its \(s\) largest measurements,
and let \(E_{\HTp}^c\) denote the complementary index set.
Since
\eqref{prob:eq-overview-row-subset} is uniform over all row subsets, it also
applies to \(E_{\HTp}\).
Taking \(s\asymp n/\Lambda_{n,m,\eta}\), the estimate shows that 
 $\|\mathcal{A}_{E_{\HTp}^c}(\HTp)\|_\infty$  is bounded in terms of
\(\Lambda_{n,m,\eta}\operatorname{tr}(\HTp)\); after erasing, we have
$$
\|\mathcal{A}_{E_{\HTp}^c}(\HTp)\|_2^2 \leq \|\mathcal{A}_{E_{\HTp}^c}(\HTp)\|_1\|\mathcal{A}_{E_{\HTp}^c}(\HTp)\|_\infty
\lesssim m \Lambda_{n,m,\eta} \operatorname{tr}^2(\HTp),
$$ 
leading to a \(\ell_2\) estimate for the normal component.
Together with the bound $\max_{1\leq i \leq m} |\bm{a}_i^* \bm{x}_\star|^2 \lesssim \log(m)$ in \cref{prob:eq-coherence},
\cref{prob:eq-overview-row-subset} also shows that erasing \(E_{\HTp}\) deletes only a controlled fraction
of the tangent measurement energy.  Consequently, tangent injectivity
survives on \(E_{\HTp}^c\), and we establish the matching \(\ell_2\) control of both the tangent component and the normal component.

The following argument follows the standard tangent--normal decomposition and
dual-certificate strategy from PhaseLift-type analyses; see, e.g.,~\cite{candes2015cdp,mcrae2026benign}.
Restricted to $E_{\HTp}^c$, tangent injectivity gives
$\|\HT\|_F \lesssim \|\mathcal{A}_{E_{\HTp}^c}(\HT)\|_2 / \sqrt{m}$,
and the triangle inequality gives
$$
\|\HT\|_F \lesssim \frac{\|\mathcal{A}_{E_{\HTp}^c}(\bm{H})\|_2}{\sqrt{m}} + \frac{\|\mathcal{A}_{E_{\HTp}^c}(\HTp)\|_2}{\sqrt{m}} \lesssim \frac{\|\mathcal{A}(\bm{H})\|_2}{\sqrt{m}} + \Lambda_{n,m,\eta}^{\frac{1}{2}} \operatorname{tr}(\HTp).
$$
The
certificate in \cref{prop:pure-range-certificate}
provides the reverse bridge:
\[
\|\HTp\|_F \leq  \operatorname{tr}(\HTp)
\lesssim
\frac{\|\mathcal A(\bm H)\|_2}{\sqrt m}
+\varepsilon\|\HT\|_F.
\]
Choosing \(\varepsilon \lesssim \Lambda_{n,m,\eta}^{-\frac{1}{2}}\) and closing these two inequalities
controls both \(\|\HT\|_F\) and \(\|\HTp\|_F\) by $\|\mathcal{A}(\bm{H})\|_2$, thereby establishing the
lower isometry in \cref{thm:stability}. 

The rest of the proof is organized as follows.  
We first introduce the uniform row-subset bound in
\cref{prob:prop-fixed-row-subset}, then derive the PSD measurement bound in
\cref{prob:lem-psd-size}, the erasure-robust tangent \(\ell_2\)
bound
in \cref{thm:tangent-lower-outside-large}, and the PSD \(\ell_2\) bound outside
the erased rows in \cref{lem:psd-l2-after-row-removal}.  We then use these results to prove \cref{thm:stability}.

\subsection{Key Ingredient: A Uniform Row-Subset Bound}

The uniform row-subset bound quantifies how much measurement energy can
concentrate on a small set of rows.  Related questions arise in the study of
robustness under arbitrary measurement erasures~\cite{laska2011democracy,han2017robustness}, and also in quantized inverse
problems~\cite{dirksen2021nonGaussian,chenYuanOneBitPhaseRetrieval}.
If a $m \times n$ matrix \(\bm A \) has independent isotropic sub-Gaussian rows, then, with high
probability,
\begin{equation}
    \sup_{\substack{E\subset[m]\\ |E|\le s}}
    \|\bm A_E\|_{\mathrm{op}}^2
    \lesssim n+s\log(em/s).
    \label{eq:subgaussian-row-subset}
\end{equation}
See, e.g.,~\cite[Lemma 32]{chenYuanOneBitPhaseRetrieval} for its use in one-bit phase retrieval.
Our
estimate provides an analogous row-subset bound for the structured and
dependent measurements arising from coded diffraction patterns.

\begin{proposition}[Uniform row-subset bound up to size \(s\)]
\label{prob:prop-fixed-row-subset}
There is a constant \(C_M>0\), depending only on the mask bound \(M\), such
that the following holds.  Fix an integer \(1\le s\le n\) and $\eta \in(0,1/4)$.  
With probability at least \(1-\eta\),
\begin{equation}
  \|\bm{\Phi}_E\|_{\mathrm{op}}^2
  \le n+C_M\left(\sqrt{ns(\Lameta)}+s(\Lameta)\right)
  \label{prob:eq-fixed-row-subset}
\end{equation}
simultaneously for every \(E\subset[m]\) with \(|E| \leq s\).

\end{proposition}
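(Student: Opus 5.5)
We first reduce the operator norm to a supremum of a chaos process. For \(|E|\le s\),
\[
\|\bm\Phi_E\|_{\mathrm{op}}^2=\sup_{\|\bm h\|_2=1}\sum_{i\in E}|\bm a_i^*\bm h|^2 ,
\]
and the supremum over all \(E\) with \(|E|\le s\) equals the supremum over weight vectors \(\bm w\in\{0,1\}^m\) with \(\|\bm w\|_1\le s\). We relax to the convex set \(W_s=\{\bm w\in[0,1]^m:\|\bm w\|_1\le s\}\), which does not change the value. Write the mask vector \(\bm\xi=(d_{\ell,j})_{\ell,j}\in\mathbb C^{nL}\). Since \(\bm a_{\ell k}^*\bm h=\bm f_k^*\bm D_\ell\bm h=\sum_j \overline{\bm f_k(j)}\,\overline{d_{\ell,j}}\,h_j\), each quantity \(\sum_i w_i|\bm a_i^*\bm h|^2\) equals \(\|\bm V_{\bm w,\bm h}\bar{\bm\xi}\|_2^2\). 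Here \(\bm V_{\bm w,\bm h}\) is block diagonal over \(\ell\), and its \(\ell\)-th block is \(\operatorname{diag}(\sqrt{w_{\ell\cdot}})\,\bm F^*\operatorname{diag}(\bm h)\). So we face \(\sup_{\bm V\in\mathcal V}\|\bm V\bm\xi\|_2^2\) with \(\mathcal V=\{\bm V_{\bm w,\bm h}\}\). By the mask assumption the mean is \(\mathbb E\|\bm V\bm\xi\|^2=\|\bm V\|_F^2=\sum_i w_i\|\bm f_k\odot\bm h\|^2\le s\|\bm h\|_2^2=s\), and \(\mathbb E\bm\Phi^*\bm\Phi=nL\,\bm I\) is not the relevant term. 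Instead we split \(\|\bm\Phi_E\|^2\le\sup\big(\mathbb E+\text{deviation}\big)\).

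The sharper leading term \(n\) comes from \(\bm w\le\bm 1\) per pattern. The process \(\sum_i w_i|\bm a_i^*\bm h|^2\) is at most \(\|\bm F\bm D_\ell\bm h\|^2=n\|\bm D_\ell\bm h\|^2\) within one pattern, but \(E\) may spread over patterns. The plan is therefore to bound the centered process \(\sup_{\mathcal V}\big|\|\bm V\bm\xi\|^2-\mathbb E\|\bm V\bm\xi\|^2\big|\) via the Krahmer--Mendelson--Rauhut theorem~\cite{krahmer2014suprema}. That theorem gives, with probability \(1-\eta\),
\[
\sup_{\mathcal V}\big|\|\bm V\bm\xi\|^2-\mathbb E\|\bm V\bm\xi\|^2\big|\lesssim_M \gamma_2(\mathcal V,\|\cdot\|_{\mathrm{op}})^2+\gamma_2\, d_F+d_F d_{\mathrm{op}}+\sqrt{\log(1/\eta)}\,d_F d_{\mathrm{op}}+\log(1/\eta)\,d_{\mathrm{op}}^2 ,
\]
valid for bounded symmetric (hence subgaussian) masks. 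The bound \(\mathbb E\le s\) is absorbed, and the constant \(n\) arises because \(d_{\mathrm{op}}^2\) and \(d_F\) must be compared with the target \(n+\sqrt{ns\Lambda}+s\Lambda\). The parameters are as follows. First, \(d_F(\mathcal V)=\sup\|\bm V\|_F\le\sqrt s\). Second, \(d_{\mathrm{op}}(\mathcal V)\le\max_\ell\|\operatorname{diag}(\sqrt{w_{\ell\cdot}})\bm F^*\operatorname{diag}(\bm h)\|_{\mathrm{op}}\le\sqrt{\min(n,s)}\,\|\bm h\|_\infty\cdot(\ldots)\); more carefully, one gets \(\le\sqrt n\), since \(\bm F\) has entries of modulus one and \(\|\bm F\|_{\mathrm{op}}=\sqrt n\). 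This gives \(d_Fd_{\mathrm{op}}\le\sqrt{ns}\) and \(d_{\mathrm{op}}^2\le n\), matching the target shape. The \(\log(2/\eta)\) terms then produce \(\sqrt{ns\log(2/\eta)}\) and \(n\log(2/\eta)\). The latter is too large. We fix this by rescaling: we use the refined operator bound \(\|\bm V\|_{\mathrm{op}}\le\|\bm F_{S}^*\operatorname{diag}(\bm h)\|\), where \(S\) carries at most \(s\) rows. The bound \(\|\bm V\|_{\mathrm{op}}^2\le s\|\bm h\|_\infty^2+\ldots\) is not uniform in \(\bm h\), so instead we take the deterministic part \(n\) out directly. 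Concretely, for each pattern \(\sum_{k}w_{\ell k}|\bm a_{\ell k}^*\bm h|^2\le\min\{n\|\bm D_\ell\bm h\|^2,\ \ldots\}\). We then apply the chaos bound only to the normalized class in which \(d_{\mathrm{op}}^2\lesssim s\) after splitting \(\bm h\) into peaky and flat parts. This step needs care, and we may settle for treating \(d_{\mathrm{op}}\) via the bound \(\|\bm V\|_{\mathrm{op}}^2\le\max(s,n)\).

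The main obstacle is the \(\gamma_2(\mathcal V,\|\cdot\|_{\mathrm{op}})\) functional, which must be bounded by \(\sqrt{s\,\Gamma_{n,m}}\) up to constants, so that \(\gamma_2^2\lesssim s\log^2(2n)\log^2(2m)\) and \(\gamma_2 d_F\lesssim\sqrt{ns\Gamma}\). We plan to use Dudley's integral with covering numbers of \(\mathcal V\) in operator norm. The operator norm of \(\bm V_{\bm w,\bm h}-\bm V_{\bm w',\bm h'}\) is controlled by a mixed norm: for the \(\bm h\)-dependence it is \(\max_i\sqrt{w_i}\,|\langle\bm f_k\odot(\bm h-\bm h'),\cdot\rangle|\), which is dominated by \(\|\bm F(\bm h-\bm h')\|_\infty\)-type quantities. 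This is the RIP-for-partial-Fourier setting, where Maurey's empirical method (as in Rudelson--Vershynin and Krahmer--Mendelson--Rauhut's partial random circulant analysis) yields \(\log N(\epsilon)\lesssim\epsilon^{-2}s\log^2(\cdot)\log m\) together with a volumetric bound for small \(\epsilon\). Integrating Dudley's integral gives \(\gamma_2\lesssim\sqrt s\log(2n)\log(2m)\), exactly the square root of \(\Gamma_{n,m}\). The \(\bm w\)-dependence is handled by noting that the extreme points of \(W_s\) are sets of size \(s\), and the \(\sqrt{w}\) map is Hölder. Combining the chaos tail bound with these estimates and collecting terms yields the stated bound with \(\Lambda_{n,m,\eta}=\Gamma_{n,m}+\log(2/\eta)\).
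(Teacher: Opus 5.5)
There is a genuine gap: you set up the chaos over the wrong index class, and the leading term $n$ cannot come out of it. In your parametrization by $(\bm w,\bm h)$ you correctly find $\mathbb E\|\bm V_{\bm w,\bm h}\bar{\bm\xi}\|_2^2=\sum_i w_i\|\bm h\|_2^2\le s$. The process itself, however, is of size $n$. Take $\bm w=\bm e_{\ell k}$ and the mask-dependent direction $\bm h=\bm a_{\ell k}/\|\bm a_{\ell k}\|_2$; the value is then $\|\bm a_{\ell k}\|_2^2=\sum_j|d_{\ell,j}|^2\approx n$, against a mean of $1$. So already for $s=1$ the centered supremum is of order $n$, and all of it must come from the deviation term.

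In your class $d_{\mathrm{op}}\le d_F\le\sqrt s$, since the operator norm is at most the Frobenius norm. The $n\log(2/\eta)$ tail you worry about is therefore not the real issue. If your claimed $\gamma_2(\mathcal V,\|\cdot\|_{\mathrm{op}})\lesssim\sqrt{s\Gamma_{n,m}}$ held, the Krahmer--Mendelson--Rauhut bound would cap the whole process at $O_M(s\Lambda_{n,m,\eta})$. That contradicts the lower bound above whenever $s\Lambda_{n,m,\eta}\ll n$. In fact $\gamma_2(\mathcal V)\gtrsim\sqrt n$. This is unsurprising: $\bm h$ ranges over the entire unit sphere of $\mathbb C^n$ rather than over sparse vectors, so the partial-Fourier/Maurey entropy estimates you cite do not apply. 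Even with the correct $\gamma_2$, KMR would return something like $C_K\gamma_2^2\gtrsim C_K n$ times logarithms. It can never give $n+C_M(\cdots)$ with coefficient one on $n$, because the theorem has an unspecified constant in front of $E_{\mathcal B}$.

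The missing idea is to dualize before forming the chaos, as the paper does. One has $\sup_{|E|\le s}\|\bm\Phi_E\|_{\mathrm{op}}^2=\sup\{\|\bm\Phi^*\bm z\|_2^2:\|\bm z\|_2\le1,\ \|\bm z\|_0\le s\}$. Moreover $\bm\Phi^*\bm z=\sum_\ell\operatorname{diag}(\bm F\bm z_\ell)\overline{\bm d}_\ell=\bm A_{\bm z}\bm\xi$, where $\bm\xi$ stacks the conjugated mask entries and $\bm A_{\bm z}=[\operatorname{diag}(\bm F\bm z_1),\ldots,\operatorname{diag}(\bm F\bm z_L)]$. This changes every parameter in your favour:
\begin{itemize}
\item The mean is exactly $\|\bm A_{\bm z}\|_F^2=n\|\bm z\|_2^2$, so the leading $n$ is the expectation rather than a fluctuation.
\item $d_F=\sqrt n$, and $d_{\mathrm{op}}\le\|\bm z\|_1\le\sqrt s$ because $|[\bm F\bm z_\ell]_j|\le\|\bm z_\ell\|_1$.
\item The index vectors are genuinely $s$-sparse. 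A volumetric bound for $u\le1$, together with Maurey's lemma for $1<u\le\sqrt s$, yields $\gamma_2\lesssim\sqrt{s\Gamma_{n,m}}$. Maurey's lemma is applied over the $4m$ atoms $\pm\bm U_{\ell k},\pm i\bm U_{\ell k}$ with $\bm U_{\ell k}=\bm A_{\bm e_{\ell k}}$, and noncommutative Khintchine gives $A\lesssim\sqrt{\log(2m)}$.
\end{itemize}
Then $E_{\mathcal B}\lesssim\sqrt{ns\Gamma_{n,m}}+s\Gamma_{n,m}$, $V_{\mathcal B}\lesssim\sqrt{ns}+s\sqrt{\Gamma_{n,m}}$ and $U_{\mathcal B}\le s$. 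Taking $t\asymp V_{\mathcal B}\sqrt{\log(2/\eta)}+U_{\mathcal B}\log(2/\eta)$ in the tail bound gives exactly the stated estimate.
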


The restriction $s \leq n$ can be generalized to $s \leq m$ by introducing logarithmic factors of $s$, but the simpler form is sufficient for our purposes.
We also introduce the following shorthand notation for logarithmic factors:
\begin{equation} \label{eq:eq-log-factors}
\Gamma_{n,m}=\log^2(2n)\log^2(2m),\qquad 
\Lambda_{n,m,\eta} = \Gamma_{n,m} + \log\frac{2}{\eta},
\end{equation}
where $\Lambda_{n,m,\eta} \geq 1$ because $\eta \in (0,1/4)$.
The dependent row structure makes the proof of \cref{prob:prop-fixed-row-subset} more involved than that of \cref{eq:subgaussian-row-subset}; we defer it to \cref{sec:probabilistic}.
In the remainder of this section, we establish the uniform lower isometry theorem from the uniform row-subset bound.

\subsection{PSD Measurements Outside the Erased Rows}

For \(\bm P\succeq\bm 0\), choose a permutation \(\pi_{\bm P}\) of \([m]\)
such that
\begin{equation}
  \langle\bm A_{\pi_{\bm P}(1)},\bm P\rangle
  \leq\cdots\leq
  \langle\bm A_{\pi_{\bm P}(m)},\bm P\rangle.
  \label{prob:eq-psd-measurement-order}
\end{equation}
If the ordering is not unique, choose any such ordering.
For a fixed $s \in [1,m]$, define
\begin{equation}
  E_{\bm P}:=\{\pi_{\bm P}(m-s+1),\ldots,\pi_{\bm P}(m)\},
  \qquad
  \zeta_{\bm P}:=
  \begin{cases}
    \displaystyle
    \frac{\langle\bm A_{\pi_{\bm P}(m-s+1)},\bm P\rangle}
         {\operatorname{tr}(\bm P)},
      & \operatorname{tr}(\bm P)>0,\\[1ex]
    0, & \bm P=\bm 0.
  \end{cases}
  \label{eq:defofzetaandE}.
\end{equation}
For simplicity, we omit the subscript $s$ in the notation of $E_{\bm P}$ and $\zeta_{\bm P}$, though they depend on the choice of $s$.
\(E_{\bm P}\) collects the \(s\) largest values of
\(\langle\bm A_i,\bm P\rangle\), while
\(\zeta_{\bm P}\operatorname{tr}(\bm P)\) is the smallest value among them,
namely the \(s\)-th largest measurement.  Thus
\begin{equation}
  \langle\bm A_j,\bm P\rangle
  \leq\zeta_{\bm P}\operatorname{tr}(\bm P)
  \quad\text{for every }j\notin E_{\bm P} \Rightarrow \|\mathcal A_{E_{\bm P}^c}(\bm P)\|_\infty \leq \zeta_{\bm P}\operatorname{tr}(\bm P).
  \label{prob:eq-psd-selection-cutoff}
\end{equation}

\begin{lemma}[PSD measurement bound outside the erased rows]
\label{prob:lem-psd-size}
Suppose that \cref{prob:eq-fixed-row-subset} holds.
Then simultaneously for every \(\bm P\succeq\bm{0}\),
\begin{equation}
  \zeta_{\bm P}s
  \leq n+C_M\left(\sqrt{ns\Lambda_{n,m,\eta}}+s\Lambda_{n,m,\eta}\right).
  \label{prob:eq-selected-zeta-bound}
\end{equation}
The matrix \(\bm{P}\) may have arbitrary rank and may depend on all masks.
\end{lemma}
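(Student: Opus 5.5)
Since \(\bm P\succeq\bm 0\), write \(\bm P=\sum_{j}\lambda_j\bm u_j\bm u_j^*\) with \(\lambda_j\geq0\), unit vectors \(\bm u_j\), and \(\sum_j\lambda_j=\operatorname{tr}(\bm P)\). The case \(\bm P=\bm 0\) is trivial because \(\zeta_{\bm P}=0\), so we assume \(\operatorname{tr}(\bm P)>0\). The idea is to sum the \(s\) largest measurements, which form the set \(E_{\bm P}\). Each of these is at least \(\zeta_{\bm P}\operatorname{tr}(\bm P)\) by the definition in \eqref{eq:defofzetaandE}, so
\[
  s\,\zeta_{\bm P}\operatorname{tr}(\bm P)
  \leq\sum_{i\in E_{\bm P}}\langle\bm A_i,\bm P\rangle
  =\Bigl\langle \bm\Phi_{E_{\bm P}}^*\bm\Phi_{E_{\bm P}},\bm P\Bigr\rangle,
\]
where the last identity uses \eqref{prob:eq-analysis-matrix}.

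Next, bound the right-hand side by the operator norm. For PSD \(\bm P\), we have \(\langle\bm B,\bm P\rangle\leq\|\bm B\|_{\mathrm{op}}\operatorname{tr}(\bm P)\) for any PSD \(\bm B\). This follows by expanding \(\bm P\) in its eigendecomposition: each term contributes \(\lambda_j\|\bm\Phi_{E_{\bm P}}\bm u_j\|_2^2\leq\lambda_j\|\bm\Phi_{E_{\bm P}}\|_{\mathrm{op}}^2\). Hence
\[
  s\,\zeta_{\bm P}\operatorname{tr}(\bm P)\leq \|\bm\Phi_{E_{\bm P}}\|_{\mathrm{op}}^2\operatorname{tr}(\bm P).
\]

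Finally, \(E_{\bm P}\) has cardinality exactly \(s\). This requires \(s\leq n\), the regime in which \eqref{prob:eq-fixed-row-subset} was assumed. Because \eqref{prob:eq-fixed-row-subset} holds simultaneously for every subset of size at most \(s\), it applies to \(E_{\bm P}\) even though that set depends on \(\bm P\) and hence possibly on the masks. Substituting the bound and dividing by \(\operatorname{tr}(\bm P)>0\) yields \eqref{prob:eq-selected-zeta-bound}.

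No step here is genuinely hard; the whole difficulty lies in \cref{prob:prop-fixed-row-subset}. The one point to handle carefully is that the argument must rely on the uniformity of that event over all row subsets, not on any fixed-subset bound, because \(E_{\bm P}\) is chosen adaptively. This is exactly why the conclusion holds simultaneously for all \(\bm P\), of arbitrary rank, on a single event.
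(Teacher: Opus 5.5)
Your proof is correct and follows the paper's argument step for step. You lower-bound the sum of the $s$ largest measurements by $s\,\zeta_{\bm P}\operatorname{tr}(\bm P)$, rewrite it as $\operatorname{tr}(\bm P\bm\Phi_{E_{\bm P}}^*\bm\Phi_{E_{\bm P}})\leq\operatorname{tr}(\bm P)\|\bm\Phi_{E_{\bm P}}\|_{\mathrm{op}}^2$, and then use the uniformity of \eqref{prob:eq-fixed-row-subset} to handle the adaptively chosen set $E_{\bm P}$. One small wording slip: $|E_{\bm P}|=s$ only needs $s\leq m$, while the condition $s\leq n$ is what makes \cref{prob:prop-fixed-row-subset} applicable.
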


\begin{proof}
The assertion is immediate for \(\bm P=\bm 0\).  Otherwise,
\(\operatorname{tr}(\bm P)>0\), and
\begin{align}
  \zeta_{\bm P}\operatorname{tr}(\bm P)s
  &\leq \sum_{i\in E_{\bm P}}\langle\bm A_i,\bm P\rangle
   =\operatorname{tr}(\bm P\bm\Phi_{E_{\bm P}}^*\bm\Phi_{E_{\bm P}})
  \notag\\
  &\leq\operatorname{tr}(\bm P)
       \|\bm\Phi_{E_{\bm P}}\|_{\mathrm{op}}^2
  \leq\operatorname{tr}(\bm P)
       \left(n+C_M\sqrt{ns\Lambda_{n,m,\eta}} +C_M s\Lambda_{n,m,\eta}\right).
  \label{prob:eq-psd-cardinality-contradiction}
\end{align}
Dividing by \(\operatorname{tr}(\bm P)\) proves the result.
\end{proof}

\subsection{Tangent Energy on the Erased Rows and Erasure-Robust Tangent Injectivity}
We next show that erasing a small set of rows preserves tangent injectivity. The argument starts from a fixed-target
coherence estimate controlling the largest squared magnitude among the masked
Fourier coefficients.
\begin{lemma}[Fixed-target coherence]
\label{prob:lem-coherence}
Let \(\bm{x}_\star\in\mathbb C^n\) be a deterministic unit vector.
Then, with probability at least \(1-\eta\),
\begin{equation}
  \mu_\star:=\max_{1\le \ell\le L,0\le k\le n-1}|\bm{a}_{\ell k}^*\bm{x}_\star|^2
  \le\mustar.
  \label{prob:eq-coherence}
\end{equation}
\end{lemma}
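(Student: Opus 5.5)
Fix a pair \((\ell,k)\). Then
\[
\bm a_{\ell k}^*\bm x_\star=\bm f_k^*\bm D_\ell\bm x_\star=\sum_{j=0}^{n-1} e^{2\pi i jk/n}\,\overline{d_{\ell,j}}\,x_{\star,j}
\]
up to the sign convention of the Fourier matrix. This is a sum of independent, mean-zero, bounded random variables \(Z_j=\overline{d_{\ell,j}}\,e^{2\pi i jk/n}x_{\star,j}\) with \(|Z_j|\le M|x_{\star,j}|\). Because \(\bm x_\star\) is deterministic and independent of the masks, the coefficients are fixed. The plan is a Hoeffding-type tail bound for each fixed \((\ell,k)\), followed by a union bound over the \(m=nL\) indices.

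\textbf{Step 1: tail bound for each fixed index.} Split \(S=\sum_j Z_j\) into its real and imaginary parts, \(S=S_R+iS_I\). Each of \(S_R,S_I\) is a sum of independent, mean-zero, real random variables bounded by \(M|x_{\star,j}|\), so each variable ranges over an interval of length \(2M|x_{\star,j}|\). Hoeffding's inequality gives
\[
\mathbb P\{|S_R|\ge t\}\le 2\exp\!\Big(-\frac{2t^2}{\sum_j 4M^2|x_{\star,j}|^2}\Big)=2\exp\!\Big(-\frac{t^2}{2M^2}\Big),
\]
using \(\|\bm x_\star\|_2=1\). The same bound holds for \(S_I\). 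Since \(|S|^2=S_R^2+S_I^2\), the event \(|S|^2\ge 2t^2\) forces \(|S_R|\ge t\) or \(|S_I|\ge t\). Hence
\[
\mathbb P\{|S|^2\ge 2t^2\}\le 4e^{-t^2/(2M^2)}.
\]
Take \(t^2=2M^2\log(4m/\eta)\). Then \(2t^2=4M^2\log(4m/\eta)\), which is exactly \(\mustar\), and the failure probability for one index is at most \(4\cdot\eta/(4m)=\eta/m\).

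\textbf{Step 2: union bound.} Summing over the \(m\) indices \((\ell,k)\) gives total failure probability at most \(\eta\). This proves \cref{prob:eq-coherence}.

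The only delicate point is the constant bookkeeping: the Hoeffding constant and the real/imaginary splitting must combine to give exactly \(4M^2\log(4m/\eta)\). The symmetry assumption \(d\stackrel{\mathrm d}{=}-d\) is not needed here, only \(\mathbb E d=0\) and \(|d|\le M\).
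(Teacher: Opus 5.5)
Your proof is correct and is essentially the paper's proof. You apply Hoeffding's inequality separately to the real and imaginary parts of $\bm f_k^*\bm D_\ell\bm x_\star$, get a tail of the form $4e^{-(\cdot)}$, and finish with a union bound over the $m$ rows; your choice $t^2=2M^2\log(4m/\eta)$ is the same threshold as the paper's $tM$ with $t=2\sqrt{\log(4m/\eta)}$. One cosmetic slip: the summands are $\overline{(\bm f_k)_j}\,d_{\ell,j}\,x_{\star,j}$ rather than involving $\overline{d_{\ell,j}}$, which changes nothing since only $\mathbb E d=0$ and $|d|\le M$ are used.
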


\begin{proof}
For fixed \((\ell,k)\),
\(\bm{a}_{\ell k}^*\bm{x}_\star
=\bm{f}_k^*\bm{D}_\ell\bm{x}_\star\) is a sum of independent,
centered complex variables, and each summand is bounded by
\(M|\bm{x}_\star(j)|\).  Applying Hoeffding's inequality to the real and imaginary
parts gives
\begin{equation}
  \mathbb P\{|\bm{a}_{\ell k}^*\bm{x}_\star|\ge tM\}
  \le4e^{-t^2/4}.
  \label{prob:eq-complex-hoeffding}
\end{equation}
Taking \(t=2\sqrt{\log (4m/\eta)}\) and using a union bound over \(m\) rows gives the desired result.
\end{proof}

Combining this coherence estimate with the uniform row-subset bound in \cref{prob:prop-fixed-row-subset} gives a uniform upper bound on the tangent energy carried by any set of at most \(s\) rows.

Every \(\HT\in\tang\) can be represented as
\begin{equation}
  \HT=\bm{x}_\star\bm{h}^*+\bm{h}\bm{x}_\star^*,
  \qquad \bm{x}_\star^*\bm{h}\in\mathbb R,
  \qquad \|\bm{h}\|_2^2\le\frac12\|\HT\|_F^2.
  \label{prob:eq-tangent-representation}
\end{equation}
For each row index \(i\),
\begin{equation}
  \langle \bm{A}_i,\HT\rangle
  =2\operatorname{Re}\left(
      (\bm{a}_i^*\bm{x}_\star)\overline{(\bm{a}_i^*\bm{h})}
    \right).
  \label{prob:eq-tangent-row}
\end{equation}
On the event where \cref{prob:eq-fixed-row-subset} and
\eqref{prob:eq-coherence} both hold,
\eqref{prob:eq-tangent-representation} and \eqref{prob:eq-tangent-row} yield,
simultaneously for all \(E\subset[m]\) with \(|E|=s\) and all
\(\HT\in\tang\),
\begin{align}
  &\frac1m\sum_{i\in E}\langle \bm{A}_i,\HT\rangle^2
  \le\frac{4 \cdot \mustar}{m}\|\bm{\Phi}_E\bm{h}\|_2^2
  \nonumber\\
  &\le\frac{C_M \ueta}{L}
       \left(1+\sqrt{\frac{s\Lambda_{n,m,\eta}}{n}}
       +\frac{s\Lambda_{n,m,\eta}}{n}\right) 
  \cdot\|\HT\|_F^2.
  \label{prob:eq-fixed-cardinality-erasure}
\end{align}

To show that tangent injectivity survives these erasures, we compare the preceding energy bound with the following lower bound for the full measurements.

\begin{proposition}[Robust Injectivity~{\cite[Proposition~2, p.~9]{huangli2026optimal}}]
\label{prop:tangent-lower}
Under the bounded CDP mask assumptions, with probability at least $1-2n\exp\left(-c\frac{L}{M^8}\right)$, the estimate
\begin{equation}
  \frac{\|\mathcal A(\HT)\|_2^2}{m}
  \geq\frac14\|\HT\|_F^2.
  \label{eq:prelim-tangent-lower}
\end{equation}
holds simultaneously for every \(\HT\in\tang\), where  \(c>0\) is an absolute
constant.
\end{proposition}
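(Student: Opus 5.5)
The plan is to reduce the statement to a lower-tail estimate for a sum of $L$ independent positive semidefinite random matrices, one per mask, acting on the real $2n$-dimensional space that parametrizes $\tang$. Using the representation \eqref{prob:eq-tangent-representation}, identify $\HT=\bm x_\star\bm h^*+\bm h\bm x_\star^*$ with $\bm h\in\mathbb C^n\cong\mathbb R^{2n}$. By \eqref{prob:eq-tangent-row},
\[
\frac{\|\mathcal A(\HT)\|_2^2}{m}=\frac1L\sum_{\ell=1}^L Q_\ell(\bm h),\qquad
Q_\ell(\bm h)=\frac1n\sum_{k=0}^{n-1}\Bigl(2\operatorname{Re}\bigl((\bm a_{\ell k}^*\bm x_\star)\overline{(\bm a_{\ell k}^*\bm h)}\bigr)\Bigr)^2 .
\]
Each $Q_\ell$ is a real quadratic form $\bm h^{\top}\bm Z_\ell\bm h$, where the $\bm Z_\ell\succeq\bm 0$ are i.i.d.\ $2n\times 2n$ matrices, since different masks are independent. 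The goal is then to show that, with the stated probability, $\lambda_{\min}\bigl(\tfrac1L\sum_\ell\bm Z_\ell\bigr)$ on the relevant subspace dominates $\tfrac14\|\HT\|_F^2$.

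\emph{Step 1 (expectation).} Compute $\mathbb E Q_\ell(\bm h)$ by expanding $\bm a_{\ell k}^*\bm x=\sum_j \overline{F_{jk}}\,\overline{d_j}x_j$ and using $\sum_k F_{jk}\overline{F_{j'k}}=n\delta_{jj'}$ together with the moment conditions of \cref{ass:mask}. The conditions $\mathbb E d=\mathbb E d^2=0$ kill all cross terms, and $\mathbb E|d|^2=1$, $\mathbb E|d|^4=2$ produce the isotropic term plus a nonnegative diagonal correction. This is the standard CDP identity $\frac1n\mathbb E\sum_k \bm A_{\ell k}\otimes\bm A_{\ell k}$, as in the computations of~\cite{candes2015cdp}. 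I expect it to give $\mathbb E Q_\ell(\bm h)\geq \|\HT\|_F^2$ for every $\HT\in\tang$, up to the real-part constraint $\bm x_\star^*\bm h\in\mathbb R$ that removes the trivial phase direction.

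\emph{Step 2 (fourth-moment / small-ball control).} For each fixed $\bm h$, bound the second moment
\[
\mathbb E Q_\ell(\bm h)^2\leq K\,\bigl(\mathbb E Q_\ell(\bm h)\bigr)^2,\qquad K\lesssim M^8 .
\]
Here $Q_\ell$ is a degree-four polynomial in the bounded, symmetric, independent entries $d_{\ell,j}$. Its second moment is therefore an eighth-moment computation, which can be bounded using $|d|\leq M$ and the same Fourier orthogonality as in Step~1. This hypercontractivity-type estimate is where the factor $M^8$ in the exponent should come from.

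\emph{Step 3 (uniform lower tail).} Apply a matrix lower-tail inequality for sums of i.i.d.\ PSD random matrices that needs only such a second-moment ratio rather than an almost-sure bound on $\|\bm Z_\ell\|_{\mathrm{op}}$. The intended tool is Oliveira's inequality for lower tails of random quadratic forms. It yields, with probability at least $1-2n\exp(-cL/K)$,
\[
\bm h^{\top}\Bigl(\frac1L\sum_\ell\bm Z_\ell\Bigr)\bm h\geq\tfrac12\,\bm h^{\top}\mathbb E\bm Z_1\bm h\quad\text{for all } \bm h,
\]
where the factor $2n$ is the ambient real dimension. Combining this with Step~1 gives $\frac1m\|\mathcal A(\HT)\|_2^2\geq\frac12\|\HT\|_F^2\geq\frac14\|\HT\|_F^2$, with room to spare for the normalization in \eqref{prob:eq-tangent-representation}.

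\emph{Main obstacle.} I expect the main obstacle to be Step~3, and specifically the fact that the $\bm Z_\ell$ are not uniformly bounded. Deterministically, $\|\bm Z_\ell\|_{\mathrm{op}}$ can be as large as order $M^4 n$, because a single Fourier coefficient $\bm a_{\ell k}^*\bm x_\star$ can be of size $\sqrt n$. A standard matrix Chernoff bound would therefore cost a factor of $n$ and destroy the $L\asymp\log n$ regime. This is exactly why I would aim for a fourth-moment-based lower-tail inequality instead. The care required lies in verifying the uniform ratio bound of Step~2 over all directions $\bm h$, including directions strongly correlated with $\bm x_\star$. For those directions the diagonal correction from $\mathbb E|d|^4=2$ matters, and one must check that $\mathbb E Q_\ell$ does not degenerate.
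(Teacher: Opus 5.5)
Your Steps~1 and~2 are sound. Under \cref{ass:mask} you get $\frac1n\sum_k\mathbb E\langle\bm A_{\ell k},\HT\rangle^2=\|\HT\|_F^2+\operatorname{tr}(\HT)^2$. H\"older, together with the sub-Gaussian tails of $\bm a_{\ell k}^*\bm x_\star$ and $\bm a_{\ell k}^*\bm h$ from \eqref{prob:eq-complex-hoeffding}, gives $\mathbb E Q_\ell(\bm h)^2\lesssim M^8\|\bm h\|_2^4$.

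The gap is Step~3: Oliveira's inequality does not give the bound you claim. Under $\mathbb E(\bm h^\top\bm Z\bm h)^2\le K(\bm h^\top\mathbb E\bm Z\bm h)^2$, it gives a relative deviation of order $\sqrt{K(p+\log(1/\delta))/L}$, where $p=2n$ is the ambient dimension. The dimension enters additively with $\log(1/\delta)$, not as a prefactor in $2n\,e^{-cL/K}$. You would therefore need $L\gtrsim Kn$ masks.

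No inequality that uses only a moment ratio can do better. Take $\bm Z=\bm g\bm g^\top$ with $\bm g$ standard Gaussian in $\mathbb R^p$. Then $K=3$, yet $\frac1L\sum_\ell\bm Z_\ell$ is singular for every $L<p$, while $2p\,e^{-cL/3}$ is already small for $L\asymp\log p$. A prefactor $2n$ with an exponent linear in $L$ is the signature of matrix Chernoff. Matrix Chernoff needs an almost-sure bound on $\lambda_{\max}(\bm Z_\ell)$, which is exactly what you tried to avoid.

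The missing idea is to obtain that bound by truncating with respect to the fixed target, not the direction. Let $\bm Z_\ell^\tau$ keep only the rows with $|\bm a_{\ell k}^*\bm x_\star|^2\le CM^2\tau$. This indicator does not depend on $\HT$ and discards only nonnegative terms, so $\bm Z_\ell\succeq\bm Z_\ell^\tau$. By Parseval,
\[
\bm h^\top\bm Z_\ell^\tau\bm h\le\frac{4CM^2\tau}{n}\|\bm F^*\bm D_\ell\bm h\|_2^2\le 4CM^4\tau\|\bm h\|_2^2,
\]
so $\lambda_{\max}(\bm Z_\ell^\tau)\lesssim M^4\tau$ deterministically.

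Your moment computations then control the truncation bias. Apply Cauchy--Schwarz against the Hoeffding tail, together with the same H\"older bound applied to a single row's fourth moment. This bounds the discarded expected energy by $CM^4e^{-c\tau}\|\HT\|_F^2$, which is at most $\frac12\|\HT\|_F^2$ once $\tau\gtrsim\log M$. Matrix Chernoff on the $(2n-1)$-dimensional real space $\tang$ then gives $\frac1m\|\mathcal A(\HT)\|_2^2\ge\frac14\|\HT\|_F^2$. The failure probability is at most $2n\exp(-cL/(M^4\tau))\le 2n\exp(-c'L/M^8)$.

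The paper does not reprove \cref{prop:tangent-lower}; it imports it from Huang--Li. The form of the stated probability matches this truncation-plus-Chernoff route.
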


Recall that \(\Lambda_{n,m,\eta}=\Lameta\).  Comparing
\eqref{prob:eq-fixed-cardinality-erasure} with \cref{eq:prelim-tangent-lower} shows that the tangent energy on the erased rows
can be absorbed by taking
\[
  L\gtrsim\ueta,
  \qquad s\asymp\frac{n}{\Lameta}.
\]
Indeed, these choices can make the right-hand side of
\eqref{prob:eq-fixed-cardinality-erasure} at most
\(\|\HT\|_F^2/8\).
The choice of \(s\) also yields the scale-free bound from \cref{prob:eq-selected-zeta-bound}:
$$
\sup_{\bm P\succeq\bm 0}\zeta_{\bm P}\lesssim\Lameta.
$$
The following theorem formalizes both conclusions.

\begin{proposition}[Erasure-robust tangent injectivity]
\label{thm:tangent-lower-outside-large}
Let \(0<\eta<1/4\), and set
\(s=\lfloor n/\Lambda_{n,m,\eta}\rfloor\).  There are
constants \(C_M,c_M>0\) such that, if
\begin{equation}
  n\geq 2\Lambda_{n,m,\eta},
  \qquad L\ge C_M\ueta,
  \label{prob:eq-a3-assumptions}
\end{equation}
then with probability at least
\begin{equation}
  1-2\eta-C_Mn e^{-c_ML},
  \label{prob:eq-a3-probability}
\end{equation}
the following bounds hold simultaneously for every \(\HT\in\tang\) and every
\(\bm P\succeq\bm 0\),
\begin{equation}
  \frac1m\sum_{i\notin E_{\bm{P}}}
       \langle \bm{A}_i,\HT\rangle^2
  \ge\frac18\|\HT\|_F^2,
  \label{prob:eq-a3-conclusion}
\end{equation}
\begin{equation}
  \max_{i\notin E_{\bm P}}\langle \bm A_i,\bm P\rangle
  \leq C_M\Lambda_{n,m,\eta} \operatorname{tr}(\bm P).
  \label{prob:eq-zeta}
\end{equation}
Here $s \in [2,n]$, and $E_{\bm P}$  is defined in \eqref{eq:defofzetaandE}.
\end{proposition}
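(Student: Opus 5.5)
We intersect three events and argue deterministically on their intersection. The first is the event of \cref{prob:prop-fixed-row-subset} with $s=\lfloor n/\Lambda_{n,m,\eta}\rfloor$, which has probability at least $1-\eta$. The second is the coherence event \eqref{prob:eq-coherence} of \cref{prob:lem-coherence}, also of probability at least $1-\eta$. The third is the robust injectivity event \eqref{eq:prelim-tangent-lower} of \cref{prop:tangent-lower}, which fails with probability at most $2n\exp(-cL/M^8)$; this can be absorbed into $C_Mne^{-c_ML}$. A union bound then gives \eqref{prob:eq-a3-probability}.

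Since $n\ge 2\Lambda_{n,m,\eta}$, we get $s\ge 2$, and $s\le n$ because $\Lambda_{n,m,\eta}\ge1$, so the proposition applies. Moreover $s\Lambda_{n,m,\eta}\le n$ and $s\ge n/(2\Lambda_{n,m,\eta})$.

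For \eqref{prob:eq-zeta}, we apply \cref{prob:lem-psd-size}. Its right-hand side is at most $n(1+2C_M)$, because $\sqrt{ns\Lambda_{n,m,\eta}}\le n$ and $s\Lambda_{n,m,\eta}\le n$. Dividing by $s\ge n/(2\Lambda_{n,m,\eta})$ gives $\zeta_{\bm P}\le 2(1+2C_M)\Lambda_{n,m,\eta}$ uniformly over $\bm P\succeq\bm 0$. Combined with the cutoff \eqref{prob:eq-psd-selection-cutoff}, this yields \eqref{prob:eq-zeta}. If $\bm P=\bm 0$, the bound is trivial.

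For \eqref{prob:eq-a3-conclusion}, we take any $\bm P\succeq\bm0$ and set $E=E_{\bm P}$, so $|E|=s$. The key point is that the bound \eqref{prob:eq-fixed-cardinality-erasure} holds uniformly over all $|E|=s$, so the data-dependent choice of $E_{\bm P}$ is allowed. Substituting $s\Lambda_{n,m,\eta}/n\le1$, the erased tangent energy satisfies
\[
\frac1m\sum_{i\in E_{\bm P}}\langle\bm A_i,\HT\rangle^2\le \frac{3C_M\log(4m/\eta)}{L}\|\HT\|_F^2 .
\]
Choosing the constant in the hypothesis $L\ge C_M'\log(4m/\eta)$ large enough makes this at most $\tfrac18\|\HT\|_F^2$. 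Subtracting it from the full energy lower bound $\tfrac14\|\HT\|_F^2$ of \eqref{eq:prelim-tangent-lower} leaves at least $\tfrac18\|\HT\|_F^2$ on $E_{\bm P}^c$, which is \eqref{prob:eq-a3-conclusion}.

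The only delicate point is justifying \eqref{prob:eq-fixed-cardinality-erasure} itself. This requires
\[
\langle\bm A_i,\HT\rangle^2\le4|\bm a_i^*\bm x_\star|^2|\bm a_i^*\bm h|^2\le 4\mu_\star|\bm a_i^*\bm h|^2 ,
\]
which follows from \eqref{prob:eq-tangent-row} by Cauchy--Schwarz. Summing over $i\in E$ gives $4\mu_\star\|\bm\Phi_E\bm h\|_2^2$, which is bounded by $4\mu_\star\|\bm\Phi_E\|_{\mathrm{op}}^2\|\bm h\|_2^2$. We then use $\|\bm h\|_2^2\le\tfrac12\|\HT\|_F^2$ and $m=nL$.

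Uniformity in both $\HT$ and $\bm P$ is automatic, since every estimate holds deterministically on the intersected event. No step is a real obstacle; the only care needed is in bookkeeping the constants, so that the constant in the condition on $L$ dominates the constant $C_M$ coming from the row-subset bound and the coherence estimate.
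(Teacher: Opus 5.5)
Your proposal is correct and follows the paper's proof essentially step for step. You intersect the row-subset, coherence, and robust-injectivity events. You then divide the bound of \cref{prob:lem-psd-size} by $s\ge n/(2\Lambda_{n,m,\eta})$ to get the uniform $\zeta_{\bm P}$ bound. Finally, you apply the uniform-in-$E$ erasure estimate \eqref{prob:eq-fixed-cardinality-erasure} with $s\Lambda_{n,m,\eta}/n\le 1$, so the erased tangent energy is at most half the full lower bound. Your Cauchy--Schwarz justification of \eqref{prob:eq-fixed-cardinality-erasure} is exactly the reasoning the paper leaves implicit.
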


\begin{proof}
First, \cref{prop:tangent-lower} holds with probability at least
\(1-C_Mn e^{-c_ML}\), where
\begin{equation}
  \frac{\|\mathcal A(\HT)\|_2^2}{m}
  \geq\frac14\|\HT\|_F^2
  \qquad\text{for every }\HT\in\tang.
\end{equation}
Second, the uniform row-subset bound \cref{prob:eq-fixed-row-subset} and the
coherence bound \cref{prob:eq-coherence} hold simultaneously with probability at
least \(1-2\eta\). 
Since \(n/\Lambda_{n,m,\eta}\geq2\), we have $s \in [2,n]$, and the elementary
inequality \(\lfloor x\rfloor\geq x/2\) for \(x\geq2\) yields
\begin{equation}
  \frac{n}{2\Lambda_{n,m,\eta}}
  \leq s\leq \frac{n}{\Lambda_{n,m,\eta}},
  \qquad
  \frac{n}{s}\leq2\Lambda_{n,m,\eta},
  \qquad
  \frac{s\Lambda_{n,m,\eta}}{n}\leq1.
  \label{prob:eq-rounded-sparsity}
\end{equation}
Dividing
\eqref{prob:eq-selected-zeta-bound} by \(s\) and using
\eqref{prob:eq-rounded-sparsity} gives
\begin{equation}
  \zeta_{\bm P}
  \leq \frac ns
       +C_M'\sqrt{\frac{n\Lambda_{n,m,\eta}}s}
       +C_M'\Lambda_{n,m,\eta}
  \leq C_M'\Lambda_{n,m,\eta},
  \label{prob:eq-rounded-zeta}
\end{equation}
where \(C_M' > 0\) is a constant depending only on \(M\).  
Furthermore, \eqref{prob:eq-rounded-sparsity} and
\eqref{prob:eq-fixed-cardinality-erasure} give
\begin{equation}
  \frac1m\sum_{i\in E_{\bm P}}\langle\bm A_i,\HT\rangle^2
  \leq \frac{3 C_M'\ueta}{L}\|\HT\|_F^2
  \leq \frac18\|\HT\|_F^2,
\end{equation}
after increasing the constant \(C_M\)  in
\eqref{prob:eq-a3-assumptions}.  
Subtracting
the erased-row energy from the full tangent lower bound proves
\eqref{prob:eq-a3-conclusion}; the upper bound in \eqref{prob:eq-zeta} follows
from \eqref{prob:eq-psd-selection-cutoff} and \eqref{prob:eq-rounded-zeta}.
\end{proof}

\subsection{PSD \(\ell_2\) Bound Outside the Erased Rows}
Combining the infinity-norm bound in \eqref{prob:eq-zeta} with the existing \(\ell_1\) bound on PSD matrices, we can now establish a uniform \(\ell_2\) bound outside the erased rows.
\begin{lemma}[PSD \(\ell_2\) bound outside the erased rows]
\label{lem:psd-l2-after-row-removal}
Suppose that the bound \eqref{prob:eq-zeta} from
\cref{thm:tangent-lower-outside-large} holds. Then there exists a constant
\(C_M>0\) such that, for every
\(\bm P\succeq\bm 0\),
\begin{equation}
  \frac1m\|\mathcal A_{E_{\bm P}^c}(\bm P)\|_2^2
  \leq C_M\Lambda_{n,m,\eta}\operatorname{tr}(\bm P)^2.
  \label{eq:good-psd-bound}
\end{equation}
\end{lemma}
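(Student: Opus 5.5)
The plan is to interpolate between an \(\ell_\infty\) bound and an \(\ell_1\) bound on the retained measurements, as sketched in the proof overview. Since \(\bm P\succeq\bm 0\) and \(\bm A_i=\bm a_i\bm a_i^*\succeq\bm 0\), every measurement \(\langle\bm A_i,\bm P\rangle\) is nonnegative. This gives
\[
\|\mathcal A_{E_{\bm P}^c}(\bm P)\|_2^2
\le \|\mathcal A_{E_{\bm P}^c}(\bm P)\|_\infty\,\|\mathcal A_{E_{\bm P}^c}(\bm P)\|_1 .
\]
The \(\ell_\infty\) factor is exactly \eqref{prob:eq-zeta}, which gives \(\|\mathcal A_{E_{\bm P}^c}(\bm P)\|_\infty\le C_M\Lambda_{n,m,\eta}\operatorname{tr}(\bm P)\).

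For the \(\ell_1\) factor, I would drop the restriction to \(E_{\bm P}^c\), since all terms are nonnegative, and bound the full sum deterministically. Nonnegativity gives
\[
\|\mathcal A(\bm P)\|_1=\sum_{i=1}^m\langle\bm A_i,\bm P\rangle=\operatorname{tr}\Bigl(\bm P\sum_{\ell=1}^L\bm D_\ell^*\bm F\bm F^*\bm D_\ell\Bigr).
\]
Since \(\bm F\bm F^*=n\bm I\), this equals \(n\operatorname{tr}\bigl(\bm P\sum_\ell\bm D_\ell^*\bm D_\ell\bigr)\). The matrix \(\sum_\ell\bm D_\ell^*\bm D_\ell\) is diagonal with entries \(\sum_\ell|d_{\ell,j}|^2\le LM^2\), so \(\|\mathcal A(\bm P)\|_1\le nLM^2\operatorname{tr}(\bm P)=M^2 m\operatorname{tr}(\bm P)\). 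This holds almost surely, with no extra probability cost, which is convenient because the lemma assumes only \eqref{prob:eq-zeta}. (Alternatively one could cite the concentration of \(\frac1L\sum_\ell|d_{\ell j}|^2\) as in \cite[Lem.~3.3]{candes2015cdp}, but the crude bound \(M^2\) suffices since constants may depend on \(M\).)

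Multiplying the two factors and dividing by \(m\) gives \(\frac1m\|\mathcal A_{E_{\bm P}^c}(\bm P)\|_2^2\le M^2C_M\Lambda_{n,m,\eta}\operatorname{tr}(\bm P)^2\), which is \eqref{eq:good-psd-bound} after renaming the constant. The case \(\bm P=\bm 0\) is trivial. I do not expect a real obstacle. The only point to check is that \(\bm F\bm F^*=n\bm I\), which holds since \(\bm F\) is square with \(\bm F^*\bm F=n\bm I\). The \(\ell_1\) bound is what keeps the estimate scale-free; no uniformity argument is needed beyond what \eqref{prob:eq-zeta} already provides for all \(\bm P\) simultaneously.
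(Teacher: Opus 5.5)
Your proposal is correct and follows the paper's proof: you bound the retained entries by \eqref{prob:eq-zeta}, pass to the full $\ell_1$ sum using nonnegativity, and evaluate that sum deterministically via $\bm F\bm F^*=n\bm I$ and $\bm D_\ell^*\bm D_\ell\preceq M^2\bm I$. This yields $\frac1m\|\mathcal A_{E_{\bm P}^c}(\bm P)\|_2^2\le M^2C_M\Lambda_{n,m,\eta}\operatorname{tr}(\bm P)^2$, matching the paper's constant bookkeeping.
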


\begin{proof}
For every \(i\notin E_{\bm P}\), nonnegativity and
\eqref{prob:eq-zeta} give
\[
  0\leq\langle\bm A_i,\bm P\rangle
  \leq C_M'\Lambda_{n,m,\eta}\operatorname{tr}(\bm P).
\]
Hence, using Fourier Parseval, \(m=nL\), and
\(\bm D_\ell^*\bm D_\ell\preceq M^2\bm I\),
\begin{align}
  \frac1m\|\mathcal A_{E_{\bm P}^c}(\bm P)\|_2^2
  &\leq C_M'\Lambda_{n,m,\eta}\operatorname{tr}(\bm P)
       \frac1m\sum_{i=1}^m\langle\bm A_i,\bm P\rangle \notag\\
  &=C_M'\Lambda_{n,m,\eta}\operatorname{tr}(\bm P)
       \frac1L\sum_{\ell=1}^L
       \operatorname{tr}(\bm D_\ell^*\bm D_\ell\bm P) \notag\\
  &
   \leq C_M\Lambda_{n,m,\eta}\operatorname{tr}(\bm P)^2.
\end{align}
\end{proof}

\subsection{Proof of the Main Theorem}
The preceding estimates provide a tangent lower bound and a PSD upper bound
outside the erased rows. To complete the lower isometry argument, we next
introduce a pure-range certificate that relates the PSD normal component to the
measurement residual and the tangent component. Its construction adapts the
improved golfing scheme of Huang and
Li~\cite[Definition~1 and Propositions~7]{huangli2026optimal}; see
\cref{sec:dual_cert_proof} for the proof.

\begin{proposition}[Pure-range certificate]
\label{prop:pure-range-certificate}
Suppose that the bounded CDP mask assumptions hold. Let \(n\geq2\), and fix a unit vector
\(\bm x_\star\in\mathbb C^n\). Fix \(\omega\geq1\) and \(0<\varepsilon<1\). 
There exists a constant $C=C(M,\omega)$, depending only on $M$ and $\omega$, such that if
$
L
 =
 \left\lceil
 C(M,\omega)
 \bigl(\log n+\log(1/\varepsilon)\bigr)
\right\rceil$,
then with probability at least \(1-n^{-\omega}\),
there is a vector \(\bm q\in\mathbb R^{nL}\) such that
\(\bm Z=(nL)^{-1/2}\mathcal A^*(\bm q)\) satisfies
\begin{equation}
  \|\bm q\|_2\lesssim_{M,\omega}
   1+\frac{\log(1/\varepsilon)}{\log n}
  ,\qquad
  \|\proj{\tang}\bm Z\|_F\leq\varepsilon,\qquad
  \proj{\tangp}\bm Z\preceq-\frac12(\bm I-\bm X_\star).
  \label{eq:prelim-pure-range-certificate}
\end{equation}
\end{proposition}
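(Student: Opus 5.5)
The certificate will be built by an adaptation of the improved golfing scheme of Huang and Li~\cite{huangli2026optimal}. The adaptation has to deliver two things. First, the output must lie exactly in the range of \(\mathcal A^*\), so that \(\bm Z=(nL)^{-1/2}\mathcal A^*(\bm q)\) with an explicit \(\bm q\) whose \(\ell_2\) norm we track. Second, the tangent residual must be driven down to the prescribed level \(\varepsilon\), rather than merely to the moderate level that exact-recovery proofs need.

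\textbf{Partitioning the masks.} Split the \(L\) masks into \(J\) disjoint batches \(\mathcal B_1,\dots,\mathcal B_J\), with each batch large enough (of size \(\asymp_{M,\omega}\log n\) for the first few batches, then a fixed constant size) that the restricted operator \(\mathcal A_j\) obeys the standard local estimates with probability \(1-n^{-\omega-1}\). These estimates are:
\[
\|\proj{\tang}\tfrac{1}{|\mathcal B_j|n}\mathcal A_j^*\mathcal A_j\proj{\tang}-\proj{\tang}\mathcal S\proj{\tang}\|\leq \tfrac12 ,
\]
where \(\mathcal S\) is the expected operator \(\bm H\mapsto \bm H+\operatorname{tr}(\bm H)\bm I\) (using \(\mathbb E|d|^4=2\)), together with a fixed-matrix operator-norm deviation bound for \(\proj{\tangp}\) of the updates.

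\textbf{Golfing iteration.} Start from the target tangent component \(\bm W_0=-\proj{\tang}(\bm I-\bm X_\star)\)-type initialization. Following Huang--Li, the ideal certificate is \(\bm Y=\mathcal S^{-1}\)-adjusted so that its expected normal part equals \(-(\bm I-\bm X_\star)\) and its tangent part vanishes. Define iteratively
\[
\bm Y_j=\bm Y_{j-1}+\tfrac{1}{|\mathcal B_j|n}\mathcal A_j^*\mathcal A_j\mathcal S^{-1}\bm W_{j-1},\qquad \bm W_j=\bm W_0-\proj{\tang}\bm Y_j .
\]
Each step contracts \(\|\bm W_j\|_F\le \tfrac12\|\bm W_{j-1}\|_F\). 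Hence after \(J\asymp \log n+\log(1/\varepsilon)\) steps we have \(\|\proj{\tang}\bm Y_J\|_F\le\varepsilon\), which accounts for the \(\log(1/\varepsilon)\) term in \(L\). Since each \(\mathcal A_j^*\mathcal A_j\mathcal S^{-1}\bm W_{j-1}\) has the form \(\mathcal A^*(\cdot)\), we can read off \(\bm q\) directly. Its \(j\)-th block is \(\sqrt{nL}/(|\mathcal B_j|n)\,\mathcal A_j(\mathcal S^{-1}\bm W_{j-1})\). The \(\ell_2\) norm of that block is bounded, via the local upper isometry of \(\mathcal A_j\) on \(\tang\) plus a coherence bound for the identity component, by \(\lesssim\sqrt{L/|\mathcal B_j|}\,2^{-j}\).

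\textbf{Bounding \(\|\bm q\|_2\).} Summing the blocks gives \(\|\bm q\|_2\lesssim\sqrt{L/\log n}\). For \(L\asymp\log n+\log(1/\varepsilon)\), this yields the stated \(1+\log(1/\varepsilon)/\log n\) bound, since \(\sqrt{1+x}\le 1+x\). The constant-size late batches contribute geometrically decaying amounts, so they do not spoil the bound.

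\textbf{Normal component.} It remains to show \(\proj{\tangp}\bm Z\preceq-\tfrac12(\bm I-\bm X_\star)\). Write \(\proj{\tangp}\bm Y_J\) as the expectation term \(-(\bm I-\bm X_\star)\) plus deviations. The first-stage deviation in operator norm must be at most \(1/4\); this is the main obstacle. The initial term \(\mathcal S^{-1}\bm W_0\) involves the identity-type matrix, whose \(\mathcal A_1\)-image is not small in the tangent-coherent sense. The estimate therefore requires a matrix Bernstein/concentration bound over one mask batch conditioned on coherence \(\mu_\star\lesssim\log m\) (\cref{prob:lem-coherence}). This is exactly where Huang--Li's improved golfing scheme (bounding the first few steps with the refined \(\ell_\infty\)-type control of \(\mathcal A_j(\bm W)\)) is needed to keep the batch size \(O(\log n)\).

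Later deviations are at most \(C\|\bm W_{j-1}\|_F\) and sum to at most \(1/4\) after enlarging the early batches. Finally, a union bound over the \(J\le L\) batches gives probability \(1-n^{-\omega}\).
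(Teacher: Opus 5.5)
Your overall architecture (golfing, geometric contraction giving the $\log(1/\varepsilon)$ term, reading $\bm q$ off the batches) matches the paper's. The bookkeeping that makes $L\asymp\log n+\log(1/\varepsilon)$ work, however, is wrong. Your batches are of size $\asymp\log n$ at first and then of constant size, yet you claim each obeys the local estimates with probability $1-n^{-\omega-1}$. That is impossible for a constant-size batch. Even for a residual fixed independently of the batch, the tangent contraction is only guaranteed up to failure probability $\exp(-cL_B)$. The normal operator-norm bound can fail with probability up to $2n\exp(-cL_B\min\{t^2,t\})$, which is vacuous for $L_B=O(1)$ unless the threshold $t$ is of order $\log n$. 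So the final union bound is unavailable, and ``later deviations are at most $C\|\bm W_{j-1}\|_F$'' is not a high-probability statement. Making every batch of size $\asymp\log n$ instead costs $\asymp\log n\cdot\log(1/\varepsilon)$ masks.

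The coefficient bound fails for the same reason. Block $j$ contributes $(L/|\mathcal B_j|)4^{-j}$ to $\|\bm q\|_2^2$, so switching to constant-size batches after $j_0$ steps adds $\asymp L\,4^{-j_0}$. This is $O(L/\log n)$ only if $4^{j_0}\gtrsim\log n$, which requires $\log\log n$ batches of size $\log n$. Otherwise $\|\bm q\|_2\asymp\sqrt{\log n}$, which would cost an extra $\sqrt{\log n}$ in \cref{thm:stability}.

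The missing idea is the paper's adaptive golfing. At level $j$, fresh batches of $L_j=\lceil C_M(1+\log n/a_j)\rceil$ masks are redrawn until both estimates of \cref{cert:lem:fresh-batch} hold, and rejected batches get zero coefficients. The normal thresholds $t_{1,j}=\eta\rho^{-j/2}$ grow while $\|\bm Q_j\|_F\le\rho^j$ decays, with $a_j=\min\{t_{1,j}^2,t_{1,j}\}$. This gives $\sum_jL_j\lesssim\log n+\log(1/\varepsilon)$ and keeps the leakage $\sum_j t_{1,j}\rho^j$ below $\frac14$. It also gives $\sum_j\|\bm\lambda^{(j)}\|_2^2\lesssim\sum_j\rho^{2j}a_j/(n\log n)\lesssim 1/(n\log n)$, since $\rho^{2j}$ beats $a_j\le\eta^2\rho^{-j}$. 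The total number of masks consumed is then controlled by geometric domination of the trial counts plus an MGF/Markov bound.

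The first-step normal estimate, which you flag as the main obstacle, is also not discharged by citing Huang--Li within your setup. Taken literally, your $\bm W_0=-\proj{\tang}(\bm I-\bm X_\star)$ is $\bm 0$, since $\proj{\tang}(\bm I-\bm X_\star)=\bm 0$. More importantly, your updates apply the untruncated $\mathcal A_j^*\mathcal A_j$ to $\mathcal S^{-1}\bm W_{j-1}$, which has an identity component. A matrix Bernstein bound conditioned on $\mu_\star\lesssim\log m$ only yields batch size of order $\log n\cdot\log m$. The Huang--Li estimates in \cref{cert:lem:fresh-batch} concern the \emph{truncated} operators $\mathcal R^{\bm Q}_{B,\tau}$ applied to \emph{tangent} matrices $\bm Q$, so they do not cover your update.

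The paper resolves this, and gets the pure-range property at the same time, by splitting off the identity:
\begin{itemize}
\item it forms $\bm S_0=\frac1{s_0}\sum_{\ell\in\mathcal C}\bm D_\ell^*\bm D_\ell\in\operatorname{range}(\mathcal A^*)$ from a separate batch of $s_0\asymp M^4\omega\log n$ masks; this matrix is diagonal, so scalar concentration gives $\|\bm S_0-\bm I\|_{\mathrm{op}}\le\delta$;
\item it sets $\bm Q_0=\proj{\tang}\bm S_0$, uses updates $\bm U_j=\mathcal R_j^{\bm Q_j}(\bm Q_j)-\operatorname{tr}(\bm Q_j)\bm S_0$, and takes $\bm Z=\bm G_J-\bm S_0$.
\end{itemize}
Truncated operators then act only on tangent residuals, $\proj{\tang}\bm Z=-\bm Q_J$ exactly, and $\proj{\tangp}\bm Z\preceq-\frac34(\bm I-\bm X_\star)+\frac14(\bm I-\bm X_\star)$.
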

The upper bound \(L\lesssim\log n\) in \cref{thm:stability} is used mainly to match the normalization of the certificate and is not an intrinsic limitation of the CDP model.
A pure-range certificate with tangent--normal geometry similar to that in \cref{eq:prelim-pure-range-certificate} was already
constructed by
Cand\`es et al.~\cite[Lemma~3.6 and Section~3.4]{candes2015cdp} with a larger number of masks, \(L\gtrsim\log^4 n\).
\begin{proof}[Proof of \Cref{thm:stability}]
Set \(\eta=n^{-\omega}\). Since \(L\leq C_2\log n\), we have
\begin{equation}
  \Gamma_{n,m}\asymp\log^4(2n),
  \qquad
  \Lambda_{n,m,\eta}\leq C_\omega\Gamma_{n,m},
  \qquad
  \log\frac{4m}{\eta}\leq C_{\omega,C_2}\log n.
  \label{cert:eq:log-comparison}
\end{equation}
Thus, after choosing \(C\) and \(C_1\) sufficiently large, the conditions in
\eqref{cert:eq:stability-assumptions} imply the assumptions of the tangent
estimate and the certificate construction used below.

Fix \(\bm X\succeq\bm 0\), and write
\begin{equation*}
  \bm H=\bm X-\bm X_\star,
  \qquad \HT=\proj{\tang}\bm H,
  \qquad
  \HTp=(\bm I-\bm X_\star)\bm X
       (\bm I-\bm X_\star)\succeq\bm 0.
\end{equation*}
\paragraph{Dual certificate}
Set
\begin{equation} \label{eq:eps}
\varepsilon=c_0/\sqrt{\Gamma_{n,m}},
\end{equation}
where \(c_0>0\), independent of \(n\) and \(m\), will be chosen below.  Then
\(
  \log(1/\varepsilon)
  =\log(1/c_0)+\frac12\log\Gamma_{n,m}.
\)
Since \(L=O(\log n)\), this gives
$
  \log(1/\varepsilon) \lesssim \log n.
$ 
Let
\begin{equation}
  L_0=\left\lceil C(M,\omega)
  \bigl(\log n+\log(1/\varepsilon)\bigr)\right\rceil,
  \label{cert:eq:certificate-required-masks}
\end{equation}
be the number of masks used by \cref{prop:pure-range-certificate}. By choosing
\(C_1\) sufficiently large, \(L_0\leq L\); moreover,
\(L/L_0\leq C'\) because \(L\leq C_2\log n\) and \(L_0\gtrsim\log n\).
Apply \cref{prop:pure-range-certificate} to the first \(L_0\) masks, and let
\(\widetilde{\bm q}_0\in\mathbb R^{nL_0}\) be its coefficient vector. Extend
it to \(\bm q\in\mathbb R^{nL}\) by setting
\begin{equation*}
  \bm q=\sqrt{\frac{L}{L_0}}
  (\widetilde{\bm q}_0,\bm 0).
\end{equation*}
Then
\begin{equation}
  \frac{1}{\sqrt{nL}}\mathcal A^*\bm q
  =\frac{1}{\sqrt{nL_0}}\mathcal A_{[L_0]}^*\widetilde{\bm q}_0,
  \qquad
  \|\bm q\|_2
  =\sqrt{\frac{L}{L_0}}\|\widetilde{\bm q}_0\|_2\leq C'',
  \label{cert:eq:certificate-rescaling-properties}
\end{equation}
where \(\mathcal A_{[L_0]}\) denotes the measurement operator restricted to
the first \(L_0\) masks.
Thus, outside an event of probability at most \(n^{-\omega}\), there exist
\(\bm q\in\mathbb R^m\) and
\begin{equation}
  \bm Z=m^{-1/2}\mathcal A^*\bm q,
  \qquad \|\bm q\|_2\leq C'',
  \qquad \|\proj{\tang}\bm Z\|_F\leq\varepsilon,
  \qquad
  \proj{\tangp}\bm Z\preceq-\frac12(\bm I-\bm X_\star).
  \label{cert:eq:certificate-properties}
\end{equation}
Since \(\HT\in\tang\), \(\HTp\in\tangp\), and \(\HTp\succeq\bm0\),
the certificate properties give
\begin{align}
  \langle\bm Z,\bm H\rangle
  &=\langle\proj{\tang}\bm Z,\HT\rangle
    +\langle\proj{\tangp}\bm Z,\HTp\rangle \notag\\
  &\leq \varepsilon\|\HT\|_F
    -\frac12\langle\bm I-\bm X_\star,\HTp\rangle \notag\\
  &=\varepsilon\|\HT\|_F
    -\frac12\operatorname{tr}(\HTp),
  \label{eq:certificate-pairing}
\end{align}
where the last equality uses
\(\langle\bm X_\star,\HTp\rangle=0\). Rearrangement, followed by the adjoint
identity and Cauchy--Schwarz, yields
\begin{align}
  \frac12\operatorname{tr}(\HTp)
  &\leq \bigl|\langle\bm Z,\bm H\rangle\bigr|
       +\varepsilon\|\HT\|_F \notag\\
  &=\frac1{\sqrt m}
    \bigl|\langle\bm q,\mathcal A(\bm H)\rangle\bigr|
       +\varepsilon\|\HT\|_F \notag\\
  &\leq\frac{\|\bm q\|_2}{\sqrt m}\|\mathcal A(\bm H)\|_2
       +\varepsilon\|\HT\|_F.
  \label{eq:Htp-upper}
\end{align}

\paragraph{Erasure-robust tangent estimate}
Under \eqref{cert:eq:stability-assumptions} with \(C\geq2\) required by \cref{prob:eq-a3-assumptions} in \cref{thm:tangent-lower-outside-large},
\cref{thm:tangent-lower-outside-large} holds with failure probability at
most \(2\eta+C_Mn e^{-cL}\).  Apply it with \(\bm P=\HTp\succeq\bm 0\),
where \(E_{\HTp}\) and \(\zeta_{\HTp}\) are defined in
\eqref{eq:defofzetaandE}.  Then
\begin{equation}
  \|\mathcal A_{E_{\HTp}^c}(\HT)\|_2 / \sqrt m
  \geq\frac1{\sqrt8}\|\HT\|_F,
  \qquad
  \zeta_{\HTp} \lesssim\Gamma_{n,m}.
  \label{cert:eq:tangent-estimate}
\end{equation}
Thus, the tangent injectivity bound remains valid after removing
\(E_{\HTp}\), while every remaining index satisfies
\(\langle\bm A_i,\HTp\rangle
\leq\zeta_{\HTp}\operatorname{tr}(\HTp)\).

\paragraph{Combining the estimates}
Applied with \(\bm P=\HTp\),
\cref{lem:psd-l2-after-row-removal} gives
\begin{equation}
  \|\mathcal A_{E_{\HTp}^c}(\HTp)\|_2 / \sqrt{m}
  \leq C_M'\sqrt{\Gamma_{n,m}}\operatorname{tr}(\HTp).
  \label{eq:Htp-l2-bound}
\end{equation}
Since \(\bm H=\HT+\HTp\), linearity gives
\begin{equation}
  \mathcal A_{E_{\HTp}^c}(\HT)
  =\mathcal A_{E_{\HTp}^c}(\bm H)
   -\mathcal A_{E_{\HTp}^c}(\HTp).
\end{equation}
Therefore, the triangle inequality and monotonicity of the \(\ell_2\)-norm
under restriction imply
\begin{align}
  \|\mathcal A_{E_{\HTp}^c}(\HT)\|_2
  &\leq
  \|\mathcal A_{E_{\HTp}^c}(\bm H)\|_2
  +\|\mathcal A_{E_{\HTp}^c}(\HTp)\|_2 \notag\\
  &\leq\|\mathcal A(\bm H)\|_2
  +\|\mathcal A_{E_{\HTp}^c}(\HTp)\|_2.
  \label{cert:eq:restricted-triangle}
\end{align}
Dividing \eqref{cert:eq:restricted-triangle} by \(\sqrt m\) and using
\eqref{cert:eq:tangent-estimate} and \eqref{eq:Htp-l2-bound} in
the resulting inequality gives
\begin{align}
  \frac1{\sqrt8}\|\HT\|_F
  &\leq\frac{\|\mathcal A(\bm H)\|_2}{\sqrt m}
       + C_M'\sqrt{\Gamma_{n,m}}\operatorname{tr}(\HTp).
  \label{cert:eq:tangent-before-certificate}
\end{align}

Substituting \eqref{eq:Htp-upper} into
\eqref{cert:eq:tangent-before-certificate} and using
\(\|\bm q\|_2\leq C''\), we obtain
\begin{equation}
  \frac1{\sqrt8}\|\HT\|_F
  \leq C_M''\sqrt{\Gamma_{n,m}}\left(
       \frac{\|\mathcal A(\bm H)\|_2}{\sqrt m}
       +\varepsilon\|\HT\|_F\right).
  \label{cert:eq:tangent-absorption}
\end{equation}
Choose \(c_0\) in \eqref{eq:eps} sufficiently small so that the last term is
at most \(\|\HT\|_F/(2\sqrt8)\).  Absorbing this term into the left-hand side
gives
\begin{equation}
  \|\HT\|_F
  \lesssim \sqrt{\Gamma_{n,m}}\,
       \frac{\|\mathcal A(\bm H)\|_2}{\sqrt m}.
  \label{cert:eq:tangent-by-residual}
\end{equation}
Substituting \eqref{cert:eq:tangent-by-residual} into
\eqref{eq:Htp-upper} gives
\begin{equation}
  \operatorname{tr}(\HTp)
  \lesssim \frac{\|\mathcal A(\bm H)\|_2}{\sqrt m}.
  \label{cert:eq:normal-by-residual}
\end{equation}
Finally, \(\|\HTp\|_F\leq\operatorname{tr}(\HTp)\) and
\(\HT\perp\HTp\), so
\begin{equation*}
  \|\bm H\|_F^2
  \lesssim \frac{\Gamma_{n,m}}{m}\|\mathcal A(\bm H)\|_2^2.
\end{equation*}
By \cref{cert:eq:log-comparison},
\(\Gamma_{n,m}\lesssim_{M,\omega}\log^4(2n)\), so the preceding estimate
proves \eqref{eq:measurement-stability}.
Moreover, \(L\geq C_1\log n\) gives
\(n e^{-cL}\leq n^{-\omega}\) after increasing \(C_1\). Hence the probability
in \eqref{cert:eq:stability-probability} follows by a union bound over the
tangent and certificate events.
\end{proof}

\section{Proof of the Uniform Row-Subset Bound}
\label{sec:probabilistic}

This section proves the uniform row-subset operator norm bound stated in
\cref{prob:prop-fixed-row-subset}. The main challenge is the dependence among
rows generated by the same mask. We address this dependence by expressing the
row-subset operator norm as a quadratic chaos process in the independent mask
entries.

Recall that \(\bm F=[\bm f_0,\ldots,\bm f_{n-1}]\) is the unnormalized
Fourier matrix.  Thus,
\begin{equation}
  \bm F^*\bm F=\bm F\bm F^*=n\bm I,
  \qquad |[\bm F]_{jk}|=1.
  \label{prob:eq-fourier}
\end{equation}
The measurement matrix \(\bm\Phi\in\mathbb C^{m\times n}\), where \(m=nL\),
was defined in \eqref{prob:eq-analysis-matrix} by
\((\bm\Phi\bm h)_{\ell k}=\bm a_{\ell k}^*\bm h\), with \(\bm a_{\ell k}=\bm D_\ell^* \bm f_k\).  

\subsection{The Row-subset Bound as a Chaos Process}

For \(E\subset[m]\), let \(\bm S_E\) be the coordinate restriction onto
\(E\), and set \(\bm\Phi_E=\bm S_E\bm\Phi\).
Using adjointness and the variational characterization of the operator
norm, we obtain
\begin{align}
\sup_{\substack{E\subset[m]\\ |E|\le s}}
\|\bm\Phi_E\|_{\mathrm{op}}^2
&=
\sup_{\substack{E\subset[m]\\ |E|\le s}}
\|\bm\Phi^*\bm S_E^*\|_{\mathrm{op}}^2
\nonumber\\
&=
\sup_{\substack{E\subset[m]\\ |E|\le s}}
\sup_{\|\bm w\|_2=1}
\|\bm\Phi^*\bm S_E^*\bm w\|_2^2
\nonumber\\
&=
\sup_{\substack{\bm z\in\mathbb C^m\\
\|\bm z\|_2=1,\ \|\bm z\|_0\le s}}
\|\bm\Phi^*\bm z\|_2^2.
\label{prob:eq-row-subset-duality}
\end{align}
Indeed, \(\bm S_E^*\) lifts a vector in \(\mathbb C^{|E|}\) through its
zero-padded extension to \(\mathbb C^m\), and every \(s\)-sparse vector in \(\mathbb C^{m}\) can be
obtained in this way.  
Hence, uniformly bounding the row-subset operator norms is equivalent to
uniformly bounding \(\|\bm\Phi^*\bm z\|_2\) over \(s\)-sparse unit vectors
\(\bm z\in\mathbb C^m\).

Write
\(\bm z=(\bm z_1^T,\ldots,\bm z_L^T)^T\in\mathbb C^{nL}\), with
\(\bm z_\ell\in\mathbb C^n\), and stack the conjugated mask entries as
\begin{equation}
  \bm\xi=(\overline d_{1,0},\ldots,\overline d_{1,n-1},
           \ldots,
           \overline d_{L,0},\ldots,\overline d_{L,n-1})^T \in\mathbb C^{nL}.
  \label{prob:eq-mask-vector}
\end{equation}

Since $\bm\Phi^*\bm z$ depends linearly on $\bm\xi$, it can be expressed as a matrix-vector product $\bm A_{\bm z}\bm\xi$, where $\bm A_{\bm z}\in\mathbb C^{n\times nL}$ depends on $\bm z$.
A direct
calculation gives
\begin{align}
  \bm\Phi^*\bm z
  &=\sum_{\ell=1}^L\sum_{k=0}^{n-1}
       z_{\ell k}\bm D_\ell^*\bm f_k
    =\sum_{\ell=1}^L
       \operatorname{diag}(\bm F\bm z_\ell)\overline{\bm d}_\ell
    =\bm A_{\bm z}\bm\xi.
  \label{prob:eq-chaos-representation}
\end{align}
Thus all dependence among the Fourier measurements generated by one mask is
encoded in the matrix class \(\{\bm A_{\bm z}\}\), whereas the coordinates
of \(\bm\xi\) are independent.
The deterministic matrix associated with \(\bm z\) can be expressed in block form as
\begin{equation}
  \bm A_{\bm z}
  =\big[\operatorname{diag}(\bm F\bm z_1),\ldots,
        \operatorname{diag}(\bm F\bm z_L)\big]
  \in\mathbb C^{n\times nL}.
  \label{prob:eq-chaos-matrix}
\end{equation}
Combining \eqref{prob:eq-row-subset-duality} and
\eqref{prob:eq-chaos-representation} gives
\begin{equation} \label{eq:chaos_representation}
  \sup_{\substack{E\subset[m]\\ |E|\leq s}}
  \|\bm\Phi_E\|_{\mathrm{op}}^2
  =\sup_{\substack{\bm z\in\mathbb C^m\\
                    \|\bm z\|_2\leq1,\ \|\bm z\|_0\leq s}}
   \|\bm A_{\bm z}\bm\xi\|_2^2.
\end{equation}
Thus, the desired row-subset bound reduces to controlling a quadratic chaos
process indexed by sparse vectors. The next subsection establishes the
required uniform estimate.

\subsection{A Uniform Chaos Estimate}

We use a bound for suprema of chaos processes proposed by Krahmer et
al.~\cite[Theorem~3.1]{krahmer2014suprema}.
Their generic-chaining bound was
developed to establish restricted isometries for structured random matrices,
including partial random circulant and time-frequency matrices.  Although we
do not study the restricted isometry property of the measurement matrix
\(\bm\Phi\) itself, \eqref{prob:eq-row-subset-duality} shows that bounding its
row-subset operator norm \(\|\bm\Phi_E\|_{\mathrm{op}}\) reduces to establishing a
one-sided restricted isometry bound for \(\bm\Phi^*\) over sparse vectors.
Thus, the Krahmer--Mendelson--Rauhut theorem provides the appropriate tool for
this intermediate step.

\begin{theorem}[Krahmer--Mendelson--Rauhut chaos bound]
\label{prob:lem-kmr}
Let \(\bm\xi=(\xi_j)_{j=1}^q\) have independent complex coordinates with
\(\mathbb E\xi_j=0\), \(\mathbb E|\xi_j|^2=1\), and a common
\(K\)-sub-Gaussian bound.
Let \(\mathcal B\) be a bounded class of complex
matrices with \(q\) columns, and set
\begin{align}
  d_F&=\sup_{\bm B\in\mathcal B}\|\bm B\|_F,
  &d_{\mathrm{op}}&=\sup_{\bm B\in\mathcal B}\|\bm B\|_{\mathrm{op}},
  \label{prob:eq-kmr-radii}\\[-1mm]
  E_{\mathcal B}
  &=\gamma_2(\mathcal B,\|\cdot\|_{\mathrm{op}})
    \big(\gamma_2(\mathcal B,\|\cdot\|_{\mathrm{op}})+d_F\big)
    +d_Fd_{\mathrm{op}},
  \label{prob:eq-kmr-mean-parameter}\\
  V_{\mathcal B}
  &=d_{\mathrm{op}}\big(\gamma_2(\mathcal B,\|\cdot\|_{\mathrm{op}})+d_F\big),
  \qquad
  U_{\mathcal B}=d_{\mathrm{op}}^2.
  \label{prob:eq-kmr-parameters}
\end{align}
Then, for every \(t>0\),
\begin{equation}
  \mathbb P\left\{
    \sup_{\bm B\in\mathcal B}
    \left|\|\bm B\bm\xi\|_2^2
      -\mathbb E\|\bm B\bm\xi\|_2^2\right|
    \ge C_K E_{\mathcal B}+t
  \right\}
  \le 2\exp\left[-c_K\min\left\{
       \frac{t^2}{V_{\mathcal B}^2},
       \frac{t}{U_{\mathcal B}}
  \right\}\right].
  \label{prob:eq-kmr-tail}
\end{equation}
\end{theorem}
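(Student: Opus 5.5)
The bound in \cref{prob:lem-kmr} is imported from~\cite[Theorem~3.1]{krahmer2014suprema}, and we intend to use it as a black box. For completeness, here is how we would reprove it; the argument below is adapted from the original, and none of its steps has been re-checked here.

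\emph{Reduction to moments and to the real case.} First, we reduce to real variables. We write \(\xi_j=\xi_j^{(1)}+i\xi_j^{(2)}\) and replace each \(\bm B\) by the real \(2n\times 2q\) block matrix representing it on \(\mathbb R^{2q}\). This changes \(d_F\), \(d_{\mathrm{op}}\) and \(\gamma_2(\mathcal B,\|\cdot\|_{\mathrm{op}})\) only by absolute constants. The coordinates \(\xi_j^{(1)},\xi_j^{(2)}\) are sub-Gaussian, but they need not be independent of each other or of unit variance; we would handle this by whitening each \(2\times2\) block. Second, the tail bound \eqref{prob:eq-kmr-tail} follows, via Markov's inequality, from the \(L_p\) estimate
\begin{equation*}
\Bigl\|\sup_{\bm B\in\mathcal B}\bigl|\|\bm B\bm\xi\|_2^2-\mathbb E\|\bm B\bm\xi\|_2^2\bigr|\Bigr\|_{L_p}
\lesssim_K E_{\mathcal B}+\sqrt p\,V_{\mathcal B}+p\,U_{\mathcal B},\qquad p\geq1.
\end{equation*}
We therefore aim to prove this moment bound.

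\emph{Splitting the chaos.} We write \(\|\bm B\bm\xi\|_2^2-\mathbb E\|\bm B\bm\xi\|_2^2=D_{\bm B}+O_{\bm B}\). The diagonal part is
\[
D_{\bm B}=\sum_j(\xi_j^2-1)\|\bm B\bm e_j\|_2^2 .
\]
This is a linear form in the independent centered sub-exponential variables \(\xi_j^2-1\), with coefficient vector \(w_{\bm B}=(\|\bm B\bm e_j\|_2^2)_j\). Since \(\|w_{\bm B}\|_\infty\le d_{\mathrm{op}}^2\) and \(\|w_{\bm B}\|_2\le d_Fd_{\mathrm{op}}\), Bernstein's inequality handles a fixed \(\bm B\). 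The supremum over \(\mathcal B\) is obtained by chaining the same increments. The off-diagonal part is
\[
O_{\bm B}=\sum_{j\neq k}\xi_j\xi_k(\bm B^*\bm B)_{jk}.
\]
We decouple it using de la Pe\~na--Montgomery-Smith, replacing it by \(\langle \bm B\bm\xi,\bm B\bm\xi'\rangle\) (minus its diagonal) with \(\bm\xi'\) an independent copy. Conditionally on \(\bm\xi'\), this is a sub-Gaussian linear process in \(\bm\xi\). Its increments are controlled by
\[
\|(\bm B-\bm C)^*\bm B\bm\xi'\|_2+\|\bm C^*(\bm B-\bm C)\bm\xi'\|_2 \leq \|\bm B-\bm C\|_{\mathrm{op}}\bigl(\|\bm B\bm\xi'\|_2+\|\bm C\bm\xi'\|_2\bigr).
\]
Generic chaining in the \(\|\cdot\|_{\mathrm{op}}\) metric then bounds the conditional \(L_p\) norm by
\[
\bigl(\gamma_2(\mathcal B,\|\cdot\|_{\mathrm{op}})+\sqrt p\,d_{\mathrm{op}}\bigr)\sup_{\bm B}\|\bm B\bm\xi'\|_2 .
\]

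\emph{Main obstacle.} The crux is the auxiliary estimate
\[
\Bigl\|\sup_{\bm B\in\mathcal B}\|\bm B\bm\xi\|_2\Bigr\|_{L_p}\lesssim_K \gamma_2(\mathcal B,\|\cdot\|_{\mathrm{op}})+d_F+\sqrt p\,d_{\mathrm{op}} .
\]
The starting point is that \(\bm B\mapsto\|\bm B\bm\xi\|_2\) has sub-Gaussian increments with parameter \(\|\bm B-\bm C\|_{\mathrm{op}}\); this follows from concentration of \(\|\bm M\bm\xi\|_2\) around \(\|\bm M\|_F\), as in Hanson--Wright. The difficulty is that a naive chaining argument would pick up \(\|\bm B-\bm C\|_F\) at every link. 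To avoid this, we would follow KMR: fix an admissible sequence, and at each chaining level use the large-deviation form of concentration with deviation parameter \(2^{r/2}\|\cdot\|_{\mathrm{op}}\). The Frobenius scale then enters only once, at the root, as \(d_F\). Substituting the auxiliary estimate into the decoupled bound and integrating over \(\bm\xi'\) gives
\begin{equation*}
\gamma_2(\gamma_2+d_F)+\sqrt p\,d_{\mathrm{op}}(\gamma_2+d_F)+p\,d_{\mathrm{op}}^2 ,
\end{equation*}
where \(\gamma_2=\gamma_2(\mathcal B,\|\cdot\|_{\mathrm{op}})\). Combined with the diagonal contribution, which accounts for the \(d_Fd_{\mathrm{op}}\) term in \(E_{\mathcal B}\), this is exactly \(E_{\mathcal B}+\sqrt p\,V_{\mathcal B}+p\,U_{\mathcal B}\), as required.
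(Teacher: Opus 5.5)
Citing \cite[Theorem~3.1]{krahmer2014suprema} as a black box is exactly what the paper does; it gives no argument beyond that citation, so your actual proof coincides with the paper's. The ``for completeness'' reproof does not hold up, however, and its crux rests on a false claim.

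\textbf{The auxiliary estimate is not proved.} The map $\bm B\mapsto\|\bm B\bm\xi\|_2$ does \emph{not} have sub-Gaussian increments at scale $\|\bm B-\bm C\|_{\mathrm{op}}$. Hanson--Wright concentrates $\|(\bm B-\bm C)\bm\xi\|_2$ around $\|\bm B-\bm C\|_F$, not around $0$. For example, take $\bm C=\bm e_1\bm e_1^*$ and $\bm B=\bm C+q^{-1/2}(\bm I-\bm e_1\bm e_1^*)$. Then $\|\bm B-\bm C\|_{\mathrm{op}}=q^{-1/2}$, yet $\|\bm B\bm\xi\|_2-\|\bm C\bm\xi\|_2\approx\sqrt{|\xi_1|^2+1}-|\xi_1|$ is of order one. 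For the paper's mask variable it even fluctuates at order one after centering.

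Your remedy does not repair this. At level $r$, the large-deviation bound for a link $\Delta_r\bm B=\pi_r\bm B-\pi_{r-1}\bm B$ is still $\|\Delta_r\bm B\,\bm\xi\|_2\lesssim\|\Delta_r\bm B\|_F+2^{r/2}\|\Delta_r\bm B\|_{\mathrm{op}}$. The Frobenius term is absorbed only when the stable rank of $\Delta_r\bm B$ is at most $2^r$, which nothing guarantees. So the estimate you call the crux is asserted rather than proved.

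One route that does work:
\begin{itemize}
\item $\sup_{\bm B}\|\bm B\bm\xi\|_2=\sup_{\bm B,\|\bm x\|_2\le1}\operatorname{Re}\langle\bm B^*\bm x,\bm\xi\rangle$ is a sub-Gaussian \emph{linear} process. Generic chaining together with the majorizing-measure theorem therefore reduces its expectation to that for a Gaussian $\bm g$.
\item For Gaussian $\bm g$, the chaos decouples exactly, diagonal included. With $\bm g^\pm=(\bm g'\pm\bm g'')/\sqrt2$, one has $\|\bm B\bm g^+\|_2^2-\|\bm B\bm g^-\|_2^2=2\operatorname{Re}\langle\bm B\bm g',\bm B\bm g''\rangle$; then use Jensen over $\bm g^-$.
\item Write $\gamma_2=\gamma_2(\mathcal B,\|\cdot\|_{\mathrm{op}})$ and $S=\mathbb E\sup_{\bm B}\|\bm B\bm g\|_2$. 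The decoupled chaos is $\lesssim d_Fd_{\mathrm{op}}+\gamma_2 S$. Hence $S^2\le d_F^2+C(d_Fd_{\mathrm{op}}+\gamma_2S)$, which gives $S\lesssim\gamma_2+d_F$.
\end{itemize}

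\textbf{Three further steps fail as written.} Your overall architecture (decoupling, then conditional chaining, then reduction to $\sup\|\bm B\bm\xi\|_2$) is sound.

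(i) The bound $\|\bm C^*(\bm B-\bm C)\bm\xi'\|_2\le\|\bm B-\bm C\|_{\mathrm{op}}\|\bm C\bm\xi'\|_2$ is false. Take $\bm C=\bm e_1\bm e_1^*$ and $\bm B=\bm C+\epsilon\bm e_1\bm e_2^*$ with $\xi'_1=0\neq\xi'_2$. The left side of your summed inequality is at least $\epsilon|\xi'_2|$, while the right side is $\epsilon^2|\xi'_2|$. Instead, split $\langle\bm B\bm\xi,\bm B\bm\xi'\rangle-\langle\bm C\bm\xi,\bm C\bm\xi'\rangle=\langle(\bm B-\bm C)\bm\xi,\bm B\bm\xi'\rangle+\langle\bm C\bm\xi,(\bm B-\bm C)\bm\xi'\rangle$. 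Chain the first piece conditionally on $\bm\xi'$ and the second conditionally on $\bm\xi$.

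(ii) Chaining the Bernstein increments of the diagonal part does not yield $E_{\mathcal B}$. Their sub-exponential parameter is only controlled by $\|w_{\bm B}-w_{\bm C}\|_\infty\le 2d_{\mathrm{op}}\|\bm B-\bm C\|_{\mathrm{op}}$. This produces a term $d_{\mathrm{op}}\gamma_1(\mathcal B,\|\cdot\|_{\mathrm{op}})$, which is absent from $E_{\mathcal B}$ and not dominated by it. An equilateral family $\bm e_1\bm v_i^*$ shows this. Avoiding $\gamma_1$ is precisely what the KMR bound achieves, so the diagonal needs its own argument.

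(iii) Whitening the $2\times2$ blocks makes the real and imaginary parts uncorrelated, not independent. For the paper's mask $d\in\{0,\pm\sqrt2,\pm i\sqrt2\}$, $\operatorname{Re}d$ and $\operatorname{Im}d$ are already uncorrelated with equal variance but are dependent. So you never reach a real theorem with independent coordinates, and your coordinatewise diagonal/off-diagonal split would decouple dependent pairs. Stay complex instead:
$\|\bm B\bm\xi\|_2^2-\|\bm B\|_F^2=\sum_j(|\xi_j|^2-1)\|\bm B\bm e_j\|_2^2+\sum_{j\neq k}\overline{\xi_j}\xi_k(\bm B^*\bm B)_{jk}$.
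Decoupling this only needs independence across $j$.
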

Here, \(\gamma_2(\mathcal B,\|\cdot\|_{\mathrm{op}})\) denotes the
\(\gamma_2\)-functional of the metric space
\((\mathcal B,\|\cdot\|_{\mathrm{op}})\).  Dudley's entropy integral bounds
it in terms of the covering numbers of \(\mathcal B\); see~\cite[(2.1), p.~1883]{krahmer2014suprema}:
\begin{equation}
  \gamma_2(\mathcal B,\|\cdot\|_{\mathrm{op}})
  \lesssim \int_0^{\sup_{\bm B\in\mathcal B}\|\bm B\|_{\mathrm{op}}}
     \sqrt{\log\mathcal N(
       \mathcal B,\|\cdot\|_{\mathrm{op}},u)}\,du.
  \label{prob:eq-dudley}
\end{equation}

Under \cref{ass:mask}, the coordinates
of \(\bm\xi\) in \eqref{prob:eq-mask-vector} are independent, centered,
have unit second moment, and are bounded by \(M\).  
They therefore satisfy
the theorem with a sub-Gaussian parameter depending only on \(M\).

For \(1\le s\le n\), introduce
\begin{equation}
  \mathcal D_s=
  \{\bm z\in\mathbb C^m:\|\bm z\|_2\le1,\ \|\bm z\|_0\le s\},
  \qquad
  \mathcal B_s=\{\bm A_{\bm z}:\bm z\in\mathcal D_s\}.
  \label{prob:eq-sparse-class}
\end{equation}
Given the class \(\mathcal B_s\), we next estimate the geometric parameters needed in \cref{prob:lem-kmr}.  First, the
block form in \eqref{prob:eq-chaos-matrix} gives
\begin{align}
  \bm A_{\bm z}\bm A_{\bm z}^*
  &=\sum_{\ell=1}^L
    \operatorname{diag}(\bm F\bm z_\ell)
    \operatorname{diag}(\bm F\bm z_\ell)^*
  \nonumber\\
  &=\operatorname{diag}\left(
      \sum_{\ell=1}^L|[\bm F\bm z_\ell]_0|^2,\ldots,
      \sum_{\ell=1}^L|[\bm F\bm z_\ell]_{n-1}|^2
    \right).
  \label{prob:eq-block-row-gram}
\end{align}
The block structure and \eqref{prob:eq-fourier} give the Frobenius
radius exactly:
\begin{align}
  \|\bm A_{\bm z}\|_F^2
  &=\operatorname{tr}(\bm A_{\bm z}\bm A_{\bm z}^*)
    =\sum_{\ell=1}^L\|\bm F\bm z_\ell\|_2^2
  \nonumber\\
  &=n\sum_{\ell=1}^L\|\bm z_\ell\|_2^2
    =n\|\bm z\|_2^2.
  \label{prob:eq-frobenius-radius}
\end{align}
For the operator radius $d_{\mathrm{op}}$, every Fourier entry has unit modulus, so
\(|[\bm F\bm z_\ell]_j|\leq\|\bm z_\ell\|_1\). Combining this bound with
\(\|\bm A_{\bm z}\|_{\mathrm{op}}^2
=\|\bm A_{\bm z}\bm A_{\bm z}^*\|_{\mathrm{op}}\) and
\eqref{prob:eq-block-row-gram} gives
\begin{align}
  \|\bm A_{\bm z}\|_{\mathrm{op}}^2
  &=\max_{0\le j<n}\sum_{\ell=1}^L
       |[\bm F\bm z_\ell]_j|^2
  \le\sum_{\ell=1}^L\|\bm z_\ell\|_1^2
  \nonumber\\
  &\le\left(\sum_{\ell=1}^L\|\bm z_\ell\|_1\right)^2
    =\|\bm z\|_1^2
  \le s\|\bm z\|_2^2.
  \label{prob:eq-radii-verification}
\end{align}
Consequently, taking the suprema over \(\bm z\in\mathcal D_s\) gives
\begin{equation}
  d_F(\mathcal B_s)=\sqrt n,
  \qquad
  d_{\mathrm{op}}(\mathcal B_s)\le\sqrt s.
  \label{prob:eq-radii-values}
\end{equation}

It remains to control the \(\gamma_2\)-functional.  Applying
\eqref{prob:eq-dudley} to \(\mathcal B_s\) and using
\eqref{prob:eq-radii-values} gives
\begin{equation}
  \gamma_2(\mathcal B_s,\|\cdot\|_{\mathrm{op}})
  \lesssim \int_0^{\sqrt{s}}
     \sqrt{\log\mathcal N(
       \mathcal B_s,\|\cdot\|_{\mathrm{op}},u)}\,du.
  \label{prob:eq-gamma-two-dudley}
\end{equation}

We estimate the covering numbers in this integral at small and large scales
separately.
For any \(\bm z,\bm z'\in\mathcal D_s\), the vector
\(\bm z-\bm z'\) is supported on at most \(2s\) coordinates. 
By linearity of the map \(\bm z\mapsto\bm A_{\bm z}\),
\(\bm A_{\bm z}-\bm A_{\bm z'}=\bm A_{\bm z-\bm z'}\). Applying
\eqref{prob:eq-radii-verification} to the \(2s\)-sparse vector
\(\bm z-\bm z'\) gives
\begin{equation}
  \|\bm A_{\bm z}-\bm A_{\bm z'}\|_{\mathrm{op}}
  \le\sqrt{2s}\|\bm z-\bm z'\|_2.
  \label{prob:eq-fixed-support-metric}
\end{equation}
We now cover \(\mathcal B_s\) directly.  For \(\rho>0\), let
\(\mathcal N_\rho\) be the union of Euclidean \(\rho\)-nets of the unit
balls on all coordinate subsets of \([m]\) of cardinality at most \(s\).
Every vector in \(\mathcal D_s\) belongs to at least one such ball.  Since
each ball has real dimension at most \(2s\),
\begin{equation}
  |\mathcal N_\rho|
  \le \sum_{k=0}^s \binom mk(1+2/\rho)^{2k}
  \le \left(\frac{em}{s}\right)^s(1+2/\rho)^{2s}.
  \label{prob:eq-sparse-net-cardinality}
\end{equation}
Every \(\bm z\in\mathcal D_s\) has a \(\bm z'\in\mathcal N_\rho\)
with \(\|\bm z-\bm z'\|_2\le\rho\).  Thus,
\eqref{prob:eq-fixed-support-metric} shows that
\(\{\bm A_{\bm z}:\bm z\in\mathcal N_\rho\}\) is a
\(\sqrt{2s}\rho\)-net of \(\mathcal B_s\) in the operator norm.  Taking
\(\rho=u/\sqrt{2s}\) in \eqref{prob:eq-sparse-net-cardinality} gives
\begin{equation}
  \log\mathcal N(\mathcal B_s,\|\cdot\|_{\mathrm{op}},u)
  \le s\log\frac{em}{s}
      +2s\log\left(1+\frac{C\sqrt s}{u}\right).
  \label{prob:eq-fine-entropy}
\end{equation}
Integrating \eqref{prob:eq-fine-entropy} directly over the full range in
\eqref{prob:eq-gamma-two-dudley} yields only a bound of order \(\widetilde{O}(s)\).
 For larger scales, we instead use the 
normed-space version of Maurey's
empirical method~\cite[Lem.~4.2]{krahmer2014suprema}.

\begin{lemma}[Maurey's empirical method]
\label{prob:lem-maurey}
There exists an absolute constant \(C>0\) with the following property.  Let
\((X,\|\cdot\|_X)\) be a normed space and let \(\mathcal U\subset X\) be
finite.  Suppose that, for some \(A>0\), every \(q\in\mathbb N\) and every
\(\bm V_1,\ldots,\bm V_q\in\mathcal U\) satisfy
\begin{equation}
  \mathbb E_\varepsilon
  \left\|\sum_{a=1}^q\varepsilon_a\bm V_a\right\|_X
  \le A\sqrt q,
  \label{prob:eq-maurey-hypothesis}
\end{equation}
where \(\varepsilon_1,\ldots,\varepsilon_q\) are independent Rademacher
variables.  Then, for every \(v>0\),
\begin{equation}
  \log\mathcal N(\operatorname{conv}(\mathcal U),\|\cdot\|_X,v)
  \le C\left(\frac{A}{v}\right)^2\log|\mathcal U|.
  \label{prob:eq-maurey-general}
\end{equation}
\end{lemma}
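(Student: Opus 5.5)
The plan is the classical Maurey sparsification argument: approximate a point of \(\operatorname{conv}(\mathcal U)\) by an average of a few random elements of \(\mathcal U\), and count how many such averages there are. Fix \(\bm V\in\operatorname{conv}(\mathcal U)\). By Carathéodory (or simply because \(\mathcal U\) is finite) write \(\bm V=\sum_{\bm U\in\mathcal U}\lambda_{\bm U}\bm U\) with \(\lambda\) a probability vector. Let \(\bm Y_1,\ldots,\bm Y_q\) be i.i.d.\ random elements of \(\mathcal U\) with \(\mathbb P\{\bm Y_a=\bm U\}=\lambda_{\bm U}\), so that \(\mathbb E\bm Y_a=\bm V\).

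The first step is to bound the error of the empirical mean. Using the standard symmetrization inequality in the normed space \(X\) with an independent copy \(\bm Y'\), we get
\[
\mathbb E\Bigl\|\frac1q\sum_{a=1}^q\bm Y_a-\bm V\Bigr\|_X
\le\frac{1}{q}\mathbb E\Bigl\|\sum_{a=1}^q(\bm Y_a-\bm Y_a')\Bigr\|_X
\le\frac{2}{q}\,\mathbb E_{\bm Y}\mathbb E_\varepsilon\Bigl\|\sum_{a=1}^q\varepsilon_a\bm Y_a\Bigr\|_X .
\]
Here the second inequality holds because \(\bm Y_a-\bm Y_a'\) is symmetric, so it has the same law as \(\varepsilon_a(\bm Y_a-\bm Y_a')\); one then applies the triangle inequality. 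Conditionally on \(\bm Y\), the hypothesis \eqref{prob:eq-maurey-hypothesis} applies with \(\bm V_a=\bm Y_a\in\mathcal U\); repetitions are allowed, since the hypothesis is stated for arbitrary tuples. Hence the error is at most \(2A/\sqrt q\). Choosing \(q=\lceil 4A^2/v^2\rceil\) makes it at most \(v\), so some realization satisfies \(\|\frac1q\sum_a\bm Y_a-\bm V\|_X\le v\).

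The second step is the count. Every such approximant lies in the set of averages \(\frac1q\sum_{a=1}^q\bm U_a\) with \(\bm U_a\in\mathcal U\). That set has cardinality at most \(|\mathcal U|^q\), and it is a \(v\)-net for \(\operatorname{conv}(\mathcal U)\) independent of \(\bm V\). Therefore
\[
\log\mathcal N(\operatorname{conv}(\mathcal U),\|\cdot\|_X,v)\le q\log|\mathcal U|\le\Bigl(\frac{4A^2}{v^2}+1\Bigr)\log|\mathcal U|.
\]
This has the form \eqref{prob:eq-maurey-general} except for the additive \(+1\).

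The only mildly delicate point is absorbing that \(+1\), which requires \(A/v\gtrsim1\). If \(v\) exceeds \(\sup_{\bm U\in\mathcal U}\|\bm U\|_X\), a single point covers \(\operatorname{conv}(\mathcal U)\) and the covering number is \(1\), so its logarithm is \(0\). Otherwise pick \(\bm U\in\mathcal U\) with \(\|\bm U\|_X\ge v\). The hypothesis with \(q=1\) gives \(\|\bm U\|_X=\mathbb E\|\varepsilon_1\bm U\|_X\le A\), so \(v\le A\) and the \(+1\) is bounded by \((A/v)^2\). This yields the claim with \(C=5\), assuming \(|\mathcal U|\ge2\); the case of a singleton \(\mathcal U\) is trivial.

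I expect no serious obstacle here. The one thing to get right is the symmetrization step: one must check that the Rademacher hypothesis is applied conditionally, with possibly repeated elements of \(\mathcal U\), which the lemma's hypothesis indeed permits.
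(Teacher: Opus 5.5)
Your argument is correct and is the standard Maurey empirical-method proof (sampling from the convex weights, Gin\'e--Zinn symmetrization to reach the Rademacher hypothesis, and counting the at most $|\mathcal U|^q$ averages); this is the proof behind \cite[Lem.~4.2]{krahmer2014suprema}, which the paper cites without reproving. Your absorption of the additive $+1$ via the $q=1$ case of the hypothesis is also fine. If covering centers are required to lie in $\operatorname{conv}(\mathcal U)$, so that the center $0$ is not available, use instead that $\operatorname{conv}(\mathcal U)$ has diameter at most $2A$; this gives the same conclusion with a slightly larger constant.
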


We verify the hypothesis of \cref{prob:lem-maurey} for the operator norm.
Let
\(\bm U_{\ell k}=\bm A_{\bm e_{\ell k}}\), where
\(\bm e_{\ell k}\) is a coordinate vector in \(\mathbb C^m\).  Each
\(\bm U_{\ell k}\) has one nonzero block,
\(\operatorname{diag}(\bm f_k)\), and has operator norm one.  
Consider the four-phase coordinate atom set
\begin{equation}
  \mathcal U=
  \{\pm\bm U_{\ell k},\ \pm i\bm U_{\ell k}:
      1\le\ell\le L,\ 0\le k<n\},
  \qquad |\mathcal U|=4m.
  \label{prob:eq-complex-atoms}
\end{equation}
The set we are interested in is contained in a multiple of the convex hull of these atoms:
\begin{equation}
  \mathcal B_s\subset\sqrt{2s}\operatorname{conv}(\mathcal U).
  \label{prob:eq-sparse-atomic-inclusion}
\end{equation}
Indeed, linearity gives
\(\bm A_{\bm z}=\sum_{\ell,k}z_{\ell k}\bm U_{\ell k}\).  Writing
\(z_{\ell k}=a_{\ell k}+ib_{\ell k}\), we have
\begin{equation}
  \sum_{\ell,k}(|a_{\ell k}|+|b_{\ell k}|)
  \le\sqrt2\|\bm z\|_1
  \le\sqrt{2s}.
  \label{prob:eq-complex-convex-hull}
\end{equation}
Since \(\bm{0}\in\operatorname{conv}(\mathcal U)\), this proves
\eqref{prob:eq-sparse-atomic-inclusion} by adding the remaining weights on the atom $\bm{0}$.  
It remains to verify the hypothesis of \cref{prob:lem-maurey} for
\(\mathcal U\) under the operator norm and to bound the covering number of
\(\operatorname{conv}(\mathcal U)\), which in turn bounds that of
\(\mathcal B_s\).

Fix \(\bm V_1,\ldots,\bm V_q\in\mathcal U\), and let
\(\varepsilon_1,\ldots,\varepsilon_q\) be independent Rademacher variables.
The operator-norm form of the rectangular noncommutative Khintchine
inequality~\cite[Cor.~4.2]{tropp2012user} gives
\begin{align}
  \mathbb E_\varepsilon
  \left\|\sum_{a=1}^q\varepsilon_a\bm V_a\right\|_{\mathrm{op}}
  &\le C\sqrt{\log(n+m)}
  \max\left\{
    \left\|\sum_{a=1}^q\bm V_a\bm V_a^*\right\|_{\mathrm{op}}^{1/2},
    \left\|\sum_{a=1}^q\bm V_a^*\bm V_a\right\|_{\mathrm{op}}^{1/2}
  \right\}
  \nonumber\\
  &\le C\sqrt{q\log(2m)}.
  \label{prob:eq-atom-expectation}
\end{align}
Indeed, \(\bm V_a\bm V_a^*=\bm I_n\) and
\(\|\bm V_a^*\bm V_a\|_{\mathrm{op}}=1\), so both terms in the maximum are at
most \(\sqrt q\) by direct summation.
Thus the hypothesis of \cref{prob:lem-maurey} holds with
\(A=C\sqrt{\log(2m)}\).  Applying the lemma, using
\(|\mathcal U|=4m\), and then scaling by the factor \(\sqrt{2s}\) in
\eqref{prob:eq-sparse-atomic-inclusion} yields
\begin{align}
  \log\mathcal N(\mathcal B_s,\|\cdot\|_{\mathrm{op}},u)
  &\le \log\mathcal N\left(
    \operatorname{conv}(\mathcal U),\|\cdot\|_{\mathrm{op}},
    \frac{u}{\sqrt{2s}}\right)
  \nonumber
  \le \log(4m) \frac{C\log(2m)}{u^2/(2s)}
  \nonumber\\
  &\le \frac{Cs\log^2(8m)}{u^2},
  \qquad 0<u\le\sqrt{s}.
  \label{prob:eq-maurey-entropy}
\end{align}

Using \eqref{prob:eq-fine-entropy} for \(0<u\le1\) and
\eqref{prob:eq-maurey-entropy} for \(1<u\le\sqrt{s}\) in
\eqref{prob:eq-gamma-two-dudley} gives
\begin{align}
  \int_0^1
    \sqrt{\log\mathcal N(\mathcal B_s,\|\cdot\|_{\mathrm{op}},u)}\,du
  &\le C\sqrt{s\log(2m)},
  \label{prob:eq-small-scale-integral}\\
  \int_1^{\sqrt{s}}
    \sqrt{\log\mathcal N(\mathcal B_s,\|\cdot\|_{\mathrm{op}},u)}\,du
  &\le C\sqrt{s}\log(8m)\log(2s).
  \label{prob:eq-large-scale-integral}
\end{align}
Substituting these estimates into \eqref{prob:eq-gamma-two-dudley} gives
\begin{align}
  \gamma_2(\mathcal B_s,\|\cdot\|_{\mathrm{op}})
  &\le C\sqrt s\left[
       \sqrt{\log(2m)}
       +\log(2s)\log(8m)
       \right]
   \le C\sqrt{s \Gamma_{n,m}},
  \qquad 1\le s\le n.
  \label{prob:eq-gamma-bound}
\end{align}
where we recall that \(\Gamma_{n,m}=\log^2(2m)\log^2(2n)\) and 
\(\Lambda_{n,m,\eta}=\Gamma_{n,m}+\log(2/\eta)\).

Substituting these estimates into \cref{prob:lem-kmr} gives the proof of \cref{prob:prop-fixed-row-subset}.
\begin{proof}[Proof of \Cref{prob:prop-fixed-row-subset}]
For every \(\bm z\), independence and unit second moment of the mask
entries imply
\begin{equation}
  \mathbb E\|\bm A_{\bm z}\bm\xi\|_2^2
  =\|\bm A_{\bm z}\|_F^2=n\|\bm z\|_2^2.
  \label{prob:eq-chaos-center}
\end{equation}
Apply \cref{prob:lem-kmr} to \(\mathcal B_s\).  
From
\cref{prob:eq-radii-values} and \cref{prob:eq-gamma-bound},
\begin{align}
  E_{\mathcal B_s}
  &\le C_M\left(\sqrt{ns\Gamma_{n,m}}+s\Gamma_{n,m}\right),
  V_{\mathcal B_s}
  \le C_M\left(s\sqrt{\Gamma_{n,m}}+\sqrt{ns}\right),
  \qquad U_{\mathcal B_s}\le s.
  \label{prob:eq-kmr-parameters-simplified}
\end{align}
Set
\(t=C_M'(V_{\mathcal B_s}\sqrt{\log(2/\eta)}
+U_{\mathcal B_s}\log(2/\eta))\). Then
\(t^2/V_{\mathcal B_s}^2\geq(C_M')^2\log(2/\eta)\) and
\(t/U_{\mathcal B_s}\geq C_M'\log(2/\eta)\), so a sufficiently large
\(C_M'\) makes the failure probability in \eqref{prob:eq-kmr-tail} at most
\(\eta\). Moreover, \eqref{prob:eq-kmr-parameters-simplified},
\(2\sqrt{\Gamma_{n,m}\log(2/\eta)}\leq\Lambda_{n,m,\eta}\), and
\(\max\{\log(2/\eta),\Gamma_{n,m}\}\leq\Lambda_{n,m,\eta}\) give
\(t\leq C_M''(\sqrt{ns\Lambda_{n,m,\eta}}+s\Lambda_{n,m,\eta})\).
Thus, with probability at least
\(1-\eta\),
\begin{equation}
  \sup_{\bm z\in\mathcal D_s}\|\bm A_{\bm z}\bm\xi\|_2^2
  \le n+C_M\left(\sqrt{ns \Lambda_{n,m,\eta}}+s \Lambda_{n,m,\eta}\right),
  \label{prob:eq-fixed-row-chaos-bound}
\end{equation}
where we also used \(\|\bm z\|_2\le1\) in
\eqref{prob:eq-chaos-center}.
Finally, \cref{eq:chaos_representation} identifies the $\sup_{\bm z\in\mathcal D_s}\|\bm A_{\bm z}\bm\xi\|_2^2$ with the supremum of
\(\|\bm\Phi_E\|_{\mathrm{op}}^2\) over \(|E|\le s\), proving the proposition.
\end{proof}

\section{Construction of a Dual Certificate}\label{sec:dual_cert_proof}
The stability proof in \cref{sec:mainproof} invoked the pure-range dual
certificate in \cref{prop:pure-range-certificate}; we now give its construction
and the deferred proof of the proposition.
Existing golfing constructions for the CDP model produce a certificate in $\operatorname{range}(\mathcal A^*)+\operatorname{span}\{\bm I\}$~\cite[Definition~1 and Proposition~7]{huangli2026optimal}.
For our purposes, however, we require the certificate to lie exactly in $\operatorname{range}(\mathcal A^*)$, together with explicit control of its coefficient vector.

The modification is to replace the identity matrix $\bm I$ by an empirical identity that lies in $\operatorname{range}(\mathcal A^*)$. The following proposition shows that this replacement incurs only a small perturbation, sufficient to preserve the estimates used in the subsequent golfing construction.

\begin{proposition}[Empirical identity and its PSD consequence]
\label{prop:empirical-identity}
Under \cref{ass:mask}, for every $0<\delta<1$, with
probability at least $1-2n\exp(-\frac{2L\delta^2}{M^4})$,
\begin{equation}
  (1-\delta)\bm I\preceq
  \frac{\mathcal A^*(\mathbf 1)}{m}
  \preceq(1+\delta)\bm I.
  \label{eq:prelim-empirical-identity-order}
\end{equation}
On the same event, simultaneously for every $\bm P\succeq0$,
\begin{equation}
  (1-\delta)\operatorname{tr}(\bm P)
  \leq\frac{\|\mathcal A(\bm P)\|_1}{m}
  \leq(1+\delta)\operatorname{tr}(\bm P).
  \label{eq:prelim-empirical-identity-psd}
\end{equation}
\end{proposition}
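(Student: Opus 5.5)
The plan is to compute $\mathcal A^*(\mathbf 1)$ explicitly and observe that it is a diagonal matrix whose entries are independent averages across masks. By definition,
\[
\mathcal A^*(\mathbf 1)=\sum_{\ell=1}^L\sum_{k=0}^{n-1}\bm a_{\ell k}\bm a_{\ell k}^*
=\sum_{\ell=1}^L\bm D_\ell^*\Bigl(\sum_{k}\bm f_k\bm f_k^*\Bigr)\bm D_\ell
=n\sum_{\ell=1}^L\bm D_\ell^*\bm D_\ell ,
\]
since $\bm F\bm F^*=n\bm I$. Dividing by $m=nL$ gives
\[
\frac{\mathcal A^*(\mathbf 1)}{m}=\operatorname{diag}\Bigl(\tfrac1L\textstyle\sum_{\ell}|d_{\ell,j}|^2\Bigr)_{j=0}^{n-1}.
\]
Thus the first claim reduces to showing that each diagonal entry lies in $[1-\delta,1+\delta]$.

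For a fixed coordinate $j$, the variables $|d_{\ell,j}|^2$, $\ell=1,\ldots,L$, are i.i.d. By \cref{ass:mask} they have mean $\mathbb E|d|^2=1$ and take values in $[0,M^2]$. Hoeffding's inequality then gives
\[
\mathbb P\Bigl\{\Bigl|\tfrac1L\textstyle\sum_\ell|d_{\ell,j}|^2-1\Bigr|\ge\delta\Bigr\}\le 2\exp\bigl(-2L\delta^2/M^4\bigr).
\]
A union bound over the $n$ coordinates yields the stated probability $1-2n\exp(-2L\delta^2/M^4)$. On this event we obtain \eqref{eq:prelim-empirical-identity-order}, because a diagonal matrix satisfies a Loewner bound exactly when each of its entries does.

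For the PSD consequence, fix $\bm P\succeq0$. Each measurement $\langle\bm A_i,\bm P\rangle=\bm a_i^*\bm P\bm a_i$ is nonnegative, so the $\ell_1$ norm is simply the sum:
\[
\|\mathcal A(\bm P)\|_1=\sum_i\langle\bm A_i,\bm P\rangle=\langle\mathcal A^*(\mathbf 1),\bm P\rangle .
\]
Pairing the two-sided Loewner bound with $\bm P\succeq0$ preserves the inequalities, since $\langle \bm B,\bm P\rangle\ge0$ whenever $\bm B,\bm P\succeq0$. This gives
\[
(1-\delta)\operatorname{tr}(\bm P)\le \|\mathcal A(\bm P)\|_1/m\le(1+\delta)\operatorname{tr}(\bm P).
\]
The bound holds simultaneously for all $\bm P$, because the event involves only the masks.

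I do not expect any real obstacle. The only step needing care is the exact Fourier identity $\sum_k\bm f_k\bm f_k^*=n\bm I$, which is what removes all dependence within a pattern and leaves a diagonal matrix.
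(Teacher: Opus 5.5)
Your proposal is correct and matches the paper's argument. The PSD consequence is obtained exactly as in the paper, from nonnegativity of the measurements and pairing the Loewner bound with $\bm P$. For the Loewner bound itself, the paper simply cites \cite[Lemma~3.3]{candes2015cdp}, while you reprove it directly: the identity $\sum_k \bm f_k\bm f_k^*=n\bm I$, Hoeffding for the i.i.d.\ variables $|d_{\ell,j}|^2\in[0,M^2]$, and a union bound over the $n$ coordinates, which reproduces the stated probability $1-2n\exp(-2L\delta^2/M^4)$ exactly.
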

\begin{proof}
Equation~\eqref{eq:prelim-empirical-identity-order} follows  from~\cite[Lemma~3.3]{candes2015cdp}.
Since $\bm P \succeq 0$ implies $[\mathcal A(\bm P)]_{\ell k}\geq0$ for all $k,l$, $m^{-1}\|\mathcal A(\bm P)\|_1 = \langle m^{-1}\mathcal A^*(\mathbf 1), \bm P \rangle$, and taking the inner product of \eqref{eq:prelim-empirical-identity-order} with $\bm P$ immediately yields \eqref{eq:prelim-empirical-identity-psd}.
\end{proof}

We now develop the remaining ingredients of the construction.
Inspired by the stage-wise construction in~\cite{huangli2026optimal}, we introduce truncation events and their
associated truncated measurement operators. Let \(B\) be a batch
consisting of $L_B$
masks. For a fixed nonzero matrix
\(\bm Q\in\mathcal T_\star\) and a truncation parameter \(\tau\geq1\),
define
\begin{align}
 U_{\ell k}^{\tau}(\bm Q)
 &:=
 \left\{
   |\langle\bm A_{\ell k},\bm Q\rangle|
   \leq K_0M^2\tau\|\bm Q\|_F
 \right\},
 \label{cert:eq:trunc-event}\\
 \mathcal R_{B,\tau}^{\bm Q}(\bm W)
 &:=
 \frac{1}{nL_B}
 \sum_{\ell\in B}\sum_{k=0}^{n-1}
 \mathbf 1_{U_{\ell k}^{\tau}(\bm Q)}
 \bm A_{\ell k}
 \langle\bm A_{\ell k},\bm W\rangle.
 \label{cert:eq:truncated-operator}
\end{align}
Here, \(K_0>0\) is a sufficiently large numerical constant.

The following lemma summarizes the results of~\cite[Propositions~5 and~6]{huangli2026optimal}, which establish estimates for the $\mathcal T_\star$ and $\mathcal T_\star^\perp$ components of these adaptive truncated operators.

\begin{lemma}
\label{cert:lem:fresh-batch}
There are numerical constants $C_0,c_1,c_2>0$ such that the following holds, provided the numerical cutoff constant $K_0$ in \eqref{cert:eq:trunc-event} is sufficiently large.  Let $\bm{Q}\in\mathcal T_\star$ be fixed independently of a fresh batch $B$ of $L_B$ masks.  If $\tau\geq1$, $t_1>0$, and $0<t_2\leq1$ obey
\begin{equation}
 C_0M^4e^{-\tau}\leq\frac14\min\{t_1,t_2\},
 \label{cert:eq:bias-condition}
\end{equation}
then
\begin{align}
 &\mathbb P\!\left\{
  \left\|P_{\mathcal T_\star^\perp}
   \big(\mathcal R_{B,\tau}^{\bm{Q}}(\bm{Q})-\operatorname{tr}(\bm{Q})\bm{I}\big)  \right\|_{\mathrm{op}}>t_1\|\bm{Q}\|_F\right\}
 \nonumber\\[-1mm]
 &\hspace{35mm}\leq
 2n\exp\!\left[-\frac{c_1L_B}{M^8\tau}
                   \min\{t_1^2,t_1\}\right],
 \label{cert:eq:fresh-normal}\\
 &\mathbb P\!\left\{
  \left\|P_{\mathcal T_\star}
   \big(\mathcal R_{B,\tau}^{\bm{Q}}(\bm{Q})-\bm{Q}-\operatorname{tr}(\bm{Q})\bm{I}\big)
  \right\|_F>t_2\|\bm{Q}\|_F\right\}
 \nonumber\\[-1mm]
 &\hspace{35mm}\leq
 \exp\!\left[-\frac{c_2t_2^2L_B}{M^8\tau}+\frac14\right].
 \label{cert:eq:fresh-tangent}
\end{align}
The same bounds hold conditionally when $\bm{Q}$ is measurable with respect to
the past and $B$ is independent of that past.
\end{lemma}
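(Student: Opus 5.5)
The plan is to reduce both tail bounds to (i) an exact computation of the mean of the untruncated operator, (ii) a truncation-bias estimate that is absorbed by \eqref{cert:eq:bias-condition}, and (iii) a Bernstein-type concentration for a sum of $L_B$ independent, bounded per-mask summands. Conditioning on the past reduces the measurable case to the fixed case, because $B$ is independent of the past. So I work with a fixed $\bm Q=\bm x_\star\bm h^*+\bm h\bm x_\star^*\in\tang$ as in \eqref{prob:eq-tangent-representation}, normalized so that $\|\bm Q\|_F=1$.

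\emph{Step 1: mean of the untruncated operator.} For a single mask $\ell$, write $\bm Y_\ell(\bm W):=n^{-1}\sum_{k}\bm A_{\ell k}\langle\bm A_{\ell k},\bm W\rangle$. I would expand $\bm a_{\ell k}=\bm D_\ell^*\bm f_k$ and use Fourier orthogonality $\sum_k \bm f_k(j)\overline{\bm f_k(j')}\cdots$ together with the moment conditions $\mathbb E d=\mathbb E d^2=0$, $\mathbb E|d|^2=1$, $\mathbb E|d|^4=2$ from \cref{ass:mask}. This is the standard CDP computation of Cand\`es et al. The aim is to show $\mathbb E\,\bm Y_\ell(\bm Q)=\bm Q+\operatorname{tr}(\bm Q)\bm I$; the condition $\mathbb E|d|^4=2$ is what kills the diagonal correction term. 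Consequently the untruncated operator $\mathcal R_{B,\infty}^{\bm Q}$ has exactly the centering that appears in both \eqref{cert:eq:fresh-normal} and \eqref{cert:eq:fresh-tangent}.

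\emph{Step 2: truncation bias.} Next I bound $\mathbb E\,n^{-1}\sum_k \mathbf 1_{(U^\tau_{\ell k})^c}\bm A_{\ell k}\langle\bm A_{\ell k},\bm Q\rangle$ in operator and Frobenius norm, tested against unit vectors and unit matrices. By \eqref{prob:eq-tangent-row}, $|\langle\bm A_{\ell k},\bm Q\rangle|\le 2|\bm a_{\ell k}^*\bm x_\star||\bm a_{\ell k}^*\bm h|$. The complement event therefore forces $|\bm a^*_{\ell k}\bm x_\star|$ or $|\bm a^*_{\ell k}\bm h|/\|\bm h\|$ to exceed $\sqrt{K_0\tau/2}\,M$. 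I would then apply Cauchy--Schwarz: one factor is a fourth or eighth moment of the Hoeffding-controlled linear forms, the other is the tail probability from \eqref{prob:eq-complex-hoeffding}, which is $\le 4e^{-K_0\tau/8}$. For $K_0$ large this gives a bias of at most $C_0M^4e^{-\tau}$. Under \eqref{cert:eq:bias-condition} the bias is then at most $\tfrac14\min\{t_1,t_2\}$, so it suffices to control deviations from the truncated mean at levels $\tfrac34t_1$ and $\tfrac34t_2$.

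\emph{Step 3: concentration.} The summands $\bm Z_\ell:=n^{-1}\sum_k\mathbf 1_{U^\tau_{\ell k}}\bm A_{\ell k}\langle\bm A_{\ell k},\bm Q\rangle$ are independent across $\ell\in B$. Since $n^{-1}\sum_k\bm A_{\ell k}=\bm D_\ell^*\bm D_\ell\preceq M^2\bm I$, the truncation gives $-K_0M^4\tau\bm I\preceq\bm Z_\ell\preceq K_0M^4\tau\bm I$. For the variance I write $\bm Z_\ell^2$ and bound it by $K_0M^2\tau$ times $n^{-1}\sum_k|\langle\bm A_{\ell k},\bm Q\rangle|\bm A_{\ell k}\cdot M^2$. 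The expectation of the latter is $\lesssim M^2$ by the moment bounds, so $\|\mathbb E\bm Z_\ell^2\|\lesssim M^8\tau$. Matrix Bernstein in dimension $n$, applied after the projection $P_{\tangp}$ (which at most doubles norms), then yields \eqref{cert:eq:fresh-normal} with exponent $c_1L_B\min\{t_1^2,t_1\}/(M^8\tau)$. For \eqref{cert:eq:fresh-tangent} I apply a dimension-free Hilbert-space (vector) Bernstein inequality to $P_{\tang}\bm Z_\ell$ in Frobenius norm. Its summands have norm $\lesssim M^4\tau$ and variance $\lesssim M^8\tau$, and since $t_2\le1$ the sub-Gaussian regime dominates. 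This produces the form $\exp(-c_2t_2^2L_B/(M^8\tau)+\tfrac14)$; the additive $\tfrac14$ is the usual constant of the vector Bernstein inequality.

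I expect the main obstacle to be Step 2 together with the exact mean computation in Step 1. Getting the bias down to order $M^4e^{-\tau}$, uniformly over unit test directions in both norms, requires careful bookkeeping of the higher moments of $\bm a_{\ell k}^*\bm x_\star$ and $\bm a_{\ell k}^*\bm h$ under dependence on a common mask. The concentration step is routine once the uniform bound $\|\bm Z_\ell\|\lesssim M^4\tau$ is in place.
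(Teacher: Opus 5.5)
Your proposal is correct. The paper gives no argument of its own here; it defers to Propositions~5 and~6 of Huang--Li. Your three steps reconstruct the standard argument behind those propositions: the exact mean $\bm Q+\operatorname{tr}(\bm Q)\bm I$, where $\mathbb E|d|^4=2$ cancels the diagonal correction; a Cauchy--Schwarz truncation bias of order $M^4e^{-\tau}$, absorbed by \eqref{cert:eq:bias-condition}; then matrix Bernstein for the $\tangp$ part and Gross's vector Bernstein for the $\tang$ part. This is consistent with the $2n$ prefactor, the $\min\{t_1^2,t_1\}$, and the additive $\frac14$ in the statement.

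Two small points need care, though neither is a gap:
\begin{itemize}
\item The tangent bias should be controlled through $\|P_{\tang}(\cdot)\|_F\leq\sqrt2\,\|(\cdot)\,\bm x_\star\|_2$. That is, test only against unit vectors or against unit matrices lying in $\tang$. The full bias matrix can have Frobenius norm of order $\sqrt n$ times its operator norm, so bounding its full Frobenius norm would not suffice.
\item In the vector Bernstein step, take the variance proxy to be at least a constant multiple of the uniform bound $K_0M^4\tau$. This is harmless since $M\geq1$, and it ensures that every $t_2\leq1$ actually lies in the sub-Gaussian range.
\end{itemize}
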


\begin{proof}
See the proofs of Propositions~5 and~6 in~\cite{huangli2026optimal}.
\end{proof}

We construct the required dual certificate using an adaptive golfing scheme with \(L=O(\log n + \log(1/\epsilon))\) masks. Unlike the PhaseLift construction, we require an exact range certificate of the form \(\mathcal A^*(\bm q)\), together with explicit control of \(\|\bm q\|_2\).

\begin{proof}[Proof of Proposition \ref{prop:pure-range-certificate}]
We first construct an empirical identity $\bm S_0\in\operatorname{range}(\mathcal A^*)$ and use it to initialize an adaptive golfing scheme. 
Using independent fresh batches, we construct
updates $\bm U_j\in\operatorname{range}(\mathcal A^*)$ that geometrically contract the tangent residual $\bm Q_j$, while keeping the accumulated $\tangp$ component small. 
We then control the total number of masks required by the adaptive trials. 
Finally, we verify the desired $\tang$ and $\tangp$ bounds for $\bm Z=(nL)^{-1/2}\mathcal{A}^*(\bm q)$ and estimate the norm of its coefficient vector $\bm q$.
\paragraph{Parameter Selection}
Set
\begin{equation}
 \delta:=\frac1{32},
 \qquad
 \eta:=\frac1{32},
 \qquad
 t_2:=\frac14,
 \qquad
 \rho:=t_2+\sqrt2\,\delta<\frac13.
 \label{cert:eq:golf-parameters}
\end{equation}
These choices ensure that
\begin{equation}
 (1+\sqrt2\,\delta)
 \left(
  \frac{\eta}{1-\sqrt\rho}
  +\frac{\delta}{1-\rho}
 \right)
 <\frac14.
 \label{cert:eq:parameter-margin}
\end{equation}

Choose
\begin{equation}
 J
 :=
 \left\lceil
 \frac{\log\bigl((1+\sqrt2\,\delta)/\varepsilon\bigr)}
      {|\log\rho|}
 \right\rceil,
 \label{cert:eq:number-levels}
\end{equation}
to be the number of successful golfing updates. For
\(0\leq j\leq J-1\), define
\begin{equation}
 t_{1,j}:=\eta\rho^{-j/2},
 \qquad
 a_j:=\min\{t_{1,j}^2,t_{1,j}\}.
 \label{cert:eq:level-parameters}
\end{equation}

Fix \(K_0\) sufficiently large for
\Cref{cert:lem:fresh-batch}, and choose
$\tau_0
 :=
 \max\left\{
  1,\log(128C_0M^4)
 \right\}.
$
Since \(t_{1,j}\geq\eta\), this choice guarantees, at every level $j$, that
\[
 C_0M^4e^{-\tau_0}
 \leq
 \frac14\min\{t_{1,j},t_2\}.
\]

For \(0\leq j\leq J-1\) and a sufficiently large constant \(C_1\), let $C_M:=C_1M^8\tau_0$ and
\begin{equation}
 L_j
 :=
 \left\lceil
 C_M\left(1+\frac{\log n}{a_j}\right)
 \right\rceil,
 \label{cert:eq:batch-size}
\end{equation}
be the number of masks used in each trial at level \(j\). Finally, let
\begin{equation}
 s_0
 :=
 \left\lceil
 2M^4\omega\delta^{-2}\log n
 \right\rceil,
 \label{cert:eq:identity-batch-size}
\end{equation}
be the number of masks used in the empirical-identity block.
\paragraph{Empirical-Identity Initialization}
We construct $\bm S_0$, which will be used to initialize the golfing scheme.
Let \(\mathcal C\) be a batch of \(s_0\) masks, disjoint from all batches
used in the golfing updates, and define
\begin{equation}
 \bm S_0
 :=\frac{1}{ns_0}\mathcal A_{\mathcal C}^*(\bm 1)
 =\frac{1}{s_0}\sum_{\ell\in\mathcal C}
   \bm D_\ell^*\bm D_\ell.
 \label{cert:eq:empirical-identity}
\end{equation}
Thus, \(\bm S_0\) is diagonal,
\(\bm S_0\in\operatorname{range}(\mathcal A_{\mathcal C}^*)\), and
\(\mathbb E\bm S_0=\bm I\). By
\cref{prop:empirical-identity},
\begin{equation}
 \|\bm S_0-\bm I\|_{\mathrm{op}}\leq\delta,
 \label{cert:eq:empirical-identity-event}
\end{equation}
holds with probability at least $1-2n\exp\!\left(-{2s_0\delta^2}/{M^4}\right)$.
On the event
\(\|\bm S_0-\bm I\|_{\mathrm{op}}\leq\delta\),
\begin{equation}
 \|P_{\mathcal T_\star}(\bm S_0-\bm I)\|_F
 \leq\sqrt{2}\,\delta,
 \qquad
 \|P_{\mathcal T_\star^\perp}(\bm S_0-\bm I)\|_{\mathrm{op}}
 \leq\delta.
 \label{cert:eq:identity-projections}
\end{equation}

\paragraph{Adaptive Golfing Construction} We now introduce the adaptive golfing construction for the dual certificate.
At level \(j\), repeatedly draw a fresh batch of \(L_j\) masks until
both conclusions of \cref{cert:lem:fresh-batch} hold with
\(\bm Q=\bm Q_j\), \(\tau=\tau_0\), \(t_1=t_{1,j}\), and
\(t_2=1/4\). Denote the accepted batch by \(B_j\) and write
\[
 \mathcal R_j^{\bm Q_j}
 :=
 \mathcal R_{B_j,\tau_0}^{\bm Q_j}.
\]
Thus, the accepted operator satisfies
\begin{align}
 \left\|
 \proj{\tangp}
 \bigl(
   \mathcal R_j^{\bm Q_j}(\bm Q_j)
   -\operatorname{tr}(\bm Q_j)\bm I
 \bigr)
 \right\|_{\mathrm{op}}
 &\leq t_{1,j}\|\bm Q_j\|_F,
 \label{cert:eq:accepted-normal}\\
 \left\|
 \proj{\tang}
 \bigl(
   \mathcal R_j^{\bm Q_j}(\bm Q_j)
   -\bm Q_j-\operatorname{tr}(\bm Q_j)\bm I
 \bigr)
 \right\|_F
 &\leq t_2\|\bm Q_j\|_F.
 \label{cert:eq:accepted-tangent}
\end{align}
A rejected batch is assigned zero coefficient. If \(\bm Q_j=\bm 0\),
the construction stops and all subsequent increments are set to zero.

Initialize
\begin{equation}
 \bm{G}_0=\bm{0},
 \qquad \bm{Q}_0=P_{\mathcal T_\star}\bm{S}_0.
 \label{cert:eq:initial-residual}
\end{equation}
For every accepted level, define
\begin{align}
 \bm U_j
 &:=
 \mathcal R_j^{\bm Q_j}(\bm Q_j)
 -\operatorname{tr}(\bm Q_j)\bm S_0,
 \label{cert:eq:pure-update}\\
 \bm G_{j+1}
 &:=
 \bm G_j+\bm U_j,
 \qquad
 \bm Q_{j+1}
 :=
 \bm Q_j-\proj{\tang}(\bm U_j).
 \label{cert:eq:residual-update}
\end{align}
We replace the ideal term $\operatorname{tr}(\bm{Q}_j)\bm{I}$ by $\operatorname{tr}(\bm{Q}_j)\bm{S}_0$ to ensure both terms in \eqref{cert:eq:pure-update} belong to \(\operatorname{range}(\mathcal A^*)\). From the above construction, we also have $\bm{U}_j\in\operatorname{range}(\mathcal A^*)$ and $\bm{Q}_j\in\mathcal T_\star$.

After $J$ accepted levels, set
\begin{equation}
 \bm{Z}:=\bm{G}_J-\bm{S}_0.
 \label{cert:eq:local-certificate}
\end{equation}
Consequently,
\(\bm Z\in\operatorname{range}(\mathcal A^*)\).

\paragraph{Success Probability and Total Mask Count}
Let \(\mathcal F_{\mathrm{past}}\) denote the randomness generated by all batches used before the current trial at level \(j\). Conditional on \(\mathcal F_{\mathrm{past}}\), the residual
\(\bm Q_j\in\tang\) is fixed and independent of the fresh trial batch.
By \eqref{cert:eq:fresh-normal} and
\eqref{cert:eq:fresh-tangent}, the conditional failure probability of
the current trial is bounded above by
\begin{align}
2n\exp\left(
 -\frac{c_1L_ja_j}{M^8\tau_0}
\right)
+
\exp\left(
 -\frac{c_2t_2^2L_j}{M^8\tau_0}+\frac14
\right).
\label{cert:eq:trial-failure-initial}
\end{align}
Since \(C_M=C_1M^8\tau_0\), the definition of \(L_j\) in \eqref{cert:eq:batch-size}
gives
\begin{equation}
 \frac{L_ja_j}{M^8\tau_0}
 \geq C_1(a_j+\log n)
 \geq C_1\log n,
 \qquad
 \frac{L_j}{M^8\tau_0}
 \geq C_1.
 \label{cert:eq:batch-exponents}
\end{equation}

Hence, for a sufficiently large constant \(C_1\), there exists an absolute constant $\alpha_0$ such that
\begin{equation}
 \mathbb P\left\{
  \text{the current trial fails}
  \,\middle|\,\mathcal F_{\mathrm{past}}
 \right\}
 \leq
 \exp\left(
  -\frac{\alpha_0L_j}{M^8\tau_0}
 \right)
 =:p_j.
 \label{cert:eq:trial-failure}
\end{equation}

Let \(N_j\) be the number of trials required to obtain the successful
update at level \(j\). For every integer \(r\geq0\),
\begin{equation}
 \mathbb P\left\{
  N_j>r
  \,\middle|\,
  N_0,\ldots,N_{j-1}
 \right\}
 \leq p_j^r.
 \label{cert:eq:geometric-domination}
\end{equation}
Thus \(N_j\) is conditionally dominated by a geometric random variable
with success probability \(1-p_j\). The total number of masks consumed
by the golfing updates is
\begin{equation}
 L_{\mathrm{golf}}
 :=
 \sum_{j=0}^{J-1}L_jN_j.
 \label{cert:eq:golf-mask-consumption}
\end{equation}

Set $B:=\sum_{j=0}^{J-1}L_j$. By \eqref{cert:eq:level-parameters}, \(a_0=\eta^2\), and thus
\begin{equation}
 B\geq L_0
 \geq C_M\eta^{-2}\log n\gtrsim_M\log n.
 \label{cert:eq:baseline-lower}
\end{equation}

On the other hand, by \eqref{cert:eq:level-parameters},
\begin{equation}
\frac1{a_j}
\leq
 \frac1{t_{1,j}^2}+\frac1{t_{1,j}}
 =
 \eta^{-2}\rho^j+\eta^{-1}\rho^{j/2}.
\end{equation}
Therefore,
\begin{equation}
\sum_{j=0}^{J-1}\frac1{a_j}\leq \sum_{j=0}^{J-1}( \eta^{-2}\rho^j+\eta^{-1}\rho^{j/2})\leq   \frac{\eta^{-2}}{1-\rho}
   +\frac{\eta^{-1}}{1-\sqrt\rho}.
\end{equation}

Equation~\eqref{cert:eq:batch-size} and the definition of \(J\) in \eqref{cert:eq:number-levels} give
\begin{align}
 B
 &\leq
 2C_M\left(
  J+\log n\sum_{j=0}^{J-1}\frac1{a_j}
 \right)
\lesssim_M\log n+\log(1/\varepsilon).
 \label{cert:eq:budget-sum}
\end{align}

Set $\theta:=\frac{\alpha_0}{2M^8\tau_0}$. By \eqref{cert:eq:trial-failure}, \(p_j=e^{-2\theta L_j}\). Then the conditional moment
generating function obeys
\begin{equation}
 \mathbb E\left[
  e^{\theta L_jN_j}
  \,\middle|\,
  N_0,\ldots,N_{j-1}
 \right]
 \leq
 \frac{(1-p_j)e^{\theta L_j}}
      {1-p_je^{\theta L_j}}
 =
 1+e^{\theta L_j}
 \leq e^{2\theta L_j}.
 \label{cert:eq:geometric-mgf}
\end{equation}
For $\omega\geq1$, iterated conditional expectation
and Markov's inequality yield
\begin{equation}
 \mathbb P\left\{
  L_{\mathrm{golf}}>4\omega B
 \right\}
 \leq
 e^{-4\omega\theta B}
 \prod_{j=0}^{J-1}e^{2\theta L_j}
 \leq
 \exp\left(
  -\frac{\alpha_0\omega B}{M^8\tau_0}
 \right).
 \label{cert:eq:golf-mask-tail}
\end{equation}
By \eqref{cert:eq:batch-size} and \(a_0=\eta^2\),
\begin{align}
 \frac{\alpha_0B}{M^8\tau_0}
 &\geq
 \frac{\alpha_0L_0}{M^8\tau_0}
 \geq
 \alpha_0C_1\eta^{-2}\log n
 \geq\log(2n).
 \label{cert:eq:baseline-exponent}
\end{align}
Consequently,
\begin{equation}
 \mathbb P\left\{
  L_{\mathrm{golf}}>4\omega B
 \right\}
 \leq
 (2n)^{-\omega}
 \leq\frac12n^{-\omega}.
 \label{cert:eq:golf-exhaustion-probability}
\end{equation}

For the empirical-identity block,
by the choice of $s_0$ in \eqref{cert:eq:identity-batch-size} and \cref{prop:empirical-identity},
\begin{equation}
 \mathbb P\left\{
  \|\bm S_0-\bm I\|_{\mathrm{op}}>\delta
 \right\}
 \leq
 2n\exp\left(
  -{2s_0\delta^2}/{M^4}
 \right)
 \leq
 2n^{1-4\omega}
 \leq\frac12n^{-\omega}.
 \label{cert:eq:identity-failure-probability}
\end{equation}

Therefore, it suffices to reserve $s_0+\left\lceil4\omega B\right\rceil$ masks. Let the total number $L$ of available masks satisfy
\begin{equation}
 L\gtrsim_M
 \omega\bigl(\log n+\log(1/\varepsilon)\bigr).
 \label{cert:eq:total-mask-reservation}
\end{equation}
By \eqref{cert:eq:identity-batch-size} and \eqref{cert:eq:budget-sum}, a union bound applied to \eqref{cert:eq:golf-exhaustion-probability} and \eqref{cert:eq:identity-failure-probability} shows that the empirical-identity event holds and all
\(J\) successful golfing updates are obtained before the available
masks are exhausted with probability at least $1-n^{-\omega}$.

\paragraph{Dual Certificate Verification}
We now verify that on the intersection of the empirical-identity event
\eqref{cert:eq:empirical-identity-event} and the accepted-trial events
\eqref{cert:eq:accepted-normal}--\eqref{cert:eq:accepted-tangent}, the resulting $\bm{Z}$ is the desired dual certificate.

Let
\begin{equation}
 \bm{E}_{\mathcal T,j}:=P_{\mathcal T_\star}
 \big(\mathcal R_j^{\bm{Q}_j}(\bm{Q}_j)-\bm{Q}_j-\operatorname{tr}(\bm{Q}_j)\bm{I}\big).
 \label{cert:eq:tangent-error}
\end{equation}
Using $\bm{Q}_j\in\mathcal T_\star$, the residual update is the exact
identity
\begin{equation}
 \bm{Q}_{j+1}=-\bm{E}_{\mathcal T,j}
  +\operatorname{tr}(\bm{Q}_j)P_{\mathcal T_\star}(\bm{S}_0-\bm{I}).
 \label{cert:eq:exact-recurrence}
\end{equation}
Since $
|\operatorname{tr}(\bm Q_j)|
 \leq\|\bm Q_j\|_F$ for any $\bm Q_j\in\tang$, \eqref{cert:eq:empirical-identity-event} and \eqref{cert:eq:identity-projections} give
 \begin{equation}
  \|\operatorname{tr}(\bm{Q}_j)P_{\mathcal T_\star}(\bm{S}_0-\bm{I})\|_F\leq \sqrt{2}\delta\|\bm{Q}_j\|_F,\quad \|\operatorname{tr}(\bm{Q}_j)P_{\mathcal T_\star^\perp}(\bm{S}_0-\bm{I})\|_{\mathrm{op}}\leq \delta\|\bm{Q}_j\|_F.   
 \end{equation}
Then \eqref{cert:eq:accepted-normal} and 
\eqref{cert:eq:accepted-tangent} imply that each successful update satisfies
\begin{equation}
\|\bm{Q}_{j+1}\|_F\leq\rho\|\bm{Q}_j\|_F,
 \qquad
 \|P_{\mathcal T_\star^\perp}\bm{U}_j\|_{\mathrm{op}}
 \leq(t_{1,j}+\delta)\|\bm{Q}_j\|_F.
 \label{cert:eq:contraction-leakage}
\end{equation}

By the update formula, 
$$\bm{G}_J=\sum_{j=0}^{J-1}\bm{U}_j.$$

For the $\mathcal T_\star$ component,
\begin{equation}
 P_{\mathcal T_\star}\bm{Z}
 =\sum_{j=0}^{J-1}P_{\mathcal T_\star}\bm{U}_j
   -P_{\mathcal T_\star}\bm{S}_0
 =(\bm{Q}_0-\bm{Q}_J)-\bm{Q}_0=-\bm{Q}_J.
 \label{cert:eq:exact-telescope}
\end{equation}
Moreover, \eqref{cert:eq:identity-projections} gives 
\begin{equation}\label{cert:eq:Q0-bound}
\|\bm{Q}_0\|_F\leq1+\sqrt2\,\delta.    
\end{equation}
Thus, with the choice of $J$ in \eqref{cert:eq:number-levels},
\begin{equation}
 \|P_{\mathcal T_\star}\bm{Z}\|_F
 \leq\rho^J\|\bm{Q}_0\|_F
 \leq(1+\sqrt2\,\delta)\rho^J\leq\varepsilon.
 \label{cert:eq:tangent-accuracy}
\end{equation}

For the $\mathcal T_\star^\perp$ component, $t_{1,j}=\eta\rho^{-j/2}$, \eqref{cert:eq:contraction-leakage} and \eqref{cert:eq:Q0-bound} give
\begin{equation}
\|P_{\mathcal T_\star^\perp}\bm{U}_j\|_{\mathrm{op}}
\leq(\eta\rho^{-j/2}+\delta)\rho^j\|\bm{Q}_0\|_F
\leq (1+\sqrt2\,\delta)\big(\eta\rho^{j/2}+\delta\rho^j\big).  
\end{equation}
Therefore, choosing $\delta$ and $\eta$ so that \eqref{cert:eq:parameter-margin} holds,
\begin{align}
 \|P_{\mathcal T_\star^\perp}\bm{G}_J\|_{\mathrm{op}}
 &\leq
 \sum_{j=0}^{J-1}\|P_{\mathcal T_\star^\perp}\bm{U}_j\|_{\mathrm{op}}
 \nonumber\\
 &\leq(1+\sqrt2\,\delta)
 \sum_{j=0}^{J-1}\big(\eta\rho^{j/2}+\delta\rho^j\big)
 \nonumber\\
 &\leq(1+\sqrt2\,\delta)
 \left(\frac{\eta}{1-\sqrt\rho}+\frac{\delta}{1-\rho}\right)
 \leq \frac{1}{4}.
 \label{cert:eq:normal-leakage-sum}
\end{align}
Then $
\proj{\tangp}\bm G_J
\preceq
\frac{1}{4}
(\bm I-\bm X_\star)$.
Furthermore, \eqref{cert:eq:empirical-identity-event} implies
$P_{\mathcal T_\star^\perp}\bm{S}_0\succeq
(1-\delta)(\bm{I}-\bm{X}_\star)\succeq
\frac{3}{4}(\bm{I}-\bm{X}_\star)$. Consequently,
\begin{equation}
 P_{\mathcal T_\star^\perp}\bm{Z}
 \preceq-\frac12(\bm{I}-\bm{X}_\star).
 \label{cert:eq:local-normal-margin}
\end{equation}

It remains to control the coefficient vector.
By the update formulas \eqref{cert:eq:pure-update} and
\eqref{cert:eq:residual-update}, we have
\begin{equation}
\bm{Z}=\sum_{j=0}^{J-1}\bm{U}_j-\bm{S}_0=\sum_{j=0}^{J-1}\mathcal R_j^{\bm{Q}_j}(\bm{Q}_j)-\Bigl(1+\sum_{j=0}^{J-1}\operatorname{tr}(\bm{Q}_j)\Bigr)\bm{S}_0.
\end{equation}
Since $\bm{Z}\in\operatorname{range}(\mathcal A^*)$, the dual certificate has an
exact representation
\begin{equation}
 \bm{Z}=\mathcal A^*
 \bm{\lambda},
 \label{cert:eq:local-range}
\end{equation}
for a coefficient vector \(\bm\lambda\in\mathbb R^{nL}\). Since the accepted batches and the empirical-identity batch are
disjoint, we consider them separately.

On the accepted batch \(B_j\), the coefficients associated with $\mathcal R_j^{\bm{Q}_j}(\bm{Q}_j)$ are
\begin{equation}
 \lambda_{\ell k}^{(j)}
 =\frac1{nL_j}\mathbf{1}_{U_{\ell k}^{\tau_0}(\bm{Q}_j)}
   \langle \bm{A}_{\ell k},\bm{Q}_j\rangle,
   \qquad
 \ell\in B_j,\quad 0\leq k\leq n-1.
 \label{cert:eq:golf-coefficients}
\end{equation}
Since $\bm{Q}_j\in\mathcal T_\star$, $|\operatorname{tr}(\bm Q_j)| \leq \|\bm Q_j\|_F$ and then by \eqref{cert:eq:accepted-tangent},
\begin{equation}\label{cert:eq-tangent-RQj}
\left\|
\proj{\tang}
\bigl(\mathcal R_j^{\bm Q_j}(\bm Q_j)\bigr)
\right\|_F
\leq
 \left\|
 \proj{\tang}
 \bigl(
   \bm Q_j+\operatorname{tr}(\bm Q_j)\bm I
 \bigr)
 \right\|_F
+t_2\|\bm Q_j\|_F\leq (2+t_2)\|\bm Q_j\|_F.   
\end{equation}
Further, $\langle \bm{Q}_j,\mathcal R_j^{\bm{Q}_j}(\bm{Q}_j)\rangle\leq\|\bm{Q}_j\|_F\|\proj{\tang}
\bigl(\mathcal R_j^{\bm Q_j}(\bm Q_j)\bigr)\|_F$.
Then by \eqref{cert:eq:contraction-leakage} and \eqref{cert:eq-tangent-RQj},
\begin{equation}
\|\bm{\lambda}^{(j)}\|_2^2
 =\frac1{nL_j}\langle \bm{Q}_j,\mathcal R_j^{\bm{Q}_j}(\bm{Q}_j)\rangle
 \leq
 \frac{2+t_2}{nL_j}\|\bm{Q}_j\|_F^2
 \leq\frac{(2+t_2)(1+\sqrt2\,\delta)^2\rho^{2j}}{nL_j}.
 \label{cert:eq:update-coefficient-bound}   
\end{equation}
By \eqref{cert:eq:level-parameters} and \eqref{cert:eq:batch-size}, $a_j\leq t_{1,j}^2\leq\eta^2\rho^{-j}$ and $L_j
 \geq
 C_M\frac{\log n}{a_j}$.
Therefore,
\begin{equation}
 \sum_{j=0}^{J-1}\|\bm\lambda^{(j)}\|_2^2
 \leq
 \frac{(2+t_2)(1+\sqrt2\,\delta)^2}
      {C_Mn\log n}
 \sum_{j=0}^{J-1}\rho^{2j}a_j
\leq
 \frac{(2+t_2)(1+\sqrt2\,\delta)^2\eta^2}
      {C_M(1-\rho)n\log n}.
 \label{cert:eq:update-coefficient-sum}
\end{equation}

The coefficients from $\bm{S}_0$ are all equal to
\begin{equation}
 -\frac{\vartheta}{ns_0},
 \qquad
 \vartheta:=1+\sum_{j=0}^{J-1}\operatorname{tr}(\bm{Q}_j).
 \label{cert:eq:identity-coefficients}
\end{equation}
Moreover,
\begin{equation}
 |\vartheta|
 \leq
 1+\sum_{j=0}^{J-1}\|\bm Q_j\|_F
 \leq
 1+\frac{1+\sqrt2\,\delta}{1-\rho}:=C_{\vartheta}.
 \label{cert:eq:theta-bound}
\end{equation}
Using \eqref{cert:eq:identity-batch-size}, the squared norm of these coefficients satisfies
\begin{equation}
 ns_0\left|\frac{\vartheta}{ns_0}\right|^2
 \leq
 \frac{C_\vartheta^2\delta^2}
      {2M^4\omega\,n\log n}.
 \label{cert:eq:identity-coefficient-norm}
\end{equation}

All failed and unused batches receive coefficient zero. Since the supports are disjoint, 
we obtain
\begin{equation}
 \|\bm\lambda\|_2^2
 =
 \sum_{j=0}^{J-1}\|\bm\lambda^{(j)}\|_2^2
 +\frac{|\vartheta|^2}{ns_0}
 \leq
 \frac{C_{\lambda,M,\omega}}{n\log n},
 \label{cert:eq:local-coefficient-norm}  
\end{equation}
where
\begin{equation}
 C_{\lambda,M,\omega}
 :=
 \frac{(2+t_2)(1+\sqrt2\,\delta)^2\eta^2}
      {C_M(1-\rho)}
 +
 \frac{C_\vartheta^2\delta^2}{2M^4\omega}.
 \label{cert:eq:coefficient-constant}
\end{equation}

Finally, define $\bm q:=\sqrt{nL}\,\bm\lambda$; then
$\bm Z=(nL)^{-1/2}\mathcal A^*(\bm q)$. 
By \eqref{cert:eq:total-mask-reservation}, when we choose 
$L =\lceil C_{M,\omega}(\log n+\log(1/\varepsilon))\rceil$ for a sufficiently large constant \(C_{M,\omega}\), we have
\begin{equation}
 \|\bm q\|_2^2
 =nL\|\bm\lambda\|_2^2
 \leq C_{\lambda,M,\omega}\frac{L}{\log n}
\lesssim_{M,\omega}
 \left(
  1+\frac{\log(1/\varepsilon)}{\log n}
 \right).
 \label{cert:eq:q-norm-bound}
\end{equation}

Combining the preceding estimates \eqref{cert:eq:tangent-accuracy} \eqref{cert:eq:local-normal-margin} and \eqref{cert:eq:q-norm-bound}, the matrix $\bm Z$ constructed in \eqref{cert:eq:local-certificate} is the desired pure-range dual certificate, which
completes the proof of \cref{prop:pure-range-certificate}.
\end{proof}

\section{From Stability to a Benign Landscape}
\label{sec:deterministic-landscape}

In this section, we prove \cref{cor:benign-landscape-noisy} by combining the deterministic landscape framework of~\cite{mcrae2026benign} with \cref{thm:stability}.
We show that every real second-order critical point of \eqref{eq:factorized-objective} satisfies a recovery bound under noise. In the noiseless case, this implies a benign strict-saddle landscape: every second-order critical point is a global minimizer, and every nonglobal critical point has a direction of negative real-Hessian curvature.

\subsection{First- and Second-Order Critical Conditions for the Nonconvex Loss}
Consider the nonconvex least-squares problem
\eqref{eq:factorized-objective}. For simplicity, we first consider the case where the ground truth $\bm X_\star$ and the factor $\bm V$ are real matrices, and the measurement operator $\mathcal{A}$ is a real linear operator. In this case, the gradient of $F_r$ and
the Hessian of $F_r$ along a direction $\dot{\bm V}\in\mathbb R^{n\times r}$ are both well-defined and have the following formulas:
\begin{equation}
  \nabla F_r(\bm V)
  =\frac{1}{nL}\mathcal A^*(\mathcal A
    (\bm V\bm V^*)-\bm y)\bm V.
  \label{eq:factorized-first-derivative}
\end{equation}
\begin{align}
  \nabla^2F_r(\bm V)[\dot{\bm{V}},\dot{\bm{V}}]
  =\frac{1}{nL}\bigg(&
    \left\langle
      \mathcal A(\bm V\bm V^*)-\bm y,
      \mathcal A( \dot{ \bm V}\dot{ \bm V}^*)
    \right\rangle
    \nonumber\\
    &+\frac12
    \left\|\mathcal A(\bm V \dot{\bm V}^*+ \dot{\bm V}\bm V^*)\right\|_2^2
  \bigg).
  \label{eq:factorized-second-derivative}
\end{align}
$\bm V$ is a real second-order critical point if and only if
it satisfies the conditions
\begin{equation}
    \nabla F_r(\bm V)=\bm 0,\quad
    \nabla^2F_r(\bm V)[\dot{\bm{V}},\dot{\bm{V}}]\geq 0
    \quad\text{for every }\dot{\bm V}\in\mathbb R^{n\times r}.
    \label{eq:factorized-second-order-critical-point}
\end{equation}
The landscape is benign if every real second-order critical point $\bm V$ satisfies $\bm{VV^*} = \bm{X_\star}$.

We now discuss the complex case, which is the main focus of this paper. We rewrite the problem in real variables by embedding the measurement operator $\mathcal{A}$ and the optimization variable $\bm V$ into the real domain, using the same convention as in~\cite[Section~3]{mcrae2026benign}:
\begin{align}
  &\bm A = \bm A_{\mathcal{R}} + i \bm A_{\mathcal{I}} \mapsto \tilde{\bm A}=\begin{bmatrix} \bm{A}_{\mathcal{R}} & \bm{A}_{\mathcal{I}}^\top  \\ \bm{A}_{\mathcal{I}} & \bm{A}_{\mathcal{R}} \end{bmatrix}\in \mathbb{R}^{2n \times 2n} , \\ &\bm V = \bm{V}_{\mathcal{R}}+i\bm{V}_{\mathcal{I}}\mapsto \tilde{\bm V}=\begin{bmatrix} \bm{V}_{\mathcal{R}} \\ \bm{V}_{\mathcal{I}} \end{bmatrix}\in \mathbb{R}^{2n \times r}.
\end{align}
For each Hermitian matrix $\bm A$, one can verify that $\langle \bm A, \bm{VV^*}\rangle = \langle \tilde{\bm A}, \tilde{\bm V}\tilde{\bm V}^\top \rangle$. Define the transformed operator $\tilde{\mathcal{A}}$ by $[\tilde{\mathcal{A}}(\bm{X})]_j=\langle \tilde{\bm A_j}, \bm{X}\rangle$. The complex problem \eqref{eq:factorized-objective} can thus be rewritten as the following real optimization problem, with $\tilde{\bm X_\star}:=\tilde{\bm x_\star}\tilde{\bm x_\star}^\top$ and $\bm y=\tilde{\mathcal{A}}(\tilde{\bm{X}_\star})+\bm e$:
\begin{equation}
    \min_{\tilde{\bm V}\in \mathbb{R}^{2n\times r}} \tilde{F}_r(\tilde{\bm V}) = \frac{1}{4nL}\|\tilde{\mathcal{A}}(\tilde{\bm V}\tilde{\bm V}^\top )-\bm y\|_2^2.
    \label{eq:factorized-objective-real}
\end{equation}
In this context, a second-order critical point of the complex problem \eqref{eq:factorized-objective} is defined as the complex preimage of a second-order critical point of the real problem \eqref{eq:factorized-objective-real}; that is, $\bm{V}$ is a second-order critical point of \eqref{eq:factorized-objective} if and only if $\tilde{\bm V}$ satisfies
\begin{equation}
    \nabla \tilde{F}_r(\tilde{\bm V})=\bm 0,\quad
    \nabla^2\tilde{F}_r(\tilde{\bm V})[\dot{\tilde{\bm{V}}},\dot{\tilde{\bm{V}}}]\geq 0
    \quad\text{for every }\dot{\tilde{\bm V}}\in\mathbb R^{2n\times r}.
    \label{eq:factorized-second-order-critical-point-real}
\end{equation}

\subsection{The Consequence of the Critical Conditions and its Relationship to Stability}
We then explain how the critical conditions and overparameterization yield an error bound for second-order critical points.
We begin by testing the second-order condition in
\cref{eq:factorized-second-derivative} along rank-one directions
$
\dot{\bm V}:=\bm u\bm v_j^*,
$
where $\bm u:=\bm x_\star-\bm V\bm q$ for arbitrary $\bm q$ and \(\{\bm v_j\}_{j=1}^r\) is an orthonormal basis. 
Summing the resulting inequalities and combining them with the first-order condition in \cref{eq:factorized-first-derivative} bounds the squared measurement residual
by a noise term and a second term proportional to \(1/r\); see~\cite{mcrae2026benign} for the detailed proof.

\begin{lemma}[Lemma~1 of~\cite{mcrae2026benign}, rank-one CDP form]
\label{lem:critical-point-residual}
Let $\bm y=\mathcal A(\bm X_\star)+\bm e$ with
$\bm e\in\mathbb R^{nL}$, and consider
\[
  \bm V\longmapsto\frac{1}{4nL}
  \|\mathcal A(\bm V\bm V^*)-\bm y\|_2^2,
  \qquad \bm V\in\mathbb C^{n\times r}.
\]
If $\bm V$ is a second-order critical point of this loss, then
for every $\bm q\in\mathbb C^r$,
\begin{align}
  \frac{1}{nL}\|\mathcal A(\bm V\bm V^*-\bm X_\star)\|_2^2
  &\leq \frac{1}{m}
  \langle\bm e,\mathcal A(\bm V\bm V^*-\bm X_\star)\rangle
  +\frac{2}{(r+2)m}
  \left\langle\bm y,
    \mathcal A\bigl((\bm x_\star-\bm V\bm q)
                       (\bm x_\star-\bm V\bm q)^*\bigr)\right\rangle
  \notag\\
  &\leq \frac{1}{m}
  \langle\bm e,\mathcal A(\bm V\bm V^*-\bm X_\star)\rangle
  +\frac{2}{(r+2)m}\|\mathcal A^*(\bm y)\|_{\mathrm{op}}
    \|\bm x_\star-\bm V\bm q\|_2^2.
  \label{eq:critical-point-residual}
\end{align}
\end{lemma}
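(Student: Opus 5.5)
I would prove \cref{lem:critical-point-residual} in three steps, working throughout with the real embedding \eqref{eq:factorized-objective-real}, where the first- and second-order conditions are unambiguous. Write $\bm r:=\mathcal A(\bm V\bm V^*)-\bm y$ and $\bm S:=\mathcal A^*(\bm r)$. The first-order condition \eqref{eq:factorized-first-derivative} reads $\bm S\bm V=\bm 0$. The second-order condition \eqref{eq:factorized-second-derivative} says that, for every admissible direction $\dot{\bm V}$,
\[
\langle \bm S,\dot{\bm V}\dot{\bm V}^*\rangle+\tfrac12\|\mathcal A(\bm V\dot{\bm V}^*+\dot{\bm V}\bm V^*)\|_2^2\geq 0 .
\]
In the complex case, real directions of $\tilde{\bm V}$ correspond to all complex $\dot{\bm V}\in\mathbb C^{n\times r}$ with the real inner product $\operatorname{Re}\operatorname{tr}$. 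The identities below are therefore checked in the complex notation, using $\langle\bm A,\bm V\bm V^*\rangle=\langle\tilde{\bm A},\tilde{\bm V}\tilde{\bm V}^\top\rangle$.

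\textbf{Step 1: test directions and summation.} Fix $\bm q$, set $\bm u=\bm x_\star-\bm V\bm q$, choose an orthonormal basis $\{\bm v_j\}$ of $\mathbb C^r$, and test $\dot{\bm V}_j=\bm u\bm v_j^*$. Summing over $j$, the first terms give $r\langle\bm S,\bm u\bm u^*\rangle$. For the second terms, $\sum_j\mathcal A(\bm V\bm v_j\bm u^*+\bm u\bm v_j^*\bm V^*)$ involves $\bm w_j:=\bm V\bm v_j$. I expect
\[
\sum_j|\langle\bm a_i,\bm w_j\rangle\langle\bm u,\bm a_i\rangle+c.c.|^2\leq 2\|\bm V^*\bm a_i\|^2|\bm a_i^*\bm u|^2+2\operatorname{Re}\bigl((\bm a_i^*\bm V\bm V^*\bm a_i)\cdots\bigr)\leq 4\langle\bm A_i,\bm V\bm V^*\rangle\langle\bm A_i,\bm u\bm u^*\rangle .
\]
This yields $0\leq r\langle\bm S,\bm u\bm u^*\rangle+2\langle\mathcal A(\bm V\bm V^*),\mathcal A(\bm u\bm u^*)\rangle$. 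In McRae's argument one instead uses a random-phase or averaging trick, and the exact constant $2/(r+2)$ emerges once the terms are also combined with the direction $\dot{\bm V}$ proportional to $\bm u\bm q^*$-type contributions. I would follow \cite[Lemma~1]{mcrae2026benign} for the precise bookkeeping.

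\textbf{Step 2: use first-order stationarity.} Since $\bm S\bm V=\bm 0$, we have $\langle\bm S,\bm u\bm u^*\rangle=\langle\bm S,\bm X_\star\rangle=\langle\bm r,\mathcal A(\bm X_\star)\rangle$, and also $\langle\bm r,\mathcal A(\bm V\bm V^*)\rangle=0$. Hence
\[
\langle\bm S,\bm u\bm u^*\rangle=-\langle\bm r,\mathcal A(\bm V\bm V^*-\bm X_\star)\rangle=-\|\mathcal A(\bm V\bm V^*-\bm X_\star)\|_2^2+\langle\bm e,\mathcal A(\bm V\bm V^*-\bm X_\star)\rangle .
\]
Similarly, $\langle\mathcal A(\bm V\bm V^*),\mathcal A(\bm u\bm u^*)\rangle=\langle\bm y+\bm r,\mathcal A(\bm u\bm u^*)\rangle$, and the $\bm r$-part equals $\langle\bm S,\bm u\bm u^*\rangle$ again. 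Substituting into the summed inequality gives $(r+2)\|\mathcal A(\bm V\bm V^*-\bm X_\star)\|^2\leq(r+2)\langle\bm e,\cdot\rangle+2\langle\bm y,\mathcal A(\bm u\bm u^*)\rangle$. Dividing by $(r+2)m$ gives the first inequality in \eqref{eq:critical-point-residual}.

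\textbf{Step 3: the second inequality.} This is immediate from $\langle\bm y,\mathcal A(\bm u\bm u^*)\rangle=\langle\mathcal A^*(\bm y),\bm u\bm u^*\rangle\leq\|\mathcal A^*(\bm y)\|_{\mathrm{op}}\|\bm u\|_2^2$.

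\textbf{Main obstacle.} The delicate step is the Step 1 curvature estimate: obtaining exactly the factor $2\langle\mathcal A(\bm V\bm V^*),\mathcal A(\bm u\bm u^*)\rangle$ uniformly over the sum in the complex setting, so that the cross terms with $\bm r$ combine into the $r+2$ factor. I would try averaging over directions $\bm u\bm v^*$ with $\bm v$ uniform on the complex sphere, where phase cancellation handles the conjugate-squared terms, and verify the constant against McRae's real derivation.
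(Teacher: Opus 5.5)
Your argument is correct and follows the same route as the paper (which defers to McRae): test the second-order condition along $\bm u\bm v_j^*$, sum over the orthonormal basis, and use $\bm S\bm V=\bm 0$ to rewrite both $\langle\bm S,\bm u\bm u^*\rangle$ and $\langle\mathcal A(\bm V\bm V^*),\mathcal A(\bm u\bm u^*)\rangle=\langle\bm y,\mathcal A(\bm u\bm u^*)\rangle+\langle\bm S,\bm u\bm u^*\rangle$. That rewriting in your Step~2 is exactly where the factor $r+2$ comes from, so no extra $\bm u\bm q^*$-type directions or random-phase averaging are needed. The step you flag as the main obstacle is in fact immediate, so your derivation is already complete: with $z_{ij}=(\bm a_i^*\bm V\bm v_j)(\bm u^*\bm a_i)$, the $i$th measurement of $\bm V\dot{\bm V}_j^*+\dot{\bm V}_j\bm V^*$ equals $2\operatorname{Re} z_{ij}$, and Parseval in $\mathbb C^r$ gives $\sum_j 4(\operatorname{Re} z_{ij})^2\leq 4|\bm u^*\bm a_i|^2\sum_j|\bm a_i^*\bm V\bm v_j|^2=4\langle\bm A_i,\bm V\bm V^*\rangle\langle\bm A_i,\bm u\bm u^*\rangle$, which replaces the incomplete middle expression in your display.
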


To illustrate its implications, we consider the noiseless case $\bm e = 0$. 
The quantity $\frac{1}{m}\|\mathcal{A}^*(\bm y)\|_{\mathrm{op}}$ reduces to $\frac{1}{m}\|\mathcal{A}^*\mathcal{A}(\bm X_\star)\|_{\mathrm{op}} \lesssim \log(2n)$, which is the spectral upper bound at the ground truth; see \cref{prop:fixed-target-spectral-upper} below.
On the other hand, we can see the essential role of \cref{thm:stability}:
it shows that $\mathcal{A}$ is stably injective in the form
$$
\frac{1}{m}\|\mathcal{A}(\bm{VV^*} - \bm X_\star)\|_2^2 \ge \alpha\|\bm{VV^*} - \bm X_\star\|_F^2, 
$$
where the stability constant $\alpha >0$ determines the loss incurred when transferring the upper bound in measurement space to the matrix space.
Dividing the right-hand side of \cref{eq:critical-point-residual} by $\alpha$, we see that a factor width satisfying
\begin{equation} \label{eq:width-requirement}
r \gtrsim \frac{\frac{1}{m} \|\mathcal{A}^*\mathcal{A}(\bm X_\star)\|_{\mathrm{op}}}{\alpha} 
\end{equation}
is needed to make $\|\bm{VV^*} - \bm X_\star\|_F$ diminish.
This is precisely the order of
overparameterization required by the deterministic framework; see~\cite[Theorem~2]{mcrae2026benign}.
To conclude this subsection, we give the spectral upper bound at $\bm X_\star$ for the CDP model as a direct consequence of the coherence bound in \Cref{prob:lem-coherence}.
\begin{proposition}[Fixed-target logarithmic spectral upper bound]
\label{prop:fixed-target-spectral-upper}
Let $\bm x_\star$ be a deterministic unit vector independent of
the masks, and suppose that $L\le C(M, \omega)\log n$ for some constant $C(M, \omega)$ depending on $M$ and $\omega$. For every $\omega\geq1$,
there exists a constant $C_{M, \omega}>0$ such that
\begin{equation}
\frac1m\mathcal A^*\mathcal A(\bm X_\star)
\preceq
C_{M, \omega}\log(2n)\,\bm I,
\label{eq:prelim-fixed-target-spectral-upper}
\end{equation}
with probability at least $1-n^{-\omega}$.
\end{proposition}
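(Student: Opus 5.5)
We bound the quadratic form directly. For any unit $\bm u\in\mathbb C^n$,
\[
\Bigl\langle \bm u,\tfrac1m\mathcal A^*\mathcal A(\bm X_\star)\bm u\Bigr\rangle
=\frac1m\sum_{\ell,k}|\bm a_{\ell k}^*\bm x_\star|^2\,|\bm a_{\ell k}^*\bm u|^2
\le \frac{\mu_\star}{m}\sum_{\ell,k}|\bm a_{\ell k}^*\bm u|^2 ,
\]
where $\mu_\star$ is the coherence quantity of \cref{prob:lem-coherence}. The identity uses $\mathcal A^*(\bm v)=\sum_i v_i\bm A_i$ and $[\mathcal A(\bm X_\star)]_i=|\bm a_i^*\bm x_\star|^2$.

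The remaining sum is deterministic given the masks. By Parseval ($\bm F^*\bm F=n\bm I$),
\[
\sum_k|\bm f_k^*\bm D_\ell\bm u|^2=n\|\bm D_\ell\bm u\|_2^2\le nM^2 ,
\]
so the full sum is at most $nLM^2=mM^2$. Hence
\[
\frac1m\mathcal A^*\mathcal A(\bm X_\star)\preceq M^2\mu_\star\bm I .
\]
If one prefers to avoid the crude $M^2$ factor, \cref{prop:empirical-identity} with $\delta=1/2$ gives $\frac1m\mathcal A^*(\mathbf 1)\preceq\frac32\bm I$, but this is not needed since constants may depend on $M$.

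Next we invoke \cref{prob:lem-coherence} with $\eta=n^{-\omega}$. With probability at least $1-n^{-\omega}$,
\[
\mu_\star\le 4M^2\log\bigl(4m n^{\omega}\bigr)=4M^2\bigl(\log 4+\log n+\log L+\omega\log n\bigr).
\]
Since $L\le C(M,\omega)\log n$, we have $\log L\lesssim_{M,\omega}\log(2n)$, and the whole expression is at most $C_{M,\omega}\log(2n)$.

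The only step needing care is this bookkeeping of $\log m$ against $\log(2n)$, which is exactly where the upper bound on $L$ enters. It also requires handling small $n$ through the $\log(2n)$ normalization, since $\log n$ vanishes at $n=1$. No genuine obstacle is expected.
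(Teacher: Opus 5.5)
Your proposal is correct and follows the paper's route. You bound $|\bm a_{\ell k}^*\bm x_\star|^2$ by $\mu_\star$ from \cref{prob:lem-coherence} with $\eta=n^{-\omega}$, use Fourier Parseval and $\bm D_\ell^*\bm D_\ell\preceq M^2\bm I$ to get $\frac1m\mathcal A^*\mathcal A(\bm X_\star)\preceq M^2\mu_\star\bm I$, and absorb $\log(4n^{\omega+1}L)$ into $C_{M,\omega}\log(2n)$ using $L\lesssim\log n$. The only difference, which is cosmetic, is that you work with the quadratic form $\langle\bm u,\cdot\,\bm u\rangle$ rather than the PSD ordering.
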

\begin{proof}
Set $\eta=n^{-\omega}$. On the coherent event in
\Cref{prob:lem-coherence}, using
$\sum_{k=0}^{n-1}\bm f_k\bm f_k^*=n\bm I$, we have
\begin{align}
\frac1m\mathcal A^*\mathcal A(\bm X_\star)
&=
\frac1{nL}\sum_{\ell=1}^L\sum_{k=0}^{n-1}
|\bm a_{\ell k}^*\bm x_\star|^2
\bm a_{\ell k}\bm a_{\ell k}^*
\notag\\
&\preceq
\mu_\star\frac1{nL}
\sum_{\ell=1}^L\sum_{k=0}^{n-1}
\bm a_{\ell k}\bm a_{\ell k}^*
\notag\\
&=
\mu_\star\frac1L\sum_{\ell=1}^L
\bm D_\ell^*\bm D_\ell
\preceq
M^2\mustar\,\bm I
\notag\\
&=
4M^4\log\!\left(4n^{\omega+1}L\right)\bm I.
\label{eq:coherence-spectral-upper}
\end{align}
If $L=O(\log n)$, then, with $C_{M, \omega}:=C_M(\omega+1)$, the last quantity is bounded by
\[
C_{M,\omega}\log(2n)\,\bm I.
\]
The coherence event has probability at least
$1-\eta=1-n^{-\omega}$.
\end{proof}

\subsection{Noisy Error Bound and the Benign Noiseless Landscape}
In this subsection, we combine \cref{lem:critical-point-residual} with inputs from Theorem~\ref{thm:stability} and Proposition~\ref{prop:fixed-target-spectral-upper}, as indicated
by \cref{eq:width-requirement}, to determine the required factor width in \cref{cor:benign-landscape-noisy}.
\begin{proof}[Proof of \Cref{cor:benign-landscape-noisy}]
Let \(\bm V\) be a second-order critical point of
\eqref{eq:factorized-objective}, and set
\(\bm H=\bm V\bm V^*-\bm X_\star\). Multiplying
\eqref{eq:critical-point-residual} by \(m=nL\) gives, for every
\(\bm q\in\mathbb C^r\),
\begin{equation}
  \|\mathcal A(\bm H)\|_2^2
  \leq \langle\bm e,\mathcal A(\bm H)\rangle
  +\frac{2}{r+2}\|\mathcal A^*(\bm y)\|_{\mathrm{op}}
    \|\bm x_\star-\bm V\bm q\|_2^2.
\end{equation}

Choose \(\bm q\) to minimize \(\|\bm x_\star-\bm V\bm q\|_2\).
Then \(\bm x_\star-\bm V\bm q\) is orthogonal to the column space of
\(\bm V\). Since \(\|\bm x_\star\|_2=1\), projection onto its orthogonal
complement gives
\begin{equation}
  \|\bm x_\star-\bm V\bm q\|_2
  =\|\proj{\bm range(V)^\perp}\bm X_\star\|_F
  \leq\|\bm H\|_F.
  \label{eq:tangent_and_difference}
\end{equation}
On the event in \cref{thm:stability},
\begin{equation}
  \|\mathcal A(\bm H)\|_2^2
  \geq\frac{mc}{\log^4(2n)}\|\bm H\|_F^2.
\end{equation}
Consequently, for \(\tau\) defined in
\cref{cor:benign-landscape-noisy},
\begin{equation}
  \frac{2}{r+2}\|\mathcal A^*(\bm y)\|_{\mathrm{op}}
    \|\bm x_\star-\bm V\bm q\|_2^2
  \leq\frac{\tau+2}{r+2}\|\mathcal A(\bm H)\|_2^2.
  \label{eq:second-term-smaller-than-zero}
\end{equation}
Substitution and Cauchy--Schwarz in the measurement space yield
\begin{equation}
  \frac{r-\tau}{r+2}\|\mathcal A(\bm H)\|_2^2
  \leq\langle\bm e,\mathcal A(\bm H)\rangle
  \leq\|\bm e\|_2\|\mathcal A(\bm H)\|_2.
  \label{eq:rearranged-critical-condition}
\end{equation}
If \(\mathcal A(\bm H)=\bm 0\), the lower isometry gives \(\bm H=\bm 0\).
Otherwise, canceling one factor of \(\|\mathcal A(\bm H)\|_2\) and applying
the lower isometry again gives
\begin{equation}
  \|\bm H\|_F
  \leq\frac{r+2}{r-\tau}
  \frac{\log^2(2n)}{\sqrt{mc}}\,\|\bm e\|_2,
  \label{eq:error-related-measurement-bound}
\end{equation}
which proves \eqref{eq:noisy-nonconvex-errorbound}.

It remains to specialize the width condition to noiseless measurements. If
\(\bm e=\bm 0\), then \(\bm y=\mathcal A(\bm X_\star)\), and
\cref{prop:fixed-target-spectral-upper} gives
\begin{equation}
  \|\mathcal A^*(\bm y)\|_{\mathrm{op}} = \|\mathcal A^*\mathcal{A}(\bm X_\star)\|_{\mathrm{op}}
  \leq mC_{M,\omega}\log(2n).
\end{equation}
Note that the event of \cref{thm:stability} is included in this event, as the coherence upper bound from \cref{prob:lem-coherence} is invoked in its proof. 
Hence,
\begin{equation}
  \tau
  \leq \frac{2C_{M, \omega}}{c}\log^5(2n)-2
  \leq \frac{C_{M,\omega}}{c}\log^5(2n).
\end{equation}
Thus \eqref{eq:landscape-width} ensures \(r>\max\{\tau,0\}\); setting
\(\bm e=\bm 0\) in \eqref{eq:noisy-nonconvex-errorbound} gives
\(\bm V\bm V^*=\bm X_\star\). By the inclusion above, the intersection of
the stability and spectral events is the stability event, whose complement has
probability at most \(Cn^{-\omega}\) by \eqref{cert:eq:stability-probability}.
\end{proof}

\section{Discussion and Conclusion}
\label{sec:conclusion}

We established stability results at near-optimal scaling for the CDP model at the optimal-order
mask complexity \(L\asymp\log n\). 
For any fixed signal chosen independently of the masks, the lifted measurement
operator satisfies a uniform lower isometry over the PSD secant set. This
stability ensures that PhaseLift-type convex programs achieve stable recovery
with error of order
\(\log^2(2n)\|\bm e\|_2/\sqrt{nL}\). Besides, in the
noiseless case, the nonconvex loss has a benign landscape under the mild
overparameterization \(r=O(\log^5(2n))\).
The key ingredient is a uniform operator-norm bound over row
subsets of the CDP measurement matrix, whose result and proof technique may
be useful in other analyses of the CDP or more structured measurement
models.

Several questions remain open. First, are the logarithmic factors intrinsic,
or can one obtain a dimension-independent lower isometry constant? Such a
constant would sharpen the stable recovery bound and improve the current
logarithmic scalings. Second, does the landscape remain benign at \(r=1\), as
in the Gaussian case~\cite{cai2023nearly}? Finally, the benign landscape result
does not by itself provide an iteration-complexity guarantee. Connecting this
geometric result to quantitative convergence guarantees for practical
phase-retrieval algorithms, such as Wirtinger flow, truncated amplitude flow,
and mirror descent, would complement existing analyses under the CDP
model~\cite{li2022sampling,li2025taf,godeme2023provable}.

\bibliographystyle{IEEEtran}
\bibliography{references}

\begin{thebibliography}{10}
\providecommand{\url}[1]{#1}
\csname url@samestyle\endcsname
\providecommand{\newblock}{\relax}
\providecommand{\bibinfo}[2]{#2}
\providecommand{\BIBentrySTDinterwordspacing}{\spaceskip=0pt\relax}
\providecommand{\BIBentryALTinterwordstretchfactor}{4}
\providecommand{\BIBentryALTinterwordspacing}{\spaceskip=\fontdimen2\font plus
\BIBentryALTinterwordstretchfactor\fontdimen3\font minus
  \fontdimen4\font\relax}
\providecommand{\BIBforeignlanguage}[2]{{%
\expandafter\ifx\csname l@#1\endcsname\relax
\typeout{** WARNING: IEEEtran.bst: No hyphenation pattern has been}%
\typeout{** loaded for the language `#1'. Using the pattern for}%
\typeout{** the default language instead.}%
\else
\language=\csname l@#1\endcsname
\fi
#2}}
\providecommand{\BIBdecl}{\relax}
\BIBdecl

\bibitem{beinert2015ambiguities}
R.~Beinert and G.~Plonka, ``Ambiguities in one-dimensional discrete phase
  retrieval from {Fourier} magnitudes,'' \emph{J. Fourier Anal. Appl.},
  vol.~21, no.~6, pp. 1169--1198, 2015.

\bibitem{bendory2017fourier}
T.~Bendory, R.~Beinert, and Y.~C. Eldar, ``{Fourier} phase retrieval:
  Uniqueness and algorithms,'' in \emph{Compressed Sensing and Its
  Applications}.\hskip 1em plus 0.5em minus 0.4em\relax Cham: Birkh{\"a}user,
  2017, pp. 55--91.

\bibitem{dainty1987astronomy}
J.~C. Dainty and J.~R. Fienup, ``Phase retrieval and image reconstruction for
  astronomy,'' in \emph{Image Recovery: Theory and Application}, H.~Stark,
  Ed.\hskip 1em plus 0.5em minus 0.4em\relax Academic Press, 1987, ch.~7, pp.
  231--275.

\bibitem{harrison1993phase}
R.~W. Harrison, ``Phase problem in crystallography,'' \emph{J. Opt. Soc. Amer.
  A}, vol.~10, no.~5, pp. 1046--1055, 1993.

\bibitem{millane1990phase}
R.~P. Millane, ``Phase retrieval in crystallography and optics,'' \emph{J. Opt.
  Soc. Amer. A}, vol.~7, no.~3, pp. 394--411, 1990.

\bibitem{shechtman2015overview}
Y.~Shechtman, Y.~C. Eldar, O.~Cohen, H.~N. Chapman, J.~Miao, and M.~Segev,
  ``Phase retrieval with application to optical imaging: A contemporary
  overview,'' \emph{IEEE Signal Process. Mag.}, vol.~32, no.~3, pp. 87--109,
  2015.

\bibitem{jaganathan2016overview}
K.~Jaganathan, Y.~C. Eldar, and B.~Hassibi, ``Phase retrieval: An overview of
  recent developments,'' in \emph{Optical Compressive Imaging}, A.~Stern,
  Ed.\hskip 1em plus 0.5em minus 0.4em\relax CRC Press, 2016, pp. 264--292.

\bibitem{candes2013phaselift}
E.~J. Cand{\`e}s, T.~Strohmer, and V.~Voroninski, ``{PhaseLift}: Exact and
  stable signal recovery from magnitude measurements via convex programming,''
  \emph{Commun. Pure Appl. Math.}, vol.~66, no.~8, pp. 1241--1274, 2013.

\bibitem{candes2014quadratic}
E.~J. Cand{\`e}s and X.~Li, ``Solving quadratic equations via {PhaseLift} when
  there are about as many equations as unknowns,'' \emph{Found. Comput. Math.},
  vol.~14, no.~5, pp. 1017--1026, 2014.

\bibitem{candes2015wirtinger}
E.~J. Cand{\`e}s, X.~Li, and M.~Soltanolkotabi, ``Phase retrieval via
  {Wirtinger Flow}: Theory and algorithms,'' \emph{IEEE Trans. Inf. Theory},
  vol.~61, no.~4, pp. 1985--2007, 2015.

\bibitem{chen2017solving}
Y.~Chen and E.~J. Cand{\`e}s, ``Solving random quadratic systems of equations
  is nearly as easy as solving linear systems,'' \emph{Commun. Pure Appl.
  Math.}, vol.~70, no.~5, pp. 822--883, 2017.

\bibitem{chen2019gradient}
Y.~Chen, Y.~Chi, J.~Fan, and C.~Ma, ``Gradient descent with random
  initialization: Fast global convergence for nonconvex phase retrieval,''
  \emph{Math. Program.}, vol. 176, no. 1--2, pp. 5--37, 2019.

\bibitem{duchi2019solving}
J.~C. Duchi and F.~Ruan, ``Solving (most) of a set of quadratic equalities:
  Composite optimization for robust phase retrieval,'' \emph{Inf. Inference},
  vol.~8, no.~3, pp. 471--529, 2019.

\bibitem{godeme2023provable}
J.-J. Godeme, J.~Fadili, X.~Buet, M.~Zerrad, M.~Lequime, and C.~Amra,
  ``Provable phase retrieval with mirror descent,'' \emph{SIAM J. Imaging
  Sci.}, vol.~16, no.~3, pp. 1106--1141, 2023.

\bibitem{tan2023online}
Y.~S. Tan and R.~Vershynin, ``Online stochastic gradient descent with arbitrary
  initialization solves non-smooth, non-convex phase retrieval,'' \emph{J.
  Mach. Learn. Res.}, vol.~24, no.~58, pp. 1--47, 2023.

\bibitem{wang2018truncated}
G.~Wang, G.~B. Giannakis, and Y.~C. Eldar, ``Solving systems of random
  quadratic equations via truncated amplitude flow,'' \emph{IEEE Trans. Inf.
  Theory}, vol.~64, no.~2, pp. 773--794, 2018.

\bibitem{zhang2017nonconvex}
H.~Zhang, Y.~Zhou, Y.~Liang, and Y.~Chi, ``A nonconvex approach for phase
  retrieval: Reshaped {Wirtinger Flow} and incremental algorithms,'' \emph{J.
  Mach. Learn. Res.}, vol.~18, no. 141, pp. 1--35, 2017.

\bibitem{sun2018geometric}
J.~Sun, Q.~Qu, and J.~Wright, ``A geometric analysis of phase retrieval,''
  \emph{Found. Comput. Math.}, vol.~18, no.~5, pp. 1131--1198, 2018.

\bibitem{cai2023nearly}
J.-F. Cai, M.~Huang, D.~Li, and Y.~Wang, ``Nearly optimal bounds for the global
  geometric landscape of phase retrieval,'' \emph{Inverse Problems}, vol.~39,
  no.~7, p. 075011, 2023.

\bibitem{liu2024local}
K.~Liu, Z.~Wang, and L.~Wu, ``The local landscape of phase retrieval under
  limited samples,'' \emph{IEEE Trans. Inf. Theory}, vol.~70, no.~12, pp.
  9012--9035, 2024.

\bibitem{mcrae2026benign}
A.~D. McRae, ``Phase retrieval and matrix sensing via benign and
  overparametrized nonconvex optimization,'' \emph{IEEE Trans. Inf. Theory},
  vol.~72, no.~6, pp. 4203--4220, 2026.

\bibitem{krahmer2020complex}
F.~Krahmer and D.~St{\"o}ger, ``Complex phase retrieval from subgaussian
  measurements,'' \emph{J. Fourier Anal. Appl.}, vol.~26, 2020, art. no. 89.

\bibitem{huangli2026heavy}
G.~Huang and S.~Li, ``Low-rank matrix recovery via heavy-tailed quadratic
  sampling,'' 2026, arXiv:2607.08671.

\bibitem{loewen2018diffraction}
E.~G. Loewen and E.~Popov, \emph{Diffraction Gratings and Applications}.\hskip
  1em plus 0.5em minus 0.4em\relax Boca Raton, FL: CRC Press, 2018.

\bibitem{gerchberg1972practical}
R.~W. Gerchberg and W.~O. Saxton, ``A practical algorithm for the determination
  of phase from image and diffraction plane pictures,'' \emph{Optik}, vol.~35,
  no.~2, pp. 237--246, 1972.

\bibitem{fienup1982phase}
J.~R. Fienup, ``Phase retrieval algorithms: A comparison,'' \emph{Appl. Opt.},
  vol.~21, no.~15, pp. 2758--2769, 1982.

\bibitem{fannjiang2020numerics}
A.~Fannjiang and T.~Strohmer, ``The numerics of phase retrieval,'' \emph{Acta
  Numerica}, vol.~29, pp. 125--228, 2020.

\bibitem{fannjiang2012phase}
A.~Fannjiang and W.~Liao, ``Phase retrieval with random phase illumination,''
  \emph{J. Opt. Soc. Amer. A}, vol.~29, no.~9, pp. 1847--1859, 2012.

\bibitem{candes2015cdp}
E.~J. Cand{\`e}s, X.~Li, and M.~Soltanolkotabi, ``Phase retrieval from coded
  diffraction patterns,'' \emph{Appl. Comput. Harmon. Anal.}, vol.~39, no.~2,
  pp. 277--299, 2015.

\bibitem{gross2017improved}
D.~Gross, F.~Krahmer, and R.~Kueng, ``Improved recovery guarantees for phase
  retrieval from coded diffraction patterns,'' \emph{Appl. Comput. Harmon.
  Anal.}, vol.~42, no.~1, pp. 37--64, 2017.

\bibitem{huangli2026optimal}
G.~Huang and S.~Li, ``{PhaseLift} for coded diffraction patterns: Optimal
  sampling rate,'' 2026, arXiv:2608.02450.

\bibitem{li2021incremental}
J.~Li, J.-F. Cai, and H.~Zhao, ``Scalable incremental nonconvex optimization
  approach for phase retrieval,'' \emph{J. Sci. Comput.}, vol.~87, no.~2,
  p.~43, 2021.

\bibitem{lili2021fourier}
H.~Li and S.~Li, ``Phase retrieval from {Fourier} measurements with masks,''
  \emph{Inverse Probl. Imaging}, vol.~15, no.~5, pp. 1051--1075, 2021.

\bibitem{huangwen2026phaselift}
M.~Huang, J.~Wen, and R.~Zhang, ``Recovery performance of {PhaseLift} for phase
  retrieval from coded diffraction patterns,'' \emph{Inverse Problems},
  vol.~42, p. 045021, 2026.

\bibitem{soltanolkotabi2014algorithms}
M.~Soltanolkotabi, ``Algorithms and theory for clustering and nonconvex
  quadratic programming,'' Ph.D. dissertation, Stanford University, Stanford,
  CA, USA, 2014.

\bibitem{krahmer2014suprema}
F.~Krahmer, S.~Mendelson, and H.~Rauhut, ``Suprema of chaos processes and the
  restricted isometry property,'' \emph{Commun. Pure Appl. Math.}, vol.~67,
  no.~11, pp. 1877--1904, 2014.

\bibitem{recht2010guaranteed}
B.~Recht, M.~Fazel, and P.~A. Parrilo, ``Guaranteed minimum-rank solutions of
  linear matrix equations via nuclear norm minimization,'' \emph{SIAM Review},
  vol.~52, no.~3, pp. 471--501, 2010.

\bibitem{candes2011tight}
E.~J. Cand{\`e}s and Y.~Plan, ``Tight oracle inequalities for low-rank matrix
  recovery from a minimal number of noisy random measurements,'' \emph{IEEE
  Trans. Inf. Theory}, vol.~57, no.~4, pp. 2342--2359, 2011.

\bibitem{jaganathan2015masks}
K.~Jaganathan, Y.~C. Eldar, and B.~Hassibi, ``Phase retrieval with masks using
  convex optimization,'' in \emph{2015 IEEE International Symposium on
  Information Theory (ISIT)}, 2015, pp. 1655--1659.

\bibitem{zhang2021overparameterized}
R.~Y. Zhang, ``Sharp global guarantees for nonconvex low-rank matrix recovery
  in the overparameterized regime,'' 2021.

\bibitem{ma2023overparametrization}
Z.~Ma, I.~Molybog, J.~Lavaei, and S.~Sojoudi, ``Over-parametrization via
  lifting for low-rank matrix sensing: Conversion of spurious solutions to
  strict saddle points,'' in \emph{Proc. 40th Int. Conf. Machine Learning},
  ser. Proc. Mach. Learn. Res., vol. 202, 2023, pp. 23\,373--23\,387.

\bibitem{mcrae2024synchronization}
A.~D. McRae and N.~Boumal, ``Benign landscapes of low-dimensional relaxations
  for orthogonal synchronization on general graphs,'' \emph{SIAM J. Optim.},
  vol.~34, no.~2, pp. 1427--1454, 2024.

\bibitem{nguyen2017loss}
Q.~Nguyen and M.~Hein, ``The loss surface of deep and wide neural networks,''
  in \emph{Proc. 34th Int. Conf. Machine Learning}, ser. Proc. Mach. Learn.
  Res., vol.~70, 2017, pp. 2603--2612.

\bibitem{du2018power}
S.~Du and J.~Lee, ``On the power of over-parametrization in neural networks
  with quadratic activation,'' in \emph{Proc. 35th Int. Conf. Machine
  Learning}, ser. Proc. Mach. Learn. Res., vol.~80, 2018, pp. 1329--1338.

\bibitem{mcrae2026amplitude}
A.~D. McRae, ``Phase retrieval via overparametrized nonconvex optimization:
  Nonsmooth amplitude loss landscapes,'' \emph{IEEE Trans. Inf. Theory},
  vol.~72, no.~9, pp. 7049--7068, 2026.

\bibitem{laska2011democracy}
J.~N. Laska, P.~T. Boufounos, M.~A. Davenport, and R.~G. Baraniuk, ``Democracy
  in action: Quantization, saturation, and compressive sensing,'' \emph{Appl.
  Comput. Harmon. Anal.}, vol.~31, no.~3, pp. 429--443, 2011.

\bibitem{han2017robustness}
B.~Han and Z.~Xu, ``Robustness properties of dimensionality reduction with
  gaussian random matrices,'' \emph{Sci. China Math.}, vol.~60, pp. 1753--1778,
  2017.

\bibitem{dirksen2021nonGaussian}
S.~Dirksen and S.~Mendelson, ``Non-gaussian hyperplane tessellations and robust
  one-bit compressed sensing,'' \emph{J. Eur. Math. Soc.}, vol.~23, no.~9, pp.
  2913--2947, 2021.

\bibitem{chenYuanOneBitPhaseRetrieval}
J.~Chen and M.~Yuan, ``One-bit phase retrieval: Optimal rates and efficient
  algorithms,'' \emph{IEEE Trans. Inf. Theory}, vol.~72, no.~7, pp. 5251--5292,
  2026.

\bibitem{tropp2012user}
J.~A. Tropp, ``User-friendly tail bounds for sums of random matrices,''
  \emph{Found. Comput. Math.}, vol.~12, no.~4, pp. 389--434, 2012.

\bibitem{li2022sampling}
H.~Li, S.~Li, and Y.~Xia, ``Sampling complexity on phase retrieval from masked
  {Fourier} measurements via {Wirtinger Flow},'' \emph{Inverse Problems},
  vol.~38, no.~10, p. 105004, 2022.

\bibitem{li2025taf}
H.~Li and J.~Li, ``Truncated amplitude flow with coded diffraction patterns,''
  \emph{Inverse Problems}, vol.~41, no.~1, p. 015002, 2025.

\end{thebibliography}
\end{document}